\documentclass[a4paper,twocolumn,10pt,accepted=2022-04-05]{quantumarticle}
\pdfoutput=1
\usepackage[utf8]{inputenc}
\usepackage[english]{babel}
\usepackage[T1]{fontenc}
\usepackage[numbers]{natbib}
\usepackage{amsthm,amsmath,amssymb}
\usepackage{hyperref}
\usepackage{cleveref}

\usepackage{xspace}

\usepackage{booktabs}


\usepackage{tikz}
\usepackage{pgfplots}
\usepgfplotslibrary{statistics}
\usepackage[braket, qm]{qcircuit}

\usepackage[multiple]{footmisc}

\usepackage{todonotes}

\newcommand{\co}{\colon\thinspace}
\newcommand{\eps}{\varepsilon}
\newcommand{\<}{\langle}
\renewcommand{\>}{\rangle}

\DeclareMathOperator{\tr}{tr}
\DeclareMathOperator{\vol}{vol}

\newcommand{\A}{\mathfrak{A}}

\newcommand{\C}{\mathcal{C}}
\newcommand{\CX}{\mathit{CX}}
\newcommand{\CAN}{\mathit{CAN}}
\newcommand{\Haar}{\mathrm{Haar}}
\newcommand{\I}{\mathit{I}}
\newcommand{\lrs}{\texttt{lrs}\xspace}

\newcommand{\PU}{\mathit{PU}}
\newcommand{\pu}{\mathfrak{pu}}
\newcommand{\R}{\mathbb{R}}
\renewcommand{\S}{\mathcal{S}}

\newcommand{\su}{\mathfrak{su}}
\newcommand{\SWAP}{\mathit{SWAP}}
\renewcommand{\th}{\textsuperscript{th}\xspace}
\newcommand{\XX}{\mathit{XX}}
\newcommand{\XY}{\mathit{XY}}
\newcommand{\X}{\mathit{X}}
\newcommand{\YX}{\mathit{YX}}
\newcommand{\YY}{\mathit{YY}}
\newcommand{\Y}{\mathit{Y}}
\newcommand{\ZZ}{\mathit{ZZ}}
\newcommand{\IZ}{\mathit{IZ}}
\newcommand{\ZI}{\mathit{ZI}}
\newcommand{\Z}{\mathit{Z}}

\theoremstyle{plain}
\newtheorem{theorem}{Theorem}[section]
\newtheorem{lemma}[theorem]{Lemma}
\newtheorem{corollary}[theorem]{Corollary}
\theoremstyle{definition}
\newtheorem{definition}[theorem]{Definition}

\newtheorem{procedure}[theorem]{Procedure}
\newtheorem*{question*}{Question}

\newtheorem*{importantremark*}{Important Remark}
\theoremstyle{remark}
\newtheorem{remark}[theorem]{Remark}
\newtheorem{example}[theorem]{Example}

\begin{document}

\title{Optimal synthesis into fixed $\XX$ interactions}
\author{Eric C.\ Peterson}
\email{peterson.eric.c@gmail.com}
\affiliation{IBM Quantum, San Jose, CA, USA}
\author{Lev S.\ Bishop}
\email{lsbishop@us.ibm.com}
\affiliation{IBM Quantum, Yorktown Heights, NY, USA}
\author{Ali Javadi-Abhari}
\email{Ali.Javadi@ibm.com}
\affiliation{IBM Quantum, Yorktown Heights, NY, USA}

\maketitle

\begin{abstract}
    We describe an optimal procedure, as well as its efficient software implementation, for exact and approximate synthesis of two-qubit unitary operations into any prescribed discrete family of $\XX$--type interactions and local gates.
    This arises from the analysis and manipulation of certain polyhedral subsets of the space of canonical gates.
    Using this, we analyze which small sets of $\XX$--type interactions cause the greatest improvement in expected infidelity under experimentally-motivated error models.
    For the exact circuit synthesis of Haar-randomly selected two-qubit operations, we find an improvement in estimated infidelity by $\approx$31.4\% when including $\CX^{1/2}$ and $\CX^{1/3}$ alongside the standard gate $\CX$, near to the optimal limit of $\approx$36.9\% obtained by including all fractional applications $\CX^\alpha$, $\alpha \in [0, 1]$.
\end{abstract}


\section{Introduction}

In this paper, we describe an optimal synthesis routine for two-qubit unitary operations which targets any discrete family of two-qubit gates, each locally equivalent to some $\exp(-ia\XX)$.
We refer to such gates as being of \emph{$\XX$--type} and note that this class includes all controlled unitaries.
Gate sets of $\XX$--type are common on contemporary platforms: the gate $\CX$ is an example, and synthesis routines for it have long been known to give rise to algorithmic schemes for universal quantum computation, making it an attractive target for device engineers.
The physical processes which give rise to the operation $\CX$ can typically be truncated to produce ``fractional applications'' $\CX^{\alpha}$ for $0 \le \alpha \le 1$, each of which is \emph{also} of $\XX$--type, giving rise to an infinite family of further examples.%
\footnote{
Up to local equivalence, this family is an exhaustive set of examples.
}
Such fractional applications can be found on devices based on superconducting qubits (e.g., IBM's), as well as on those based on ion traps.
Though not required for universal computation, the availability of these ``overcomplete'' basis gates has the potential to yield more efficient synthesized circuits, particularly if the error magnitude of $\CX^{\alpha}$ correlates with $\alpha$: while the universally programmable $\CX$ circuit invokes $\CX$ three times, the universally programmable fractional $\CX$ circuit invokes $\CX^\alpha$, $\CX^\beta$, and $\CX^\gamma$ with $\<\alpha + \beta + \gamma\> = 3/2$.%
\footnote{%
See \cite{BMBRAH} for numerical justification or \cite{Mehta} for a proof.
}

In practice, however, these parametric families are difficult to operate.
The relationship between the degree of physical process truncation and the value $\alpha$ is often nonlinear and prone to imperfect measurement, and constraints in the steering electronics (e.g., waveform sample rate) can make truncations unavailable below some threshold, so that wholesale use of these families may be prohibited on realistic hardware.
Still, precision can be guaranteed for any particular value of $\alpha$, which gives rise to the following question:

\begin{question*}
Given a fixed ``calibration budget'' which permits the tuning of $n$ fractional operations, which set of values $\alpha_1$, \ldots, $\alpha_n$ maximizes (average-case) device performance?
How does one efficiently find expressions for generic two-qubit unitaries in terms of these operations?
How does one simultaneously guarantee the optimality of such expressions, as measured against device performance?
\end{question*}

We answer this question fully.
Our fundamental results are an efficient test for when a two-qubit unitary operation admits expression as a circuit using any particular sequence of $\XX$--interaction strengths $(\alpha_1, \ldots, \alpha_n)$ with local gates interleaved (\Cref{MainGlobalTheorem}), an efficicent synthesis routine for manufacturing such circuits (\Cref{EffectiveSynthesis}, \Cref{MainLocalTheorem}), and an efficient routine for producing the best approximation (in average gate fidelity) within the set of such circuits (\Cref{SynthesisWrapper}).%
\footnote{%
To our knowledge, this is the first optimal synthesis algorithm which targets a ``heterogeneous'' set of multiqubit interactions.
}
These tools combine to give an optimal synthesis scheme for reasonably behaved cost functions (e.g., average gate infidelity).
An implementation of our technique can be found in Qiskit's \texttt{quantum\_info} subpackage as the class \texttt{XXDecomposer}~\cite{Qiskit}, where it can be verified that it outperforms blind numerical search in both wall time and output quality (cf.~\Cref{SynthesisWallTimeFig}).
We leverage these results to explore the design space of gate set extensions where we constrain $\alpha_1, \ldots, \alpha_n$ to be drawn from some small, fixed set of pretuned angles.
For experimentally realistic error models,%
\footnote{
In practice, we find that these models amount to a \emph{weighted} count of circuit elements, where the weights depend linearly on the exponents $\alpha_j$.
}
our main findings are first that including first $\CX^{1/2}$ and then $\CX^{1/3}$ give significant improvement over $\CX$ alone at several common tasks%
\footnote{
Specifically, we examine synthesis of random unitaries, as encountered during whole-circuit resynthesis, and reported on in \Cref{NiceAnglesTable}; and we exaxmine synthesis of certain structured operators like QFT, as encountered in peephole optimization of highly structured circuits, and reported on in \Cref{QFTStatsTable}.
},
and second that these two gates capture almost all of the benefit of allowing $\alpha_1, \ldots, \alpha_n$ to be drawn without constraint (cf.\ \Cref{OptimaTable}).
Finally, we note that some of our proofs rely on using a computer algebra system to manipulate polytopes, and we have released both this framework and the proof software under the Qiskit umbrella of packages as \href{https://github.com/Qiskit/monodromy}{\texttt{monodromy}}~\cite{monodromy}.

\begin{figure}
    \centering
    \scalebox{0.6}{$U = \left(\begin{array}{cccc}
     0.098+0.121j & -0.26 -0.391j & -0.443-0.322j & -0.662+0.133j \\
     0.748-0.132j &  0.372-0.084j & -0.432-0.027j &  0.261-0.148j \\
    -0.137-0.365j & -0.453+0.334j & -0.28 -0.547j &  0.296-0.257j \\
    -0.474-0.147j &  0.44 -0.352j & -0.265-0.255j &  0.328+0.439j
    \end{array}\right)$}
    \vspace{\baselineskip}
    
    \scalebox{0.7}{
    \Qcircuit @C=1.0em @R=0.2em @!R { \\
    	 	& \gate{L} & \ctrl{1} & \gate{L} & \ctrl{1} & \gate{L} & \ctrl{1} & \gate{L} & \qw \\ 
    	 	& \gate{L} & \targ & \gate{L} & \targ & \gate{L} & \targ & \gate{L} & \qw \\ 
    \\ }}
    \scalebox{0.7}{
    \Qcircuit @C=1.0em @R=0.2em @!R { \\
    	 	& \gate{L} & \multigate{1}{\XX_{\frac{\pi}{8}}} & \gate{L} & \multigate{1}{\XX_{\frac{\pi}{8}}} & \gate{L} & \multigate{1}{\XX_{\frac{\pi}{8}}} & \gate{L} & \qw \\ 
    	 	& \gate{L} & \ghost{\XX_{\frac{\pi}{4}}} & \gate{L} & \ghost{\XX_{\frac{\pi}{8}}} & \gate{L} & \ghost{\XX_{\frac{\pi}{8}}} & \gate{L} & \qw \\ 
    \\ }}
    \scalebox{0.7}{
    \Qcircuit @C=1.0em @R=0.2em @!R { \\
    	 	& \gate{L} & \multigate{1}{\XX_{\frac{\pi}{12}}} & \gate{L} & \multigate{1}{\XX_{\frac{\pi}{12}}} & \gate{L} & \multigate{1}{\XX_{\frac{\pi}{12}}} & \gate{L} & \multigate{1}{\XX_{\frac{\pi}{12}}} & \gate{L} & \qw \\ 
    	 	& \gate{L} & \ghost{\XX_{\frac{\pi}{6}}} & \gate{L} & \ghost{\XX_{\frac{\pi}{12}}} & \gate{L} & \ghost{\XX_{\frac{\pi}{12}}} & \gate{L} & \ghost{\XX_{\frac{\pi}{12}}} & \gate{L} & \qw \\ 
    \\ }}
    \scalebox{0.7}{
    \Qcircuit @C=1.0em @R=0.2em @!R { \\
    	 	& \gate{L} & \multigate{1}{\XX_{\frac{\pi}{12}}} & \gate{L} & \multigate{1}{\XX_{\frac{\pi}{12}}} & \gate{L} & \multigate{1}{\XX_{\frac{\pi}{8}}} & \gate{L} & \qw \\ 
    	 	& \gate{L} & \ghost{\XX_{\frac{\pi}{12}}} & \gate{L} & \ghost{\XX_{\frac{\pi}{12}}} & \gate{L} & \ghost{\XX_{\frac{\pi}{8}}} & \gate{L} & \qw \\ 
    \\ }}
    
    \vspace{\baselineskip}
    \begin{tabular}{@{}lrrr@{}} \toprule
    & \multicolumn{3}{c}{Relative infidelity} \\ \cmidrule(r){2-4}
    Gateset description & \cite{ZVSWControlled} & \cite{LMMB} & Ours \\ \midrule
    $\{\XX_{\frac{\pi}{4}}\}$ (baseline) & 133\% & \textbf{100\%} & \textbf{100\%} \\
    $\{\XX_{\frac{\pi}{8}}\}$ & 67\% & 67\% & \textbf{50\%} \\
    $\{\XX_{\frac{\pi}{12}}\}$ & 56\% & 56\% & \textbf{44\%} \\
    $\{\XX_{\frac{\pi}{4}}, \XX_{\frac{\pi}{8}}, \XX_{\frac{\pi}{12}}\}$ & n/a & n/a & \textbf{39\%} \\ \bottomrule
    \end{tabular}
    \caption{%
    Four syntheses of a two-qubit operator $U \in SU(4)$ with canonical coordinate $(0.968, 0.273, 0.038)$.
    \textbf{(1)}
    An exact, optimal synthesis into a triple of $\CX$ gates.
    \textbf{(2)} 
    An exact, optimal synthesis into a triple of $\XX_{\frac{\pi}{8}}$ gates.
    \textbf{(3)} 
    An exact, optimal synthesis into four $\XX_{\frac{\pi}{12}}$ gates.
    \textbf{(4)}
    An exact, optimal synthesis into a mixed set of gates.
    Finally, we include the relative infidelity costs of these syntheses in an error model where $\XX$ gate infidelity is linearly related to the parameter with a small affine offset.
    \textbf{Relative infidelity:}
    We assume that $\XX$ gate infidelity is linearly related to the parameter with a small affine offset and that infidelity is approximately additive over that of its gates.
    We refer to the infidelity of circuit (1) as the ``baseline'', and we report as ``relative infidelity'' the percentage of this value achieved by other synthesis strategies.
    \emph{See \Cref{SynthesisWallTimeFig} and \Cref{GatesetOptSection} for similar statistics where $U$ is allowed to range.}
    }
    \label{ExampleCircuits}
\end{figure}

\begin{figure}
    \centering
    \begin{tabular}{@{}ccccc@{}} \toprule
    & \multicolumn{2}{c}{Gates emitted} \\ \cmidrule(r){2-3}
    $n$ & $\CX$ & $\{\XX_{\frac{\pi}{4}}, \XX_{\frac{\pi}{8}}, \XX_{\frac{\pi}{12}}\}$ & Rel.\ Infidelity \\
    \midrule
    3 & $\CX^{6}$ & $\XX_{\pi/12}^{2} \cdot \XX_{\pi/8}^{2}$ & 27.8\% \\
    4 & $\CX^{12}$ & $\XX_{\pi/12}^{6} \cdot \XX_{\pi/8}^{3}$ & 29.2\% \\
    5 & $\CX^{20}$ & $\XX_{\pi/12}^{12} \cdot \XX_{\pi/8}^{4}$ & 30.0\% \\
    6 & $\CX^{30}$ & $\XX_{\pi/12}^{20} \cdot \XX_{\pi/8}^{5}$ & 30.6\% \\
    7 & $\CX^{42}$ & $\XX_{\pi/12}^{30} \cdot \XX_{\pi/8}^{6}$ & 31.0\% \\
    \bottomrule
    \end{tabular}
    \caption{%
    Qiskit syntheses of QFT circuits over $n$ qubits, targeting a family of qubits supporting either $\S = \{\CX\}$ or $\S = \{\XX_{\frac{\pi}{4}}, \XX_{\frac{\pi}{8}}, \XX_{\frac{\pi}{12}}\}$ and with all-to-all connectivity.
    At right, we include the expected circuit infidelity, reported as a fraction of that of traditional synthesis methods, under the assumptions that $\XX$ gate infidelity is linearly related to the parameter with a small affine offset and that infidelity is approximately additive.
    In the limit of a large qubit count, the expected infidelity of a QFT circuit synthesized to the fractional gate set drops by two-thirds that of the standard $\CX$--based gate set.
    We don't intend circuit synthesis for QFTs to be a ``killer app'', but rather as evidence that these methods are not limited to random inputs.
    }
    \label{QFTStatsTable}
\end{figure}

\subsection*{Related literature}
We give a non-exhaustive survey of existing ideas.

\begin{description}
    \item[\cite{SMB}]
    Shende, Markov, and Bullock showed how to synthesize optimal-depth circuits for $\CX$--based gate sets.
    \item[\cite{CBSNG}]
    Cross et al.\ extended that to synthesize optimal-infidelity circuits for $\CX$--based gate sets.
    \item[\cite{ZVSWControlled,ZYG,YZG}]
    Zhang, Vala, Sastry, and Whaley (see also Zhang, Ye, and Guo) gave a synthesis method for gate sets based on controlled unitary operations.
    Our method constructs circuits out of the same building blocks, but our method for how to arrange those blocks is different.
    This difference in synthesis strategy gives us \emph{optimal} circuits, whereas often\footnote{More exactly, for $\frac{1}{2} (5 + 4 \log(2) - 6 \log(3)) \approx$ 59\% of possible native $\XX$ interactions; cf.\ \cite[Figure 2]{ZVSWControlled}.} they miss by a constant offset.
    \item[\cite{PCS}]
    Peterson et al.\ showed how to detect when a two-qubit unitary operation admits expression using a given circuit type with sufficiently many freely ranging local operations.
    This gives rise to a method for analyzing the optimality of a synthesis strategy, but it does not show how to perform the actual synthesis.
    \item[\cite{ETE}]
    Earnest, Tornow, and Egger have shown how to produce the entire family of $\XX$--type interactions from a particular pulse-level implementation of $\CX$, which then permits the use of a straightforward synthesis method.
    Their method does not extend to other hardware implementations of $\CX$, including the implementation used by the IBM group to achieve quantum volume 64~\cite{JJBLBBCGIKKKKLLMNNNPRSSWWWYZDCG}, which does not lend itself to non-calibrated scaling.
    \item[\cite{HDWKZNSZC}]
    Huang et al.\ have recently shown how to synthesize (optimal) circuits for a gateset containing $\sqrt{\mathit{ISWAP}}$, a particular ``fractional $\mathit{ISWAP}$''.
    Despite surface similarities, our results depart substantially: $\sqrt{\mathit{ISWAP}}$ is not of $\XX$--type; they work with the fixed gate $\sqrt{\mathit{ISWAP}}$ as opposed to an unknown family of fractional applications; and they consider circuits of depth at most $3$.
    \item[\cite{LMMB}]
    Lao et al.\ have used brute-force numerical search to uncover which fractional interactions might be valuable to include in a native gate set put to various specific purposes.
    This has substantial overlap with our discussion of gate set optimization, but it does not solve the synthesis problem: numerical search is both non-optimal and two orders of magnitude slower than our direct synthesis.
\end{description}

\subsection*{The case of one qubit}

To give a sense of our methods and results, let us analyze the analogous problem for one-qubit unitaries: decomposition into a fixed set of fractional $\X$--rotations and unconstrained $\Z$--rotations.
The fixed $\X$--rotation most typically available is $\X_{\frac{\pi}{2}}$, and an $\X_{\frac{\pi}{2}}$--based circuit can be synthesized for a unitary $U$ through ``Euler $\Z\Y\Z$ decomposition''.
Namely, there are angle values $\theta$, $\phi$, and $\lambda$ which satisfy \[U = \Z_\phi \cdot \Y_\theta \cdot \Z_\lambda = \Z_{\phi + \pi} \cdot \X_{\frac{\pi}{2}} \cdot \Z_{\theta + \pi} \cdot \X_{\frac{\pi}{2}} \cdot \Z_\lambda,\] easily calculable by diagonalizing $UU^T$.
Since $U$ can freely range, the right-hand side of this equation gives a universally programmable quantum circuit.
A downside to this circuit is that the operational cost of $U$ is always that of a pair of $\X_{\frac{\pi}{2}}$ gates, even if $U$ itself is a ``small'' rotation of the Bloch sphere.%
\footnote{%
One can set aside special cases when $\theta$ is zero or $\frac{\pi}{2}$, but these are probability-zero events in common measures.
}


For circuits based on other choices of choices of $\X$--rotation angles, such as \[Y_\theta = \Z_{\iota'} \cdot \X_\psi \cdot \Z_\iota \cdot \X_{\psi'} \cdot \Z_{\iota''},\] one perform some mathematical analysis to discern the limited set of synthesizable operations $\Y_\theta$, ultimately arriving at the critical relationship
\[\cos \theta = \cos \psi \cdot \cos \psi' - \cos \iota \cdot \sin \psi \cdot \sin \psi',\]
and the remaining parameters $\iota'$, $\iota''$ can be explicitly determined by inspecting complex phases.
Varying $\iota$, $\iota'$, and $\iota''$, this trigonometric equation admits a solution precisely when $\theta$ satisfies \[|\psi - \psi'| \le \theta \le \pi - \left| \pi - (\psi + \psi') \right|.\]
Let us refer to this interval as $I_{\psi, \psi'}$.
In the same manner, a longer sequence of interactions $\X_{\psi_1}, \ldots, \X_{\psi_n}$ interleaved with $\Z$--rotations gives rise to a corresponding interval $I_{\psi_1, \ldots, \psi_n}$ of achievable values of $\theta$.

Suppose that any given gate $\X_\psi$, with $\psi \in [0, \pi]$, can be made available in an experimental setting with infidelity \[m \cdot \left( \frac{\pi}{2}\right)^{-1} \cdot \psi + b\] for some error model parameters $m$ and $b$, and at a fixed calibration cost per gate.
We seek a small set of gates $\{\X_{\psi_1}, \ldots, \X_{\psi_n}\}$ so that the intervals constructed above cover the possible values of $\theta$ so as to minimize the expected infidelity cost of a given operation.
For instance, \Cref{WeylRegionsFor1Q} shows the relevant intervals for the gate set $\{\X_{\frac{\pi}{2}}, \X_{\frac{\pi}{3}}\}$.
Several aspects of this goal can also be understood with additional nuance:
\begin{description}
    \item[``Expected'':]
    The distribution of operations $U$ to be synthesized will affect the relative importance of the various choices of $\psi$.
    A safe assumption is that $U$ is drawn according to the Haar distribution, in which case the distribution of angle values $\theta$ is given by $p(\theta) d\theta = \frac{1}{2} \sin(\theta) d\theta$.
    \item[``Cost'':]
    In addition to the operational cost of a synthesized circuit (i.e., the cost from gate applications), one can also incorporate a cost stemming from synthesizing some $\theta'$ rather than the requested $\theta$.
    There are then some circumstances where it is profitable to \emph{deliberately} mis-synthesize $\Y_\theta$ as $\Y_{\theta'}$, provided the difference between $\theta$ and $\theta'$ is small and the difference in operational cost between the two circuits is large.
    Average gate infidelity gives a popular embodiment of this idea, where the fidelity of two one-qubit operations is given by the formula \[F_{\mathrm{avg}}(\Y_\theta, \Y_{\theta'}) = \frac{2 + \cos(\theta - \theta')}{3}.\]
    Against this yardstick, the $\theta' \in I_{\psi, \psi'}$ which gives the best approximation to a $\theta \not\in I_{\psi, \psi'}$ occurs at one of the interval endpoints.
    \item[``Given operation'':]
    Rather than synthesizing the operation $U$ requested, the compiler can choose to inject a reversible logic operation $R$ and its inverse $R^{-1}$ into the program, synthesizing the composite $U \cdot R$ and either commuting $R^{-1}$ forward through the circuit or absorbing its effect into software.
    This option can be used to further shape the expected distribution of inputs.
    For single-qubit operations, a typical choice of $R$ is the classical logic gate $\X_\pi$, which has the effect of trading $\X_\theta$ for $\X_{\pi - \theta}$.%
    \footnote{%
    For two-qubit operations, a common choice of $R$ is the classical logic gate $\SWAP$.
    }
\end{description}

\begin{figure}
    \centering
    \begin{tikzpicture}[scale=1.5]
    \draw[<->] (-0.5,0) -- (3.6415,0);
    
    \draw (0pt, 4pt) -- (0pt, -4pt) node[below] {$0$};
    \draw[shift={(1/6*3.14, 0)}, color=black] (0, 4pt) -- (0, -4pt) node[below] {$\frac{1}{6} \pi$};
    \draw[shift={(2/3*3.14, 0)}, color=black] (0, 4pt) -- (0, -4pt) node[below] {$\frac{2}{3} \pi$};
    \draw[shift={(5/6*3.14, 0)}, color=black] (0, 4pt) -- (0, -4pt) node[below] {$\frac{5}{6} \pi$};
    \draw[shift={(3.14, 0)}, color=black] (0, 4pt) -- (0, -4pt) node[below] {$\pi$};
    
    \draw[red,fill=red] (0,0.3) circle (.5ex);
    \draw[red,fill=red] (2/3 * 3.14,0.3) circle (.5ex);
    \draw[-, red] (0, 0.3) node[left] {$I_{\frac{\pi}{3}, \frac{\pi}{3}}$} -- (2/3 * 3.14, 0.3);
    
    \draw[blue,fill=blue] (1/6 * 3.14, 0.6) circle (.5ex);
    \draw[blue,fill=blue] (5/6 * 3.14, 0.6) circle (.5ex);
    \draw[-, blue] (1/6 * 3.14, 0.6) -- (5/6 * 3.14, 0.6) node[right] {$I_{\frac{\pi}{3}, \frac{\pi}{2}}$};
    
    \draw[green!50!black,fill=green!50!black] (0 * 3.14, 0.9) circle (.5ex);
    \draw[green!50!black,fill=green!50!black] (1 * 3.14, 0.9) circle (.5ex);
    \draw[-, green!50!black] (0 * 3.14, 0.9) node[left] {$I_{\frac{\pi}{2}, \frac{\pi}{2}}$} -- (3.14, 0.9);
    
    \draw[green!50!black,fill=green!50!black] (1 * 3.14,0) circle (.5ex);
    \draw[-, thick, green!50!black] (5/6 * 3.14, 0.0) -- (3.14, 0.0);
    \draw[blue,fill=blue] (5/6 * 3.14,0) circle (.5ex);
    \draw[-, thick, blue] (2/3 * 3.14, 0.0) -- (5/6 * 3.14, 0.0) ;
    \draw[-, thick, red] (0, 0.0) -- (2/3 * 3.14, 0.0);
    \draw[red,fill=red] (0,0) circle (.5ex);
    \draw[red,fill=red] (2/3 * 3.14,0) circle (.5ex);
    \end{tikzpicture}
    \caption{%
    The optimal synthesis regions for the one-qubit gate set $\{\X_{\frac{\pi}{2}}, \X_{\frac{\pi}{3}}, \Z_{\mathit{cts}}\}$.
    The interval being covered is the set of angles $[0, \pi]$ appearing as the middle parameter in a $\mathit{ZYZ}$--decomposition of a generic $U \in PU(2)$.}
    \label{WeylRegionsFor1Q}
\end{figure}

\noindent
Considering only exact synthesis for now, we compute the following expected (i.e., Haar-averaged) average gate infidelities for various gate sets:
\begin{description}
    \item[$\X_{\frac{\pi}{2}}$:] $2m + 2b$, the standard decomposition, used as a baseline.
    \item[$\X_{\frac{\pi}{3}}$:] $\frac{3}{2}m + \frac{9}{4} b$, an improvement of $\approx 25\%$ over the baseline, provided $b \ll m$.
    \item[$\{\X_{\frac{\pi}{2}}, \X_{\frac{\pi}{3}}\}$:] $\frac{19 - \sqrt{3}}{12} m + 2b$, an improvement of $\approx 28\%$ over the baseline, provided $b \ll m$.
    \item[$\{\X_{\mathit{cts}}\}$:] In the continuous limit with all gates $\X_\theta$ available, the cost becomes $m + b$, an improvement of $50\%$ over the baseline.
\end{description}
These values can be further improved by considering approximate synthesis, mirrored synthesis, or both.

\subsection*{Outline}

Our analysis of the two-qubit case follows along the same lines as above, and in the same order.
\begin{description}
    \item[{\Cref{KAKResume}}:]
    Generalizing Euler decomposition, we give a lightning review of $KAK$ decomposition as specialized to two-qubit unitary operations.
    \item[{\Cref{SynthesisSection}}:]
    We describe a more detailed plan of attack on the two-qubit problem, outlining the steps in the proofs to come.
    \item[{\Cref{GlobalTheoremSection}}:]
    Generalizing the interval $I_{\psi, \psi'}$, we give a compact description of which two-qubit gates are accessible to a circuit built out of a fixed sequence of $\XX$--type interactions with one-qubit operations interleaved (\Cref{MainGlobalTheorem}).
    This leverages previous work of Peterson et al.~\cite{PCS}: it detects when a two-qubit operation admits synthesis as a circuit of a certain type, but it does not indicate how to produce the circuit.
    \item[{\Cref{LocalTheoremSection}}:]
    Generalizing the formula relating $\cos \theta$ and $\cos \iota$, we single out a method for choosing local circuit parameters which are simple to analyze (\Cref{InterferenceInequalities}).
    We then compare with \Cref{GlobalTheoremSection} and prove that each of these restricted circuits nonetheless exhaust the space of possibilities (\Cref{MainLocalTheorem}).
    \item[{\Cref{OptimalSynthesisSection}}:]
    Generalizing the discussion around cost, we provide an efficient method to find the best approximation within a given circuit family (\Cref{ApproximationTheorem}), and we couple it to the preceding results to produce the promised efficient synthesis method (\Cref{EffectiveSynthesis}, \Cref{SynthesisWrapper}).
    \item[{\Cref{GatesetOptSection}}:]
    Generalizing the calculations of expected cost for various gate sets, we use experimental data to justify the use of a particular error model as a cost function (\Cref{AffineErrorModelDefn}), study the effect of choice of gate set (\Cref{Cost2DExample}), and describe what is left to gain in the large limit of fractional gate count (\Cref{EfficiencyLowerBound}).
\end{description}

\noindent
We give a small example of the effectiveness of these techniques as applied to a random operator in \Cref{ExampleCircuits} and to a family of structured operators in \Cref{QFTStatsTable}, reserving further analysis for \Cref{GatesetOptSection}.

\subsection*{Conventions}

We use the following abbreviations throughout:
\begin{align*}
    c_\theta & = \cos(\theta), &
    s_\theta & = \sin(\theta), &
    x_+ & = \sum_j x_j.
\end{align*}

\section{R\'esum\'e on two-qubit unitaries and the monodromy map}\label{KAKResume}

We briefly recall the theory of Cartan decompositions as it applies to two-qubit unitary operations and its role in circuit synthesis.

\begin{lemma}[{\cite{KrausCirac}, \cite{Makhlin}, \cite{ZVSWGeometric}, \cite{PCS}}]\label{TraditionalWeylDecompTheorem}
Let $\CAN$ denote the following two-qubit gate: \[\CAN(a_1, a_2, a_3) = \exp(-i(a_1 \XX + a_2 \YY + a_3 \ZZ)) = \]
\scalebox{0.8}{$\left(
\begin{array}{cccc}
e^{i a_3} c_{a_1 - a_2} & 0 & 0 & -i e^{i a_3} s_{a_1 - a_2} \\
0 & e^{-i a_3} c_{a_1 + a_2} & -i e^{-i a_3} s_{a_1 + a_2} & 0 \\
0 & -i e^{-i a_3} s_{a_1 + a_2} & e^{-i a_3} c_{a_1 + a_2} & 0 \\
-i e^{i a_3} s_{a_1 - a_2} & 0 & 0 & e^{i a_3} c_{a_1 - a_2}
\end{array}
\right).$}
Any two-qubit unitary operation $U \in \PU(4)$ can be written as \[U = L \cdot \CAN(a_1, a_2, a_3) \cdot L',\] where $L, L' \in \PU(2)^{\times 2}$ are local gates and $a_1, a_2, a_3$ are (underdetermined) real parameters.
\qed
\end{lemma}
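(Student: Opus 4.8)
The plan is to establish the $KAK$ (Cartan) decomposition for $\PU(4)$ concretely, using the exceptional isomorphism that trivializes two-qubit structure. First I would pass to the double cover and work with $SU(4)$, recalling the Lie algebra isomorphism $\su(4) \cong \so(6)$ induced by the ``magic basis'' change of coordinates; equivalently, the group isomorphism $\PU(4) \cong SO(6)$ (up to the relevant covers), under which the subgroup $\PU(2)^{\times 2}$ of local gates corresponds to a copy of $SO(4) \subset SO(6)$. This is the structural input that makes everything else routine.

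Next I would invoke the Cartan $KG$-decomposition for the symmetric pair $(SO(6), SO(4))$ — or, more elementarily, observe directly that any element of $SO(6)$ can be written as $k_1 \cdot \exp(\mathfrak a) \cdot k_2$ with $k_1, k_2 \in SO(4)$ and $\mathfrak a$ a maximal abelian subspace of the orthogonal complement $\mathfrak p$ of $\so(4)$ in $\so(6)$. Translating back through the isomorphism, the maximal abelian $\mathfrak a$ is spanned by $\XX$, $\YY$, $\ZZ$ (these three commute and are orthogonal to the local algebra), so $\exp(\mathfrak a)$ is exactly the family $\CAN(a_1, a_2, a_3) = \exp(-i(a_1 \XX + a_2 \YY + a_3 \ZZ))$, and $k_1, k_2$ become the local factors $L, L'$. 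The explicit matrix form of $\CAN(a_1,a_2,a_3)$ displayed in the statement is then just a direct computation: $\XX$, $\YY$, $\ZZ$ are simultaneously ``block-diagonalizable'' on the span of $\{|00\rangle, |11\rangle\}$ and $\{|01\rangle, |10\rangle\}$, so exponentiating reduces to two $2\times 2$ exponentials, yielding the $c_{a_1 \mp a_2}$, $s_{a_1 \mp a_2}$ entries with the $e^{\pm i a_3}$ phases. The underdeterminedness of $(a_1, a_2, a_3)$ is explained by the residual Weyl-group action: permutations and sign changes of the $a_j$ (implemented by local gates) and the $2\pi$-periodicity of the torus, which I would note but not belabor.

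Alternatively — and this is probably the cleanest exposition for a ``lightning review'' — I would simply cite the decomposition from \cite{KrausCirac, Makhlin, ZVSWGeometric, PCS} as stated and present only the verification of the closed form for $\CAN$, which is the one piece a reader genuinely needs to see worked out. The existence half is standard Cartan theory once the $\su(4)\cong\so(6)$ dictionary is in hand.

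The main obstacle is not any single hard step but rather bookkeeping across the covers: $U(4)$ versus $SU(4)$ versus $\PU(4)$, and $SO(6)$ versus $\mathit{Spin}(6)$. Since the statement is phrased for $\PU(4)$ and only asserts existence of \emph{some} representatives $(a_1,a_2,a_3)$ (not uniqueness), one can be cavalier: global phases are quotiented out on both sides, so it suffices to produce the decomposition for a single lift in $SU(4)$ and then project. I would therefore handle the phase/cover issues with a one-line remark rather than a careful diagram chase, and reserve the explicit attention for the $\CAN$ matrix computation, which is the concrete fact used repeatedly in the sequel.
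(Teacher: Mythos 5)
Your proposal is correct, and it is worth noting up front that the paper itself gives no proof of this lemma: the statement carries a terminal \qed and four citations, so the paper's ``proof'' is precisely your alternative route — cite the standard result and rely on the reader to verify the closed form of $\CAN$ by the block computation you describe. Your primary route (pass to $SU(4)$, invoke the exceptional isomorphism $\su(4)\cong\so(6)$ identifying the local subalgebra with $\so(4)$, apply the $KAK$ decomposition for the symmetric pair, and identify $\operatorname{span}\{\XX,\YY,\ZZ\}$ as a maximal abelian subspace of $\mathfrak p$) is exactly the argument that appears in the cited references, so you are not proving anything the paper doesn't already attribute, but you \emph{are} supplying the actual derivation rather than a pointer to it. That is a reasonable expansion for a self-contained exposition, and your treatment of the cover bookkeeping is at the right level of care: since only existence in $\PU(4)$ is asserted and global phases are quotiented on both sides, lifting to a representative in $SU(4)$ and projecting is indeed all that is needed. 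The explicit $4\times 4$ form follows exactly as you say, from the fact that $\XX$, $\YY$, $\ZZ$ mutually commute and preserve the two-dimensional blocks $\operatorname{span}\{\ket{00},\ket{11}\}$ and $\operatorname{span}\{\ket{01},\ket{10}\}$, on each of which the exponential reduces to a $2\times 2$ rotation-with-phase; the Weyl-group remark correctly explains the underdeterminedness of $(a_1,a_2,a_3)$. No gap.
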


\begin{definition}[{``Canonical decomposition'', cf.\ \cite{PCS}}]\label{CanonicalDecomp}
In \Cref{TraditionalWeylDecompTheorem}, there is a unique triple $(a_1, a_2, a_3)$ satisfying $a_1 \ge a_2 \ge a_3 \ge 0$, $\frac{\pi}{2} \ge a_1 + a_2$, and one of either $a_3 > 0$ or $a_1 \le \frac{\pi}{4}$.
Such a triple is called a \textit{positive canonical coordinate}, and we denote the space of such as $\A_{C_2}$.
This unicity determines a function $\Pi\co PU(4) \to \A_{C_2}$, called the \textit{monodromy map}.
Away from the plane $a_3 = 0$, this function is continuous.%
\footnote{Near the plane $a_3 = 0$, the function $\Pi$ becomes continuous after imposing the identification $(a_1, a_2, 0) \sim (\frac{\pi}{2} - a_1, a_2, 0)$.}
\end{definition}

\begin{example}
Here are the positive canonical triples for some familiar gates:
\begin{align*}
    \Pi(I) & = (0, 0, 0), \\
    \Pi(\SWAP) & = \left(\frac{\pi}{4}, \frac{\pi}{4}, \frac{\pi}{4}\right), \\
    \Pi(\CX) & = \left(\frac{\pi}{4}, 0, 0\right).
\end{align*}
Generalizing the last example, the positive canonical triple for any controlled unitary gate has the form $(a_1, 0, 0)$; we say that such an operation is \textit{of $\XX$--type}, and we abbreviate such gates to \[\XX_{a_1} = \CAN(a_1, 0, 0).\]
Specifically, the fractional gate $\CX^{\alpha}$ is of $\XX$--type, with coordinate $\Pi(\CX^\alpha) = (\alpha \cdot \frac{\pi}{4}, 0, 0)$, so that \[\CX^\alpha = (H \otimes I) \cdot \XX_{\alpha \cdot \frac{\pi}{4}} \cdot (H \cdot \Z_{-\frac{\pi}{2}} \otimes \Z_{-\frac{\pi}{2}}) \equiv \XX_{\alpha \cdot \frac{\pi}{4}}.\]
From this perspective, the varying coordinate measures interaction duration or interaction strength, so that smaller values give rise to less entanglement.
\end{example}

For us this apparatus has two main uses, captured in the following pair of results:

\begin{lemma}
A pair of two-qubit operations $U$ and $V$ are said to be \textit{locally equivalent} when there exist local gates $L, L' \in \PU(2)^{\times 2}$ with $U = L \cdot V \cdot L'$.
This condition holds if and only if $\Pi(U) = \Pi(V)$.
\qed
\end{lemma}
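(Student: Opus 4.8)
The plan is to prove both directions of the biconditional, where the forward direction (local equivalence implies equal monodromy) is essentially immediate from the definitions, and the reverse direction requires unpacking the uniqueness built into \Cref{CanonicalDecomp}. First I would handle the easy direction: suppose $U = L \cdot V \cdot L'$ for local gates $L, L' \in \PU(2)^{\times 2}$. Applying \Cref{TraditionalWeylDecompTheorem} to $V$, write $V = M \cdot \CAN(a_1, a_2, a_3) \cdot M'$ with $(a_1, a_2, a_3) = \Pi(V)$ the positive canonical coordinate. Then $U = (LM) \cdot \CAN(a_1, a_2, a_3) \cdot (M'L')$, and since $LM, M'L' \in \PU(2)^{\times 2}$ (locality is closed under composition), this exhibits $U$ in the form of \Cref{TraditionalWeylDecompTheorem} with the \emph{same} canonical triple. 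By the uniqueness clause of \Cref{CanonicalDecomp}, $\Pi(U) = (a_1, a_2, a_3) = \Pi(V)$.

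For the reverse direction, suppose $\Pi(U) = \Pi(V) = (a_1, a_2, a_3)$. Using \Cref{TraditionalWeylDecompTheorem} twice, write $U = L_1 \cdot \CAN(a_1, a_2, a_3) \cdot L_1'$ and $V = L_2 \cdot \CAN(a_1, a_2, a_3) \cdot L_2'$ with all four factors local. Then $U = L_1 L_2^{-1} \cdot V \cdot (L_2')^{-1} L_1'$, and setting $L := L_1 L_2^{-1}$ and $L' := (L_2')^{-1} L_1'$, both of which lie in $\PU(2)^{\times 2}$ since this group is closed under inverses and products, we obtain $U = L \cdot V \cdot L'$, i.e.\ $U$ and $V$ are locally equivalent. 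The only subtlety is that the triple $(a_1, a_2, a_3)$ extracted from \Cref{TraditionalWeylDecompTheorem} for $U$ might a priori fail to be the positive canonical one; but by applying Weyl-group and sign symmetries of $\CAN$ (absorbed into the local factors) one can always arrange the decomposition so that the middle factor is $\CAN(\Pi(U))$, which is what makes $\Pi$ well-defined in the first place.

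The main obstacle — really the only non-bookkeeping point — is justifying that the canonical triple can genuinely be taken to be $\Pi(U)$ in a \emph{common} $\CAN$ factor for both $U$ and $V$, rather than merely triples that are equal as elements of $\A_{C_2}$. This rests on the fact, implicit in \Cref{CanonicalDecomp}, that the local-equivalence orbit of $\CAN(a_1,a_2,a_3)$ depends only on the positive canonical representative: different underdetermined triples $(a_1,a_2,a_3)$ producing the same $U$ differ by operations (coordinate permutations, sign flips $a_i \mapsto -a_i$, and the shift $a_1 \mapsto \frac{\pi}{2} - a_1$) that are each implemented by conjugation by fixed local gates. I would cite this normalization as coming from the cited works \cite{KrausCirac,Makhlin,ZVSWGeometric,PCS} underlying \Cref{TraditionalWeylDecompTheorem} rather than re-deriving it. Everything else is routine group-theoretic manipulation of the identity $U = L \cdot V \cdot L'$.
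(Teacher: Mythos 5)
Your proof is correct. Note that the paper itself states this lemma with a terminal \verb|\qed| and no proof environment — it is taken as a standard consequence of the canonical/Cartan decomposition theory cited in \Cref{TraditionalWeylDecompTheorem} and \Cref{CanonicalDecomp} — so there is no paper argument to compare against. Your two directions are both sound: the forward direction correctly uses closure of $\PU(2)^{\times 2}$ under multiplication together with the uniqueness assertion in \Cref{CanonicalDecomp}, and the reverse direction is the obvious conjugation/cancellation argument. The only mild criticism is that your final ``subtlety'' paragraph is doing more work than the proof needs: once you accept \Cref{CanonicalDecomp}'s uniqueness claim at face value, the existence of a decomposition $U = L_1 \cdot \CAN(\Pi(U)) \cdot L_1'$ is definitional, and you never have to separately argue that an arbitrary triple produced by \Cref{TraditionalWeylDecompTheorem} can be normalized to the positive canonical one (that normalization is exactly what the cited works establish, and is already baked into the meaning of $\Pi$). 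Trimming that paragraph to a one-line appeal to \Cref{CanonicalDecomp} would tighten the write-up without losing anything.
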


\begin{theorem}[{\cite{PCS}; \cite{monodromy}}]\label{PCSPolytopeTheorem}
Let $\S, \S' \subseteq PU(4)$ be two sets of two-qubit operations whose images $\Pi(\S), \Pi(\S') \subseteq \A_{C_2}$ through $\Pi$ are polytopes (e.g., a set of isolated points).
The image of the set \[\S \cdot \PU(2)^{\times 2} \cdot \S' = \left\{S \cdot (L \otimes L') \cdot S' \middle| \begin{array}{c} S \in \S, S' \in \S', \\ L, L' \in \PU(2) \end{array} \right\}\] through $\Pi$ is then also a polytope.
Given explicit descriptions of the input polytopes as families of linear inequalities, the output polytope can also be so described.
\qed
\end{theorem}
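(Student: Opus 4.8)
The plan is to peel off the local gates, translate the monodromy data into the spectrum of a product of unitaries with prescribed conjugacy classes, invoke a multiplicative Horn--type polytope theorem, and then close with operations that manifestly preserve polyhedrality. First I would use that $\Pi$ is invariant under left and right multiplication by local gates (the local-equivalence lemma): writing $S = L_1 \cdot \CAN(b) \cdot L_2$ and $S' = L_1' \cdot \CAN(b') \cdot L_2'$ with $b = \Pi(S), b' = \Pi(S') \in \A_{C_2}$, one gets $\Pi\bigl(S \cdot (L\otimes L') \cdot S'\bigr) = \Pi\bigl(\CAN(b) \cdot m \cdot \CAN(b')\bigr)$, where $m := L_2 (L\otimes L') L_1'$ sweeps all of $\PU(2)^{\times 2}$ as $L, L'$ do. Hence the target set is $\bigcup_{(b,b') \in \Pi(\S)\times\Pi(\S')} \Phi(b,b')$ with $\Phi(b,b') := \{\Pi(\CAN(b)\cdot m\cdot\CAN(b')) : m \in \PU(2)^{\times 2}\}$, and it suffices to understand $\Phi$ as a family depending on $(b,b')$ and then take this union.

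\emph{Passage to eigenvalues of a product.} Next I would work in the magic (Bell) basis, in which $\PU(2)^{\times 2}$ becomes the image of $SO(4)$ and $\CAN(a)$ becomes the diagonal matrix $D_a := \mathrm{diag}(e^{i\phi_j(a)})$, where the four $\phi_j$ are the linear forms $\pm a_1 \pm a_2 \pm a_3$ with an even number of minus signs; moreover $\Pi(W)$ is recovered, up to the Weyl-group folding into $\A_{C_2}$, the $a_3 = 0$ identification, and the passage $SU(4)\to\PU(4)$, from the eigenvalue multiset of $W W^{T}$. A short computation (using $O^{T} = O^{-1}$ for $O \in SO(4)$ and $D_a^{T} = D_a$) gives, for $W = \CAN(b)\cdot m\cdot \CAN(b')$, that $\mathrm{spec}(W W^{T}) = \mathrm{spec}\bigl( (O\, D_{b'}^{2}\, O^{-1}) \cdot D_{b}^{2} \bigr)$ as $O$ ranges over $SO(4)$. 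Thus $\Phi(b,b')$ is, up to the folding, the image in $\A_{C_2}$ of the set of spectra of products $P \cdot C$, where $P := D_b^2$ is a fixed diagonal unitary with arguments $2\phi_j(b)$ and $C$ ranges over the $SO(4)$-conjugates of the fixed diagonal $Q := D_{b'}^2$ with arguments $2\phi_j(b')$ --- a (mildly constrained) instance of the multiplicative Horn datum for a product of two conjugacy classes in $SU(4)$.

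\emph{The Horn polytope and bookkeeping.} By the solution of the multiplicative Horn problem (Agnihotri--Woodward, Belkale; equivalently the classification of ``fusion products'' of conjugacy classes), the set of triples $(\mathrm{spec}\,P,\mathrm{spec}\,Q,\mathrm{spec}\,(PC))$ that occur is cut out, inside a product of Weyl alcoves, by a finite list of linear (``quantum Schubert'') inequalities. Pulling these back along the affine substitutions $b \mapsto (2\phi_j(b))_j$ and $b' \mapsto (2\phi_j(b'))_j$ and unfolding the canonical coordinate shows that $\{(b,b',c) : c \in \Phi(b,b')\}$ is a finite union of polytopes in $\A_{C_2}^{\times 3}$; intersecting the $(b,b')$-coordinates with the input polytopes $\Pi(\S)$ and $\Pi(\S')$ and then projecting onto the $c$-coordinate preserves this, since projections of polytopes are again polytopes, and that projection is exactly $\Pi(\S \cdot \PU(2)^{\times 2} \cdot \S')$. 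For the effective version, the output inequalities are obtained from the input inequalities by Fourier--Motzkin elimination, the (fixed, once-and-for-all) list of relevant quantum Schubert inequalities for this small group being finite and explicitly known.

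\emph{Where the work is.} The routine part is the final bookkeeping; the content lies in making the dictionary of the second and third steps airtight, and that is where I expect the main obstacle. Concretely: (i) the conjugation in $\mathrm{spec}(PC)$ sweeps only $SO(4)$, not $SU(4)$ --- equivalently we are multiplying symmetric unitaries --- and one must verify this does not shrink the set of attainable spectra (it does not in the $SU(2)$ model, where one recovers precisely the interval $I_{\psi,\psi'}$ of the introduction, but a genuine argument is needed in general); and (ii) the piecewise-linear folding onto the positive canonical alcove $\A_{C_2}$, together with the $a_3 = 0$ identification and the $SU(4)\to\PU(4)$ quotient, a priori yields only a finite union of polytopes, so one must argue --- as is done in \cite{PCS} --- that the pieces reassemble into a single convex polytope. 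Once the problem is phrased as a product of conjugacy classes, polyhedrality and its effective form are off the shelf.
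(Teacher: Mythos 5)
The paper does not prove this theorem; it is cited wholesale from Peterson--Crooks--Smith~\cite{PCS}, and the \qed{} closes the statement immediately, so there is no ``paper's own proof'' to compare against.

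That said, your sketch is essentially the argument that underlies \cite{PCS}: translate to the magic basis so that $\PU(2)^{\times 2}$ becomes (a quotient of) $SO(4)$, encode the Cartan coordinate in the spectrum of $WW^{T}$, reduce to attainable spectra of a product of two prescribed conjugacy classes, and invoke the Agnihotri--Woodward / Belkale description of the multiplicative Horn body via quantum Schubert inequalities, finishing with Fourier--Motzkin for effectivity. You have also correctly identified where the real content sits. Concerning your point~(ii): the paper's own usage of ``polytope'' already means a finite union of convex bodies cut out by a disjunction of linear systems --- see \Cref{MainGlobalTheorem}, which outputs exactly two convex pieces, and the accompanying remarks about scissors congruence and the $a_3 = 0$ identification --- so the piecewise-linear folding is not an obstacle so much as the expected shape of the answer. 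Concerning your point~(i): this is the genuine technical heart, and is not ``off the shelf.'' The conjugating element in your reduction is constrained to $SO(4)$, so the datum is not the unconstrained multiplicative Horn problem for $SU(4)$ but the version attached to the symmetric pair $(\su_4, \so_4)$ (equivalently, multiplication of $SO(4)$-double cosets, or spectra of products of \emph{symmetric} unitaries). One must either prove that $SO(4)$-conjugation attains the same spectra as full $SU(4)$-conjugation in this symmetric setting, or directly invoke the eigenvalue/Horn theory for the symmetric space rather than for the group. Your sketch flags this but does not resolve it, and resolving it is precisely what \cite{PCS} supplies; without it the appeal to Agnihotri--Woodward/Belkale is not yet justified. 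As a sketch pointing at \cite{PCS}, the proposal is faithful; as a standalone proof, step~(i) is the missing piece.
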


Our work will lead us directly into considering families of two-qubit gates and their parametrizations, so we introduce some attendant language.

\begin{definition}
A \textit{gate set} is any collection of two-qubit unitaries; typically we will consider gate sets which are made up of finitely many controlled unitaries.
For a gate set $\S$, an $\S$--circuit is a finite sequence of members of $\S$ and local gates.
The operation which it enacts is given by the product of the sequence elements.
A \textit{circuit shape}%
\footnote{Also commonly called a \textit{(circuit) template}.}
is a circuit-valued function \[C\co \theta \mapsto (L_0(\theta), S_1, \ldots, L_{n-1}(\theta), S_n, L_n(\theta)),\] where each $L_j$ is a parametrized local operator and each $S_j \in \S$ is fixed.

It can be convenient to place further restrictions on the $L_j$ (e.g., that they consist only of $\Z$--rotations), but absent explicit mention we take each $L_j$ to surject onto $\PU(2)^{\times 2}$.
In this surjective case, the sequence $(S_1, \ldots, S_n)$ determines the image of $C$, and it follows from \Cref{PCSPolytopeTheorem} that the image of $\Pi \circ C$ in $\A_{C_2}$ is a polytope, called the \textit{circuit polytope} of $C$ (or of $(S_1, \ldots, S_n)$).
In the case that $\S$ consists of gates of $\XX$--type, locally surjective circuits can be further identified with the underlying sequence of interaction strengths $(\alpha_1, \ldots, \alpha_n)$ with $S_j \equiv \XX_{\alpha_j}$.
\end{definition}

\begin{remark}
The coordinate system given in \Cref{CanonicalDecomp} is not unique: a similar theorem holds for any choice of ``Weyl alcove'' in $\pu_4$.
When $\mathfrak g$ is the Lie algebra of a simply connected Lie group (e.g., $\su_4$), each Weyl alcove is related to every other by a discrete set of linear transformations including reflections and shears.
Without the simply-connected hypothesis (e.g., $\pu_4$), they are related by linear transformations and ``scissors congruence''.
Our choice of coordinate system differs from that used in previous work of Peteron et al.~\cite{PCS} by a nontrivial scissors congruence, effectively replacing the condition stated there, \[(\operatorname{LogSpec} U)_3 + 1/2 > (\operatorname{LogSpec} U)_1,\] by the alternative \[(\operatorname{LogSpec} U)_2 + (\operatorname{LogSpec} U)_3 > 0.\]
\end{remark}

\begin{remark}[{cf.\ \Cref{DisorderedMainTheorem}}]\label{EarlyDisorderedRemark}
Later, it will be convenient for us to consider a variant of positive canonical triples which are not required to be sorted.
Unsorted triples $(a_1, a_2, a_3)$ which become positive canonical triples upon sorting are those which satisfy $0 \le a_j$ and $a_j + a_k \le \frac{\pi}{2}$ for all choices of $j$ and $k$.
The set of such triples still gives a convex polytope.
\end{remark}

\section{$\XX$-based synthesis: Strategy}\label{SynthesisSection}

Let us turn to the task of synthesizing for any two-qubit unitary operator $U$ and gate set $\S$ consisting of $\XX$ interactions an $\S$--circuit $C$ modeling $U$.
To set the stage for our strategy, suppose instead that $C$ is given.
We can produce from it a sequence of truncations $C_j$ that retain steps $1$ through $j$.
Each $C_j$ is also a circuit modeling some other unitary operator $U_j$, and if $C$ is optimal for circuits modeling $U$ against some well-behaved cost function (e.g., operation count), then each $C_j$ will be so optimal for $U_j$.
The images $p_j = \Pi(U_j) \in \A_{C_2}$ of these intermediate operators then describe a path through the Weyl alcove, where the $j$\th step in the path belongs to the region $P_j$ of operations whose optimal circuits take the form of $C_j$.

Since our goal is to construct $C$, we might instead begin by constructing the path $(p_j)_j$, subject to the two constraints:
\begin{enumerate}
    \item $p_j$ lies in $P_j$.
    \item The hop from $p_j$ to $p_{j+1}$ is given by some nice circuit.
\end{enumerate}

In order to understand the first constraint, we give a compact description of $P_j$ by way of describing the circuit polytope associated to an arbitrary sequence of $\XX$--type interactions.
We call this the \textit{global theorem}~(\Cref{MainGlobalTheorem}) since it describes the large-scale structure of the problem and does not reference the individual point $p_j$.
Though our main tool here is \Cref{PCSPolytopeTheorem}, for a generic sequence of interactions it can only guarantee an exponential-sized family of convex bodies, themselves each of increasing facet complexity.
It is a special feature of interactions of $\XX$--type that the associated circuit polytopes have a fixed number of convex bodies, each of fixed complexity, independent of the sequence length.

To understand the second constraint, we choose a particular ``nice circuit'' and analyze the effect under $\Pi$ of appending such a circuit to a canonical gate~(\Cref{XXYYProductsOpaque}), resulting in a family of constraints we call ``interference inequalities''~(\Cref{InterferenceInequalities}).
This, too, is specific to our case: even for interactions of $\XX$--type, not all choices of unit circuit have a discernable image under $\Pi$, nevermind a polytope.

We complete the program by linking these two together in the \textit{local theorem}~(\Cref{MainLocalTheorem}): we show that for any $p_{j+1} \in P_{j+1}$, we can always find a $p_j \in P_j$ linked to $p_{j+1}$ by one of these simple circuits.
This argument can then be reorganized into a constructive, efficient synthesis routine~(\Cref{EffectiveSynthesis}).
Additionally, we show how to select a point $p' \in P_j$ which is the best approximation by average gate infidelity to $p = \Pi(U)$~(\Cref{ApproximationTheorem}).

\section{The global theorem}\label{GlobalTheoremSection}

One of our overall goals is to describe the set of positive canonical triples whose optimal circuit implementation uses a sequence of interaction strengths $(\alpha_1, \ldots, \alpha_n)$.
This can be accomplished by describing those positive canonical triples which admit \emph{any} such circuit implementation, even if suboptimal.
Optimality can then be enforced by taking a complement against positive canonical triples which admit superior circuit implementations.
In this section, we accomplish this goal, summarized in the following Theorem:

\begin{theorem}\label{MainGlobalTheorem}
Let $(\alpha_j \in [0, \frac{\pi}{4}])_j$ be a sequence of interaction strengths, and let $(a_1, a_2, a_3)$ be a positive canonical coordinate.
The canonical operator $\CAN(a_1, a_2, a_3)$ admits a presentation as a circuit of the form \[L_0 \cdot \XX_{\alpha_1} \cdot L_1 \cdot \cdots \cdot L_{n-1} \cdot \XX_{\alpha_n} \cdot L_n,\] where $L_j$ are local operators, if and only if either of the following two families of linear inequalities is satisfied:
\begin{align*}
\left\{\begin{array}{rclr}
\alpha_+ & \ge & a_1 + a_2 + a_3, \\
\min_k \alpha_+ - 2\alpha_k & \ge & -a_1 + a_2 + a_3, \\
\min_{k \ne \ell} \alpha_+ - \alpha_k - \alpha_\ell & \ge & a_3;
\end{array} \right. \\
\left\{\begin{array}{rclr}
-\frac{\pi}{2} + \alpha_+ & \ge & -a_1 + a_2 + a_3, \\
\frac{\pi}{2} + \min_k \alpha_+ -2 \alpha_k & \ge & a_1 + a_2 + a_3, \\
\min_{k \ne \ell} \alpha_+ - \alpha_k - \alpha_\ell & \ge & a_3.
\end{array} \right.
\end{align*}
We respectively refer to the first, second, and third inequalities in each family as the strength, slant, and frustrum bounds.
\end{theorem}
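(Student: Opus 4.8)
The plan is to obtain the stated inequality description by directly invoking the polytope machinery of \Cref{PCSPolytopeTheorem}, specialized to the case where all building blocks are of $\XX$--type, and then to simplify the resulting (a priori exponentially complex) description down to the two three-inequality families above. Concretely, a circuit of the form $L_0 \cdot \XX_{\alpha_1} \cdot L_1 \cdots L_{n-1} \cdot \XX_{\alpha_n} \cdot L_n$ with surjective local gates realizes exactly the canonical points in the circuit polytope of the sequence $(\XX_{\alpha_1}, \ldots, \XX_{\alpha_n})$, which is the image under $\Pi$ of $\XX_{\alpha_1} \cdot \PU(2)^{\times 2} \cdots \PU(2)^{\times 2} \cdot \XX_{\alpha_n}$. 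By \Cref{PCSPolytopeTheorem} this is a polytope whose defining inequalities can be computed from those of the singletons $\Pi(\XX_{\alpha_j}) = \{(\alpha_j, 0, 0)\}$ by the explicit convolution operation underlying that theorem. So the first step is to set up this iterated convolution and identify the recursive structure.

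The key step is to guess and then verify the closed form. I would proceed by induction on $n$. For $n = 1$ the polytope is the single point $(\alpha_1, 0, 0)$; one checks by hand that the two families collapse to $\alpha_1 = a_1$, $a_2 = a_3 = 0$ (with the slant and frustrum bounds becoming equalities forced by $a_1 \le \pi/4$, $a_2 = a_3 = 0$). For the inductive step, I would show that convolving the polytope for $(\alpha_1, \ldots, \alpha_{n-1})$ with the point $(\alpha_n, 0, 0)$ — that is, forming $\Pi$ of the product of a generic operator in the old polytope with a fresh $\XX_{\alpha_n}$ sandwiched by free locals — preserves the template of three bounds, merely updating $\alpha_+ \mapsto \alpha_+ + \alpha_n$ and the various partial sums $\min_k$, $\min_{k \ne \ell}$ accordingly. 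This amounts to checking that appending one more $\XX$ interaction transforms the facet inequalities in exactly the way the formulas predict, which is a finite linear-algebra computation: one has the explicit coordinate description of $\CAN$ from \Cref{TraditionalWeylDecompTheorem}, so the effect of the one-step hop $p \mapsto \Pi(\CAN(p) \cdot (L \otimes L') \cdot \XX_{\alpha_n})$ on coordinates can be written down. The ``strength'' bound tracks the total interaction budget, the ``slant'' bound tracks the budget minus twice the largest single interaction (reflecting the triangle-type constraint on how much a single strong interaction can dominate), and the ``frustrum'' bound tracks the budget minus the two largest interactions bounding the smallest coordinate $a_3$; the two families correspond to the two branches in \Cref{CanonicalDecomp} (roughly, whether $a_1 \le \pi/4$ or one is near the $\SWAP$-containing face), equivalently the scissors-congruence ambiguity near $a_3 = 0$ flagged in the remark on coordinate systems.

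The main obstacle I anticipate is controlling the combinatorial blow-up that \Cref{PCSPolytopeTheorem} permits in general: a naive application produces, after $n$ steps, a union of exponentially many convex bodies each of growing facet count, and the whole content of the theorem is that for $\XX$--type interactions this collapses to two bodies of three facets each. Making this collapse rigorous — rather than merely observing it numerically — is the delicate part. I would handle it by proving directly that the candidate inequality set is both necessary and sufficient: necessity by showing each of the three bounds is respected under a one-step hop (a monotonicity/continuity argument on $\Pi$ using the coordinate formulas, plus the base case), and sufficiency by exhibiting, for any $(a_1, a_2, a_3)$ satisfying one of the two families, an explicit choice of the $\alpha_j$-compatible intermediate points — essentially greedily ``spending'' the interaction budget coordinate by coordinate — that lands on the target. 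In fact the sufficiency direction is most cleanly deferred to the local theorem (\Cref{MainLocalTheorem}), which constructs the actual path; here I would prove only that the two families are \emph{necessary} and that they are \emph{not over-constrained} (i.e., the described region is nonempty and has the right dimension), leaving the explicit realization to \Cref{LocalTheoremSection}. A secondary subtlety is bookkeeping the $\min_k$ and $\min_{k \ne \ell}$ terms correctly when several $\alpha_j$ tie for the maximum, and confirming that the two families genuinely give the union (not the intersection) of two polytopes, matching the two-branch structure of the monodromy map.
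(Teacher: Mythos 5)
Your overall plan---induct on the number of interactions, invoke \Cref{PCSPolytopeTheorem} for the one-step hop, and collapse the resulting description---has the right shape, and the paper even flags (in \Cref{TechnophobeRemark}) that your arrangement of ``prove necessity directly, defer sufficiency to \Cref{MainLocalTheorem}'' is a logically viable alternative to the proof it gives. But there is a genuine gap at the heart of the argument: you describe the inductive step as ``a finite linear-algebra computation\ldots the effect of the one-step hop on coordinates can be written down,'' yet you never say what that computation is or why it terminates with the three stated bounds. The hop $\CAN(a) \cdot L \cdot \XX_\beta \equiv \CAN(b)$ is governed by the full monodromy polytope of \Cref{PCSPolytopeTheorem}, not by anything that falls directly out of the matrix formula for $\CAN$, and extracting the $b$-constraints requires a Fourier--Motzkin elimination of the $a$-coordinates. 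This elimination is the entire content of the theorem, and your proposal leaves it at the level of a gesture.

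More importantly, you correctly identify the exponential blow-up as ``the main obstacle'' but do not identify the mechanism the paper uses to defeat it: the inductive hypothesis for $(\alpha_1, \ldots, \alpha_n)$ depends only on the three statistics $\alpha_+$, $\alpha'$ (largest), $\alpha''$ (second largest), so the entire inductive step can be set up once and for all in a fixed $10$-dimensional ambient space $(a_1, a_2, a_3, \alpha_+, \alpha', \alpha'', \beta, b_1, b_2, b_3)$, independent of $n$. That dimension reduction is what turns the projection into a single finite computation (which the paper hands to \lrs) rather than an $n$-dependent cascade of growing complexity. You come close when you note the bounds update via $\alpha_+ \mapsto \alpha_+ + \alpha_n$ and the $\min$ expressions, but you never leverage this to fix the ambient dimension, so your plan names the obstacle without resolving it. Finally, if you do defer sufficiency to \Cref{MainLocalTheorem}, you must break an apparent circularity---that theorem's statement presupposes \Cref{MainGlobalTheorem}---by proving only the forward direction here, then a containment form of the local theorem, then recovering the reverse implication through the recursion of \Cref{EffectiveSynthesis}. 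The remark spells out this order; your proposal leaves it implicit, and the reader cannot tell whether you have noticed the dependency at all.
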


\begin{importantremark*}
From a physical perspective, the circuit polytope ought to be invariant under injecting extra zero-strength interactions into the defining sequence of interaction strengths.
Accordingly, we always treat expressions like ``$\min_{k \ne \ell} \alpha_+ - \alpha_k - \alpha_\ell$'' as if the sequence were padded by arbitrarily many zero entries.
\end{importantremark*}

\begin{proof}[Proof of \Cref{MainGlobalTheorem}]
For the base case, note that the empty list of interaction strengths yields the polytope \[a_1 = a_2 = a_3 = 0,\] which agrees with the set of circuits locally equivalent to the identity interaction.

Suppose then that we have established the claim for a sequence of interaction strengths $(\alpha_1, \ldots, \alpha_n)$, and we would like to establish the claim for $(\alpha_1, \ldots, \alpha_n, \beta)$ for some new interaction strength $\beta$.
By allowing the $(n+1)$ different strengths to range, we note the region in the claim is naturally expressed as a polytope in $(n+1) + 3$ dimensions.
In fact, we can reduce it to a certain $6$--dimensional polytope as follows: writing $\alpha'$ and $\alpha''$ respectively for the largest and second-largest elements in the hypothesized sequence of interaction strengths, we may rewrite the inequality families above as
\begin{align*}
\left\{\begin{array}{rclr}
\alpha_+ & \ge & a_1 + a_2 + a_3, \\
\alpha_+ - 2\alpha' & \ge & -a_1 + a_2 + a_3, \\
\alpha_+ - \alpha' - \alpha'' & \ge & a_3;
\end{array} \right. \\
\left\{\begin{array}{rclr}
-\frac{\pi}{2} + \alpha_+ & \ge & -a_1 + a_2 + a_3, \\
\frac{\pi}{2} + \alpha_+ - 2\alpha' & \ge & a_1 + a_2 + a_3, \\
\alpha_+ - \alpha' - \alpha'' & \ge & a_3.
\end{array} \right.
\end{align*}
with the additional constraints
\begin{align*}
    n \cdot \frac{\pi}{4} & \ge \alpha_+, &
    \alpha_+ & \ge \alpha' + \alpha'', &
    \frac{\pi}{4} \ge \alpha' & \ge \alpha'' \ge 0.
\end{align*}
Altogether, these statements over $a_1, a_2, a_3, \alpha_+, \alpha', \alpha''$ describe a pair of convex polytopes in $6$--dimensional space.

\Cref{PCSPolytopeTheorem} gives an explicit, finite family of linear inequalities (the ``monodromy polytope'') so that $a_1, a_2, a_3, a_1', a_2', a_3', b_1, b_2, b_3$ satisfies the constraints if and only if there is a local gate $L$ and a local equivalence \[\CAN(a_1, a_2, a_3) \cdot L \cdot \CAN(a'_1, a'_2, a'_3) \equiv \CAN(b_1, b_2, b_3).\]
We combine this with the polytope from the inductive hypothesis so that its coordinates are shared with $(a_1, a_2, a_3)$ and the coordinates $(a'_1, a'_2, a'_3)$ are set to $(\beta, 0, 0)$.
This produces a union of convex polytopes in $10$--dimensional space, a point of which simultaneously captures:
\begin{description}
    \item[$(\alpha_+, \alpha', \alpha'')$:] Values extracted from the prefix of interaction strengths.
    \item[$(a_1, a_2, a_3)$:] A positive canonical coordinate which admits expression as an $\XX$--circuit as in the inductive hypothesis.
    \item[$\beta$:] A new interaction strength.
    \item[$(b_1, b_2, b_3)$:] A canonical coordinate which admits expression as a concatenation of the aforementioned circuit, a local gate, and $\XX_\beta$.
\end{description}

Our goal is to describe a certain \emph{projection} of this polytope.
Projection has the effect of introducing an existential quantifier into the above description: a point belongs to the projection of a polytope exactly when it is possible to extend the projected point by the discarded coordinates so that it satisfies the original constraints.
This trades the actual data housed in the lost coordinates---which may be complicated to the point of distraction---for the mere predicate that such data exists.
In our case, we seek to project away the coordinates $(a_1, a_2, a_3)$, which leaves only constraints on $(b_1, b_2, b_3)$, given in terms of $(\alpha_+, \alpha', \alpha'', \beta)$, ensuring that a prefix circuit of the indicated type \emph{exists}, without actually naming it.

To compute this projection, we apply Fourier--Motzkin elimination to project away the remaining coordinates and eliminate redundancies in the resulting inequality set.
These reduced inequality sets have the following form:
\begin{align*}
\left\{\begin{array}{rclr}
\alpha_+ + \beta & \ge & b_1 + b_2 + b_3, \\
\min \left\{ \begin{array}{c} \alpha_+ + \beta - 2\alpha' \\ \alpha_+ + \beta - 2\beta \end{array} \right\} & \ge & -b_1 + b_2 + b_3, \\
\min\left\{ \begin{array}{c} \alpha_+ + \beta - \alpha' - \alpha'' \\ \alpha_+ + \beta - \alpha' - \beta \\ \alpha_+ + \beta - \beta - \alpha'' \end{array} \right\} & \ge & b_3, \\
\end{array} \right. \\
\left\{\begin{array}{rclr}
-\frac{\pi}{2} + \alpha_+ + \beta & \ge & -b_1 + b_2 + b_3; \\
\min\left\{ \begin{array}{c} \frac{\pi}{2} + \alpha_+ + \beta - 2\alpha' \\ \frac{\pi}{2} + \alpha_+ + \beta - 2\beta \end{array}\right\} & \ge & b_1 + b_2 + b_3, \\
\min\left\{ \begin{array}{c} \alpha_+ + \beta - \alpha' - \alpha'' \\ \alpha_+ + \beta - \alpha' - \beta \\ \alpha_+ + \beta - \beta - \alpha'' \end{array}\right\} & \ge & b_3,
\end{array} \right.
\end{align*}
where we have collected the inequalities which give communal upper bounds into single expressions using ``$\min$''.
Notationally absorbing $\beta$ into the sequence of interaction strengths completes the proof.

See \texttt{check\_main\_xx\_theorem} in \texttt{monodromy}~\cite{monodromy} for an executable proof.
\end{proof}

\begin{example}
\begin{figure}
    \centering
    \includegraphics[width=0.4\textwidth, trim={3cm 0cm 6cm 6cm}, clip]{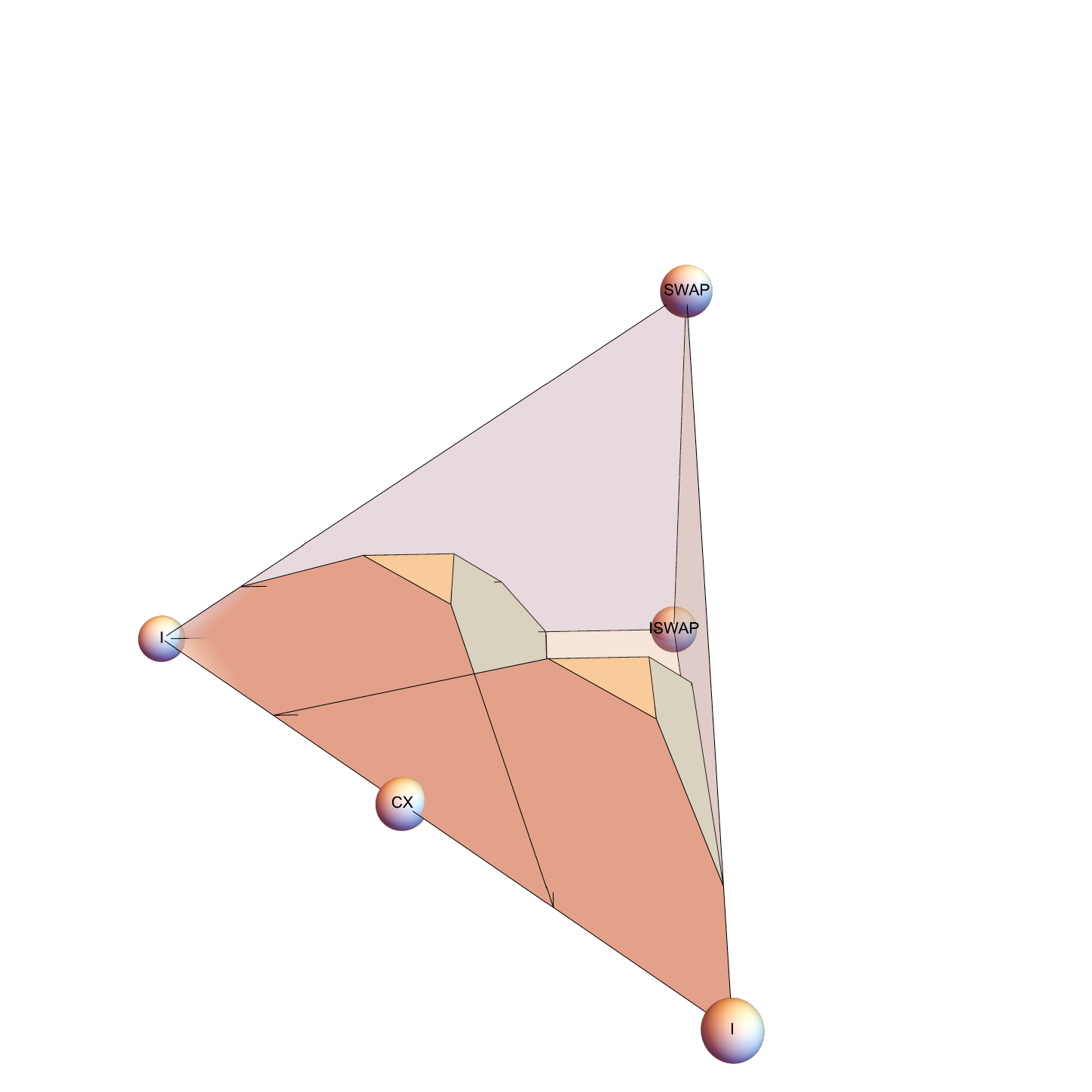}
    \includegraphics[width=0.4\textwidth, trim={5cm 4.5cm 3.5cm 8.5cm}, clip]{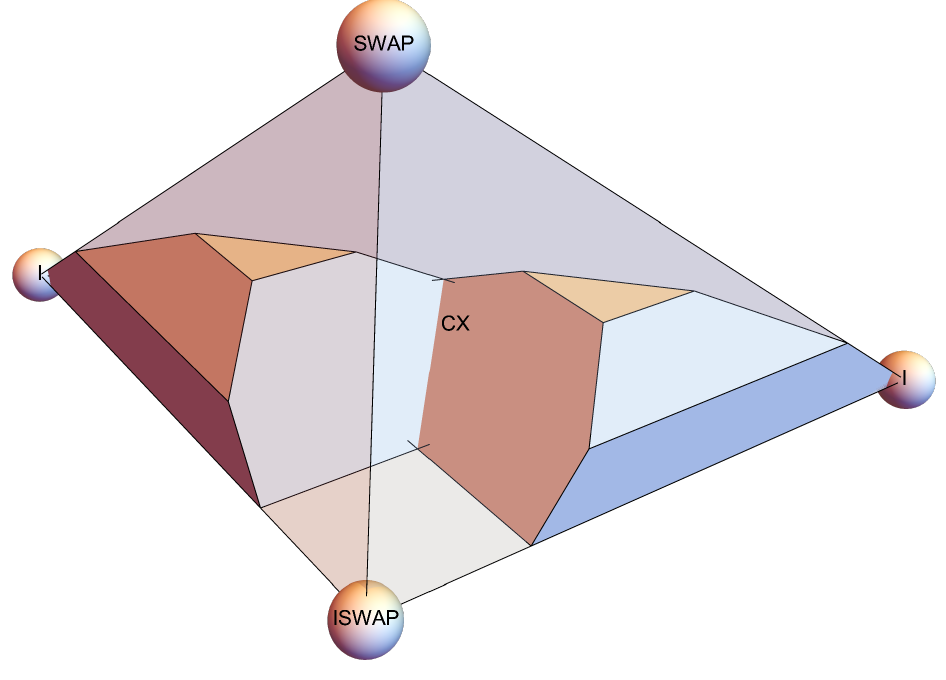}
    \caption{%
    Two perspectives on the $\XX$--circuit polytope for the interaction strength sequence $(\frac{\pi}{6}, \frac{\pi}{8}, \frac{\pi}{12})$.
    From the perspective where $\SWAP$ is near the eyepoint, the strength faces are the pair facing inwards, the slant faces are the pair facing outwards, and the frustrum face is colored tan.
    }
    \label{XXPolytopeFigure}
\end{figure}
We include a visualization of an example $\XX$--circuit polytope in \Cref{XXPolytopeFigure}.
\end{example}

\begin{remark}
The two convex bodies in the statement of \Cref{MainGlobalTheorem} are related by the linear transformation \[(a_1, a_2, a_3) \mapsto \left(\frac{\pi}{2} - a_1, a_2, a_3\right).\]
\end{remark}

\begin{remark}
\Cref{MainGlobalTheorem} is manifestly invariant under permutation of the interaction strengths.
\end{remark}

\begin{remark}\label{DisorderedMainTheorem}
By dropping the assumption that the entries of positive canonical triples are ordered descending (as in \Cref{EarlyDisorderedRemark}), we can rewrite the above inequality families in a manner that is more pleasingly symmetric.
For example, the first\footnote{The second is similar, but less pleasing to the eye.} family is rewritten as:
\begin{align*}
\alpha_+ & \ge a_+, \\
\min_k \alpha_+ - 2\alpha_k & \ge \min_k a_+ - 2a_k, \\
\min_{k \ne \ell} \alpha_+ - \alpha_k - \alpha_\ell & \ge \min_{k \ne \ell} a_+ - a_k - a_\ell.
\end{align*}
We note that we have won these pleasing formulas by losing convexity: the ``$\min$''s appearing in the lower bounds encode disjunctions of linear sentences rather than conjunctions, so we see merely a non-convex union of these convex polytopes.
\end{remark}

\begin{remark}[{\cite{ZVSWControlled}}]
In the case of a uniform interaction strength $\alpha$, we compute the quantities appearing in the upper bounds of \Cref{MainGlobalTheorem} to be
\begin{align*}
    \alpha_+ & = n \alpha, \\
    \min_k \alpha_+ - 2 \alpha_k & = (n-2) \alpha, \\
    \min_{k \ne \ell} \alpha_+ - \alpha_k - \alpha_\ell & = (n - 2)\alpha.
\end{align*}
This causes the slant and frustrum inequalities to degenerate, which recovers a theorem of Zhang et al.\ as a special case.
\end{remark}

\section{The local theorem}\label{LocalTheoremSection}

In this section, we study the problem of appending a single new $\XX$ interaction strength $\beta$ to a \emph{specific circuit} formed from a sequence of strengths $(\alpha_1, \ldots, \alpha_n)$.
Note that \Cref{MainGlobalTheorem} gives us an understanding of the ``global'' effect of appending $\XX_\beta$, where the interaction strengths are fixed but the circuit is allowed to range.
Note also that if we are able to achieve such a local understanding, we would then like to use it in reverse: given a point $p_{n+1}$ which \Cref{MainGlobalTheorem} guarantees to be modelable using a circuit with strengths $(\alpha_1, \ldots, \alpha_n, \beta)$, we would like to guarantee the existence of---and algorithmically identify!---a point $p_n$ which is modelable by $(\alpha_1, \ldots, \alpha_n)$ and for which $p_{n+1}$ is reachable by appending $\XX_\beta$ and some local gates.

Excepting the caveat about algorithmic identification, this can be accomplished directly using nothing more than the methods of the monodromy polytope.
However, because we are interested in circuit construction, we restrict what sorts of circuits we are willing to append to those of the particularly simple form given in \Cref{XXYYProductsOpaque}.
In trade, the method of the monodromy polytope no longer directly applies.

We show in \Cref{InterferenceInequalities} the ``forward'' direction of the strategy described above, then in \Cref{MainLocalTheorem} the ``reverse'' direction, culminating in the recursive step in a synthesis procedure whose full description we defer to \Cref{OptimalSynthesisSection}.
First, however, we introduce the simplified circuit which we will consider.

\begin{lemma}\label{XXYYProductsOpaque}
For any choice of $a_1$, $a_2$, $\beta$, $d$, $e$ in \[U := \CAN(a_1, a_2, 0) \cdot (\Z_d \otimes \Z_e) \cdot (\XX_\beta),\] there exist values $r$, $s$, $t$, $u$, $b_1$, and $b_2$ so that the operator $U$ may be equivalently expressed as \[(\Z_r \otimes \Z_s) \cdot \CAN(b_1, b_2, 0) \cdot (\Z_t \otimes \Z_u).\]
\end{lemma}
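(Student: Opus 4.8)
The plan is to show that the operator $U = \CAN(a_1, a_2, 0) \cdot (\Z_d \otimes \Z_e) \cdot \XX_\beta$ has vanishing third canonical coordinate, i.e.\ $\Pi(U) = (b_1, b_2, 0)$ for some $b_1, b_2$; once that is established, \Cref{TraditionalWeylDecompTheorem} supplies local gates on either side, and the point is to argue that these local gates can be taken to be pure $\Z$--rotations. First I would exploit the block structure exhibited in \Cref{TraditionalWeylDecompTheorem}: when the third argument is zero, $\CAN(a_1, a_2, 0)$ is supported on the ``even'' subspace $\mathrm{span}\{\ket{00}, \ket{11}\}$ and the ``odd'' subspace $\mathrm{span}\{\ket{01}, \ket{10}\}$ separately, acting within each $2$--dimensional block as a rotation by $a_1 \mp a_2$. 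The same is true of $\XX_\beta = \CAN(\beta, 0, 0)$, and a product of operators each of this ``anti-diagonal-plus-diagonal'' shape again has this shape. The factor $\Z_d \otimes \Z_e$ is diagonal, hence also preserves the two blocks. So $U$ itself decomposes as a direct sum of a $2\times 2$ block on the even subspace and a $2\times 2$ block on the odd subspace.

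Next I would compute these two $2\times 2$ blocks explicitly as products of $\mathit{SU}(2)$ (up to phase) matrices — one block being a product of the form $R_1 \cdot D \cdot R_2$ where $R_1, R_2$ are real rotations coming from the $\CAN$ factors and $D = \mathrm{diag}(e^{i(d+e)/2}, e^{-i(d+e)/2})$ or similar from the $\Z\otimes\Z$ factor. From this one reads off that $U$, in the basis ordered $(\ket{00},\ket{11},\ket{01},\ket{10})$, lies in the subgroup of $\PU(4)$ generated by $\XX$--type canonical gates together with $\Z$--rotations on the two wires — equivalently, $U$ is conjugate by the magic basis to a direct sum of rotations, which is exactly the characterization of operators with $a_3 = 0$. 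Concretely, I would verify $\Pi(U) = (b_1, b_2, 0)$ by checking that the eigenvalues of $U U^T$ (in the appropriate basis realizing the standard Cartan involution) come in the pattern $\{e^{\pm 2i(b_1 - b_2)}, e^{\pm 2i(b_1+b_2)}\}$ with no contribution from a nonzero $a_3$; this is a short determinant/trace computation on the two blocks.

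The key remaining step — and the one I expect to be the main obstacle — is upgrading from ``$U$ is locally equivalent to $\CAN(b_1,b_2,0)$'' to ``$U = (\Z_r \otimes \Z_s)\cdot \CAN(b_1,b_2,0)\cdot(\Z_t\otimes\Z_u)$'' with honest $\Z$--rotations rather than general local gates. The point is that both $\CAN(a_1,a_2,0)\cdot(\Z_d\otimes\Z_e)\cdot\XX_\beta$ and $\CAN(b_1,b_2,0)$ commute with $\Z\otimes\Z$ up to the block decomposition, so both are built from the abelian family of ``$\XX$--and--$\Z$'' operators; a dimension count shows the group generated by $\{\CAN(b_1,b_2,0)\}\cup\{\Z_\phi\otimes\Z_\psi\}$ is exactly the set of operators preserving the even/odd blocks and acting on each by an arbitrary $\mathit{U}(2)$, which is where $U$ already lives. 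So one obtains $r,s,t,u,b_1,b_2$ by solving, within each $2\times2$ block, a pair of equations matching the $\mathit{SU}(2)$ Euler-type decomposition — the even block fixes $b_1-b_2$ and two phase combinations, the odd block fixes $b_1+b_2$ and two more, and the four phase constraints determine $r+t$, $s+u$, $r-t$, $s-u$ (hence $r,s,t,u$) while consistency of the two $b$--equations is automatic because $U$ was assumed to arise from the stated circuit. I would present the block computation in detail and then note that the matching of phases is the same elementary argument used for one-qubit $\Z\Y\Z$ decomposition, applied once per block.
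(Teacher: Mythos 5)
Your proposal is correct, and it is essentially the paper's argument unwound into coordinates: the paper identifies the $6$--dimensional Lie subalgebra $\mathfrak g = \operatorname{span}\{\XX,\YY,\XY,\YX,\ZI,\IZ\} \subset \pu_4$, declares $\mathfrak k = \operatorname{span}\{\IZ,\ZI\}$ and $\mathfrak a = \operatorname{span}\{\XX,\YY\}$, and appeals in one line to the $KAK$ theorem for that subgroup. Your block decomposition on $\operatorname{span}\{\ket{00},\ket{11}\} \oplus \operatorname{span}\{\ket{01},\ket{10}\}$ and the per-block Euler decomposition is exactly the explicit content of that $KAK$ decomposition, so the two proofs buy you the same thing, yours being the more hands-on version (which is also effectively what \Cref{InterferenceExpressions} makes quantitative in the next lemma).

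Two small corrections worth noting. First, your dimension count overshoots: the group generated by $\CAN(\cdot,\cdot,0)$ and $\Z\otimes\Z$ is not the full $7$--dimensional block-preserving subgroup of $\PU(4)$ (which also contains $\ZZ$--rotations, adjusting the relative phase between blocks), but a codimension-$1$ subgroup of dimension $6$. This is harmless for your argument, since $U$ is manifestly a product of generators and hence lies in the subgroup regardless, but the stronger claim as stated is false. Second, structuring the proof as ``first show $\Pi(U)=(b_1,b_2,0)$, then upgrade the local gates to $\Z$--rotations'' is logically circuitous: $a_3=0$ is a consequence of the stronger decomposition you ultimately establish, not a prerequisite for it, and establishing it separately does not actually simplify the ``upgrading'' step. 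The cleaner framing, which is what you in fact do in the second half of the proposal, is to observe directly that $U$ preserves the two blocks and then solve the per-block $U(2)$ Euler problems.
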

\begin{proof}
The vector subspace \[\mathfrak g = \operatorname{span} \{\XX, \YY, \XY, \YX, \ZI, \IZ\}\] forms a Lie subalgebra of $\pu_4$, and
the subspaces
\begin{align*}
    \mathfrak a & = \operatorname{span}\{\XX, \YY\}, &
    \mathfrak k & = \operatorname{span}\{\IZ, \ZI\}
\end{align*}
give rise to a $KAK$ decomposition yielding the desired result.
\end{proof}

Next, we note that this choice of simple local gates gives rise to the desired explicit expressions for the gate parameters.

\begin{lemma}\label{InterferenceExpressions}
Except for the outer parameters $r$, $s$, $t$, and $u$, the parameters in \Cref{XXYYProductsOpaque} are related by the equations
\begin{align*}
\frac{c_{a_1 - a_2}^2 c_\beta^2 + s_{a_1 - a_2}^2 s_\beta^2 - c_{b_1 - b_2}^2}{2 c_{a_1 - a_2} c_\beta s_{a_1 - a_2} s_\beta} & = c_{2(d+e)}, \\
\frac{c_{a_1 + a_2}^2 c_\beta^2 + s_{a_1 + a_2}^2 s_\beta^2 - c_{b_1 + b_2}^2}{2 c_{a_1 + a_2} c_\beta s_{a_1 + a_2} s_\beta} & = c_{2(d-e)}.
\end{align*}
The outer parameters $r$, $s$, $t$, and $u$ can then be deduced from a linear system with input the phases of the top half of the left-hand matrix.
\end{lemma}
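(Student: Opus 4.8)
The plan is to compute both sides of the asserted identities directly from the explicit matrix form of $\CAN$ given in \Cref{TraditionalWeylDecompTheorem}, exploiting the block structure that persists throughout the product $U = \CAN(a_1, a_2, 0) \cdot (\Z_d \otimes \Z_e) \cdot (\XX_\beta)$. First I would record that $\CAN(a_1, a_2, 0)$, $\Z_d \otimes \Z_e$, and $\XX_\beta = \CAN(\beta, 0, 0)$ are each block-diagonal with respect to the partition of basis states into the ``outer'' pair $\{00, 11\}$ and the ``inner'' pair $\{01, 10\}$ (the $\Z$--rotation factor is fully diagonal, while the two $\CAN$ factors are anti-diagonal within each $2\times 2$ block). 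Hence $U$ decomposes as a $2\times 2$ block on the outer subspace and a $2\times 2$ block on the inner subspace, and each block is a product of three explicit $2\times 2$ matrices whose entries involve $c_{a_1 \pm a_2}$, $s_{a_1 \pm a_2}$, $c_\beta$, $s_\beta$, and the phases $e^{\pm i d}, e^{\pm i e}$.

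Next I would invoke \Cref{XXYYProductsOpaque}, which tells us $U$ is locally equivalent (via $\Z$--rotations on each side) to $\CAN(b_1, b_2, 0)$; since conjugation by $\Z_r \otimes \Z_s$ on the left and $\Z_t \otimes \Z_u$ on the right again preserves the outer/inner block decomposition and only adjusts phases of entries, the \emph{magnitudes} of the entries of the outer block of $U$ must match those of the outer block of $\CAN(b_1, b_2, 0)$, namely $|c_{b_1-b_2}|$ on the diagonal and $|s_{b_1-b_2}|$ off-diagonal; similarly the inner block has magnitudes $|c_{b_1+b_2}|$ and $|s_{b_1+b_2}|$. I would then compute the modulus-squared of a diagonal entry of the outer block of $U$ as an explicit trigonometric expression in $a_1 - a_2$, $\beta$, and $d + e$. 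Because the outer block is a product of (phase)$\cdot$(rotation-like matrix)$\cdot$(phase)$\cdot$(diagonal)$\cdot$(phase)$\cdot$(rotation-like matrix)$\cdot$(phase), its diagonal entry is a sum of two terms whose relative phase is $2(d+e)$ (the $\Z$--rotation angles entering with the doubling because they appear symmetrically in the conjugation-like sandwich); expanding $|{\cdot}|^2$ produces exactly $c_{a_1-a_2}^2 c_\beta^2 + s_{a_1-a_2}^2 s_\beta^2 + 2 c_{a_1-a_2} c_\beta s_{a_1-a_2} s_\beta \, c_{2(d+e)}$. Setting this equal to $c_{b_1-b_2}^2$ and solving for $c_{2(d+e)}$ gives the first displayed equation; the identical computation on the inner block, where the phase combination works out to $2(d-e)$, gives the second. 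Finally, for the outer parameters: once $b_1, b_2$ (equivalently the entry magnitudes) and $d, e$ are fixed, the phases of the four entries of the outer block of $U$ are determined explicit numbers, and the requirement that $\Z_r \otimes \Z_s$ and $\Z_t \otimes \Z_u$ conjugate $\CAN(b_1,b_2,0)$ to match these phases is a system of linear equations in $r, s, t, u$ (each entry's phase being an integer-coefficient linear combination of $r,s,t,u$); solving it recovers the outer parameters, which is the last sentence of the statement.

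The main obstacle I anticipate is purely bookkeeping rather than conceptual: correctly tracking how the four $\Z$--rotation angles $d, e, r, s, t, u$ distribute across the entries of each $2\times 2$ block and verifying that the cross-term phase collapses to precisely $2(d+e)$ (resp.\ $2(d-e)$) and not some other combination — this requires care about which side each $\Z$ sits on and the anti-diagonal structure of the $\CAN$ factors, since a sign error there would corrupt both identities. A secondary subtlety is confirming that the sign ambiguities in passing from $c_{b_i \pm b_j}^2$ back to $c_{b_i \pm b_j}$ are harmless for the stated relations (they are, since only the squared cosines appear), and that the degenerate cases where a denominator $2 c_{a_1 \mp a_2} c_\beta s_{a_1 \mp a_2} s_\beta$ vanishes are exactly the locally-degenerate configurations that can be handled separately; I would remark on these rather than belabor them.
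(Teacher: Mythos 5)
Your approach is essentially the same as the paper's: restrict to the outer and inner $2\times 2$ blocks, equate the squared magnitudes of the resulting diagonal entries against $c_{b_1 \mp b_2}^2$ to extract the two trigonometric relations, then read the phases of the four nonzero upper entries to produce a full-rank linear system for $r, s, t, u$.

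One concrete slip, which you yourself flagged as the main risk: the cross term in your displayed expansion of the squared magnitude should carry a minus sign, not a plus. In the outer block the off-diagonal entries of both $\CAN(a_1,a_2,0)$ and $\XX_\beta$ carry a factor of $-i$, so the product of the two off-diagonal contributions introduces $(-i)(-i) = -1$; the diagonal entry of $U$'s outer block therefore has the form $c_\beta c_{a_1-a_2} - e^{2i(d+e)} s_\beta s_{a_1-a_2}$ (up to an overall phase), and squaring the magnitude gives
\[
c_{a_1-a_2}^2 c_\beta^2 + s_{a_1-a_2}^2 s_\beta^2 - 2\, c_{a_1-a_2} c_\beta\, s_{a_1-a_2} s_\beta\, c_{2(d+e)},
\]
matching the lemma's statement. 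As written, your expansion yields the negative of the lemma's expression for $c_{2(d+e)}$. (This sign does not affect the solvability bounds of \Cref{InterferenceInequalities}, since $|c_{2(d+e)}| \le 1$ is sign-invariant, but it does matter for the recovered value of $d + e$.) With that correction, your argument is the paper's.
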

\begin{proof}
The trigonometric equalities follow by equating the square-norms of the matrix entries in \Cref{XXYYProductsOpaque}.
The $(1, 1)$ and $(3, 3)$ entries respectively yield
\begin{align*}
    |c_{b_1 - b_2}|^2 & = |c_\beta c_{a_1 - a_2} - e^{2i(d+e)} s_\beta s_{a_1 - a_2}|^2, \\
    |c_{b_1 + b_2}|^2 & = |c_\beta c_{a_1 + a_2} - e^{-2i(d-e)} s_\beta s_{a_1 + a_2}|^2,
\end{align*}
where we have used the absolute values to suppress some of the phases.
We then apply the identity \[|x + r e^{i \theta}|^2 = x^2 + r^2 + 2 x r \cos \theta\] and isolate $d$ and $e$ to deduce the statement.

The linear system then arises by inspecting the phases of any nondegenerate quadruple of entries.
For example, the nonzero entries in the top half, read left-to-right, have respective phases
\begin{align*}
\exp(-i(r+s+t+u)), \\
\exp(-i(r-s+t+u), \\
-i \exp(-i(-r+s-t+u)), \\
-i \exp(-i(r+s-t-u)).
\end{align*}
This collection of linear combinations is of full rank.
\end{proof}

\noindent
We can interpret the constraints imposed by these expressions on the positive canonical triples in terms of $\beta$.

\begin{theorem}[``Interference inequalities'']\label{InterferenceInequalities}
For positive canonical triples $(a_1, a_2, 0)$ and $(b_1, b_2, 0)$ and for $\beta$ an interaction strength, there exist parameters $d$ and $e$ satisfying \[\CAN(a_1, a_2, 0) \cdot (\Z_d \otimes \Z_e) \cdot \XX_\beta \equiv \CAN(b_1, b_2, 0)\] if and only if the following inequalities hold:%
\footnote{%
It is extremely unusual that image under $\Pi$ of a circuit with constrained local gates is again a polytope.
This is, perhaps, the most important ingredient in our approach.
}
\begin{align*}
a_1 + a_2 - \beta & \le b_1 + b_2 \le \frac{\pi}{2} - \left| \frac{\pi}{2} - (a_1 + a_2 + \beta) \right|, \\
| a_1 - a_2 - \beta | & \le b_1 - b_2 \le a_1 - a_2 + \beta.
\end{align*}
Moreover, the local gates witnessing the equivalence can be taken to be $\Z$--rotations.
\end{theorem}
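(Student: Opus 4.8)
The plan is to reduce the statement to solvability --- in the two single-qubit phases $d$ and $e$ --- of the trigonometric identities of \Cref{InterferenceExpressions}. By \Cref{XXYYProductsOpaque}, for any $d,e$ the operator $\CAN(a_1,a_2,0)\cdot(\Z_d\otimes\Z_e)\cdot\XX_\beta$ equals $(\Z_r\otimes\Z_s)\cdot\CAN(b_1,b_2,0)\cdot(\Z_t\otimes\Z_u)$ for suitable $\Z$--rotations and some $(b_1,b_2,0)$, and by \Cref{InterferenceExpressions} the squared cosines $c_{b_1-b_2}^2$ and $c_{b_1+b_2}^2$ are the two right-hand fractions, which involve only $(a_1,a_2,\beta)$ and the \emph{independent} quantities $d+e$ and $d-e$. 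Since $b_1\pm b_2\in[0,\tfrac{\pi}{2}]$ for a positive canonical triple $(b_1,b_2,0)$, these squared cosines --- equivalently, the magnitudes of the entries of $\CAN(b_1,b_2,0)$, which $\Z$--rotations merely rephase --- pin down $(b_1,b_2)$. Hence, after setting aside the degenerate cases where a denominator vanishes ($\beta=0$, $a_1=a_2$, or $a_1+a_2\in\{0,\tfrac{\pi}{2}\}$, each easily handled directly), the desired equivalence, witnessed by $\Z$--rotations, exists for \emph{some} $d,e$ if and only if the two equations of \Cref{InterferenceExpressions} are solvable for $d+e$ and $d-e$ respectively; and since $\cos$ surjects onto $[-1,1]$ this amounts to requiring that both fractions lie in $[-1,1]$.

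It remains to unpack ``both fractions lie in $[-1,1]$''. On the ranges in play the denominators are nonnegative: $a_3=0$ forces $a_1\le\tfrac{\pi}{4}$, so $a_1-a_2\in[0,\tfrac{\pi}{4}]$, $a_1+a_2\in[0,\tfrac{\pi}{2}]$, and $\beta\in[0,\tfrac{\pi}{4}]$. Clearing denominators and collecting terms with $c_x^2c_y^2+s_x^2s_y^2\pm 2c_xs_xc_ys_y=c_{x\mp y}^2$, the membership conditions become $c_{(a_1-a_2)+\beta}^2\le c_{b_1-b_2}^2\le c_{(a_1-a_2)-\beta}^2$ and $c_{(a_1+a_2)+\beta}^2\le c_{b_1+b_2}^2\le c_{(a_1+a_2)-\beta}^2$; in the ``interference'' picture, $c_{b_1\pm b_2}$ must occur as the modulus of a point on the circle of radius $s_\beta s_{a_1\pm a_2}$ about $c_\beta c_{a_1\pm a_2}$, and these moduli sweep the interval $[|c_{(a_1\pm a_2)+\beta}|,\,c_{|(a_1\pm a_2)-\beta|}]$. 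To pass back to angles I would verify --- a short check from the ranges above --- that each of $b_1-b_2$, $b_1+b_2$, $|(a_1-a_2)\pm\beta|$, $|(a_1+a_2)-\beta|$, and $\min\{(a_1+a_2)+\beta,\ \pi-(a_1+a_2)-\beta\}=\tfrac{\pi}{2}-|\tfrac{\pi}{2}-(a_1+a_2+\beta)|$ lies in $[0,\tfrac{\pi}{2}]$, on which $\theta\mapsto c_\theta$ is strictly decreasing, and that $|c_{(a_1-a_2)+\beta}|=c_{(a_1-a_2)+\beta}$ while $|c_{(a_1+a_2)+\beta}|=c_{\pi/2-|\pi/2-(a_1+a_2+\beta)|}$. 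Reading off angles yields $|(a_1-a_2)-\beta|\le b_1-b_2\le(a_1-a_2)+\beta$ and $|(a_1+a_2)-\beta|\le b_1+b_2\le\tfrac{\pi}{2}-|\tfrac{\pi}{2}-(a_1+a_2+\beta)|$.

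The last step is to reconcile this with the stated inequalities, which differ only in carrying the lower bound $a_1+a_2-\beta$ rather than $|a_1+a_2-\beta|$ on $b_1+b_2$. These agree when $a_1+a_2\ge\beta$; when $a_1+a_2<\beta$ one has $a_1-a_2-\beta<0$, and the inequalities already obtained give $b_1+b_2\ge b_1-b_2\ge|a_1-a_2-\beta|=\beta-a_1+a_2\ge\beta-a_1-a_2=|a_1+a_2-\beta|$ using $b_2,a_2\ge0$, so the stronger bound is automatically forced and the two systems describe the same set; the ``moreover'' is then automatic, since every local gate produced along the way is a $\Z$--rotation. I expect the main obstacle to be exactly the range bookkeeping in the middle paragraph: the passage from ``fraction in $[-1,1]$'' to the displayed linear inequalities rests on every angle in sight landing in $[0,\tfrac{\pi}{2}]$ (so that its double lies in $[0,\pi]$, where $\cos$ is injective), which in turn leans on the hypothesis $a_3=0$ via $a_1\le\tfrac{\pi}{4}$; a secondary nuisance is the small list of degenerate cases where a denominator of \Cref{InterferenceExpressions} vanishes and the first paragraph's reduction must be rerun by hand.
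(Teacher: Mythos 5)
Your proposal follows the same route as the paper's own proof: reduce to solvability of the trigonometric identities in \Cref{InterferenceExpressions}, rewrite the bounds via $c_x^2 c_y^2 + s_x^2 s_y^2 \pm 2 c_x s_x c_y s_y = c_{x\mp y}^2$, and pass back to linear inequalities on angles using monotonicity and reflection invariance of cosine on the relevant ranges. You are slightly more careful at two points the paper passes over silently --- the degenerate cases where a denominator in \Cref{InterferenceExpressions} vanishes, and the reconciliation of the ``raw'' lower bound $|a_1 + a_2 - \beta|$ coming out of the cosine argument with the weaker $a_1 + a_2 - \beta$ in the theorem statement (which you correctly observe is forced equivalent by the $b_1 - b_2$ inequality together with $b_2, a_2 \ge 0$) --- so the argument is complete and correct.
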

\begin{proof}
Starting from \Cref{InterferenceExpressions}, there exist solutions to $d$ and $e$ exactly when the following inequalities are met:
\begin{align*}
|c_{a_1 - a_2}^2 c_\beta^2 + s_{a_1 - a_2}^2 s_\beta^2 - c_{b_1 - b_2}^2| & \le |2 c_{a_1 - a_2} c_\beta s_{a_1 - a_2} s_\beta|, \\
|c_{a_1 + a_2}^2 c_\beta^2 + s_{a_1 + a_2}^2 s_\beta^2 - c_{b_1 + b_2}^2| & \le |2 c_{a_1 + a_2} c_\beta s_{a_1 + a_2} s_\beta|.
\end{align*}
Using the inequalities $a_1 + a_2 \le \frac{\pi}{2}$, $a_1 \ge a_2$, and $0 \le \beta \le \frac{\pi}{4}$, we see that both the right-hand quantities are always positive, hence the right-hand absolute value can be suppressed.
The left-hand absolute value can be equivalently expressed as a pair of inequalities, giving
\[-2 c_{a_1 - a_2} c_\beta s_{a_1 - a_2} s_\beta \le c_{a_1 - a_2}^2 c_\beta^2 + s_{a_1 - a_2}^2 s_\beta^2 - c_{b_1 - b_2}^2,\]
\[c_{a_1 - a_2}^2 c_\beta^2 + s_{a_1 - a_2}^2 s_\beta^2 - c_{b_1 - b_2}^2 \le 2 c_{a_1 - a_2} c_\beta s_{a_1 - a_2} s_\beta,\]
\[-2 c_{a_1 + a_2} c_\beta s_{a_1 + a_2} s_\beta \le c_{a_1 + a_2}^2 c_\beta^2 + s_{a_1 + a_2}^2 s_\beta^2 - c_{b_1 + b_2}^2,\]
\[c_{a_1 + a_2}^2 c_\beta^2 + s_{a_1 + a_2}^2 s_\beta^2 - c_{b_1 + b_2}^2 \le 2 c_{a_1 + a_2} c_\beta s_{a_1 + a_2} s_\beta.\]
Factoring the quadratics then yields
\[(c_{a_1 - a_2} c_\beta - s_{a_1 - a_2} s_\beta)^2 \le c_{b_1 - b_2}^2,\]
\[c_{b_1 - b_2}^2 \le (c_{a_1 - a_2} c_\beta + s_{a_1 - a_2} s_\beta)^2,\]
\[(c_{a_1 + a_2} c_\beta - s_{a_1 + a_2} s_\beta)^2 \le c_{b_1 + b_2}^2,\]
\[c_{b_1 + b_2}^2 \le (c_{a_1 + a_2} c_\beta + s_{a_1 + a_2} s_\beta)^2.\]
Rewriting the binomials as cosines of differences / sums and then converting square cosines to double-angle cosines yields
\begin{align*}
c_{2(a_1 - a_2 + \beta)} & \le c_{2(b_1 - b_2)} \le c_{2(a_1 - a_2 - \beta)}, \\
c_{2(a_1 + a_2 + \beta)} & \le c_{2(b_1 + b_2)} \le c_{2(a_1 + a_2 - \beta)}.
\end{align*}
Finally, we use the piecewise monotonicity and reflection invariance of cosine, as well as the bounds on the inputs, to deduce inequalities on the angles:
\[2(a_1 - a_2 + \beta) \ge 2(b_1 - b_2),\]
\[2(b_1 - b_2) \ge \max\{2(a_1 - a_2 - \beta), -2(a_1 - a_2 - \beta)\},\]
\[\min\{2(a_1 + a_2 + \beta), 2\pi - 2(a_1 + a_2 + \beta)\} \ge 2(b_1 + b_2).\]
\[2(b_1 + b_2) \ge 2(a_1 + b_2 - \beta).\]
Linear rearrangement yields the claimed inequality family.
\end{proof}


\begin{example}
\begin{figure}
    \centering
    \includegraphics[width=0.4\textwidth, trim={2cm 0.5cm 11cm 4cm}, clip]{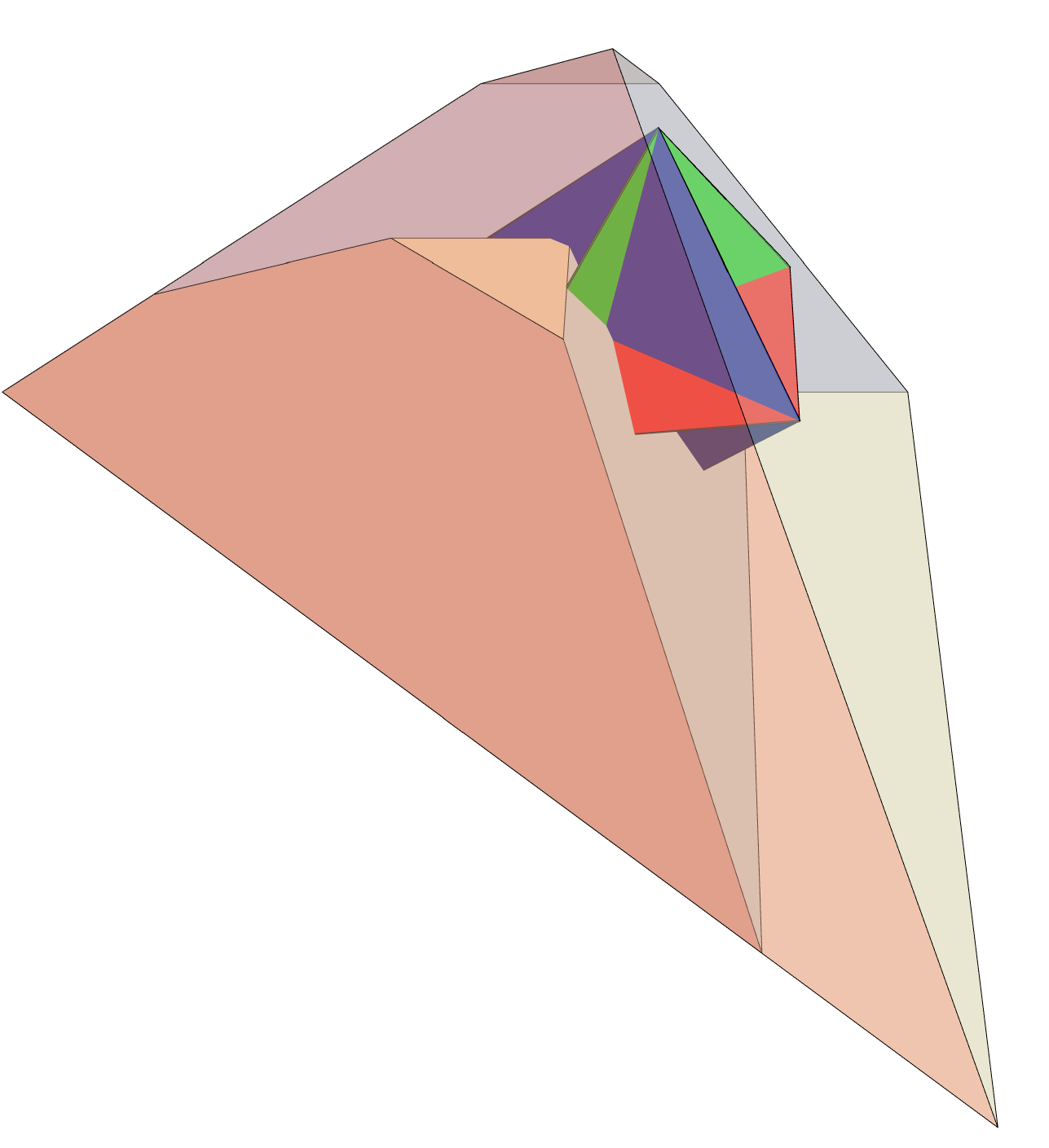}
    \includegraphics[width=0.4\textwidth, trim={6cm 7cm 1.5cm 9cm}, clip]{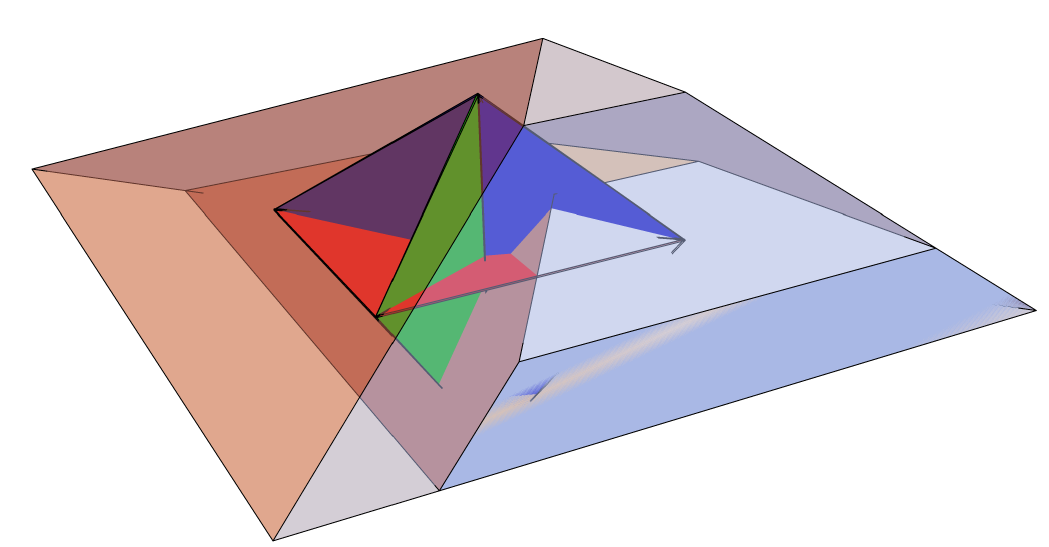}
    \caption{%
    Two perspectives on the configuration of polytopes in \Cref{InterferenceInequalities}.
    The solid, inner polytope is the $\XX$--circuit polytope for a sequence of interaction strengths $(\alpha_1, \ldots, \alpha_n)$, and the translucent outer polytope is the $\XX$--circuit polytope for the sequence $(\alpha_1, \ldots, \alpha_n, \beta)$.
    We have chosen a particular point on the edge of the inner figure and drawn the regions accessible from it by applying \Cref{InterferenceInequalities} for the interaction strength $\beta$ to any pair of coordinates (red: $\XX$ and $\YY$; green: $\YY$ and $\ZZ$; blue: $\XX$ and $\ZZ$).
    These appear as a triple of flat polygons.
    \Cref{MainLocalTheorem} states that when the point on the inner body is permitted to range, the union of the corresponding translates of the red, green, and blue polygons sweeps out the entirety of the outer body.
    }
    \label{InterferencePlanes}
\end{figure}
In \Cref{InterferencePlanes}, we give a visualization of the regions accessible via \Cref{InterferenceInequalities}.
\end{example}

\begin{figure}
    \centering
    \scalebox{0.7}{
    \Qcircuit @C=1.0em @R=0.2em @!R { \\
	 	& \gate{W_0^\dagger \Z_{-r} W_0'{}^\dagger} & \multigate{1}{\CAN(a_1, a_2, a_3)} & \gate{W_0' W_0'' \Z_d} & \multigate{1}{\XX_\beta} & \gate{\Z_{-t} W_0} & \qw \\ 
	 	& \gate{W_1^\dagger \Z_{-s} W_1'{}^\dagger} & \ghost{\CAN(a_1, a_2, a_3)} & \gate{W_1' \Z_e} & \ghost{\XX_\beta} & \gate{\Z_{-u} W_1} & \qw \\
    \\ }}
    \caption{%
    The circuit emitted by a typical single step of \Cref{EffectiveSynthesis} (i.e., by \Cref{MainLocalTheorem}), modeling $\CAN(b_1, b_2, b_3)$ in terms of $\CAN(a_1, a_2, a_3)$, $\XX_\beta$, and local gates.
    The gates $W, W', W''$ are quarter-turns which realize the action of Weyl group elements on $a, b \in \A_{C_2}$.
    }
    \label{SynthesisStepCircuit}
\end{figure}

\begin{theorem}[{cf.\ \Cref{SynthesisStepCircuit}}]\label{MainLocalTheorem}
Given a positive canonical triple $(b_1, b_2, b_3)$ satisfying the conditions of \Cref{MainGlobalTheorem} for a sequence of interaction strengths $(\alpha_1, \ldots, \alpha_n, \beta)$, there always exists a positive canonical triple $(a_1, a_2, a_3)$ satisfying the conditions of \Cref{MainGlobalTheorem} for the sequence $(\alpha_1, \ldots, \alpha_n)$ and for which there are Weyl reflections $w$, $w'$ so that the following is solvable:
\begin{align*}
& \hspace{0em} \CAN(a_1, a_2, a_3)^w \cdot (Z_d \otimes Z_e) \cdot \CAN(\beta) \\
& \hspace{12em}\equiv \CAN(b_1, b_2, b_3)^{w'}.
\end{align*}
The outer gates witnessing the local equivalence can be taken to be $Z$-rotations.
\end{theorem}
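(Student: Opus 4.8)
### Proof proposal

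The plan is to reduce \Cref{MainLocalTheorem} to \Cref{InterferenceInequalities} by a careful bookkeeping argument on the Weyl alcove, handling the final coordinate $b_3$ separately from the pair $(b_1, b_2)$. First I would observe that the operator $\CAN(\beta)$ appearing here is of $\XX$--type, so it does not touch the $\ZZ$--direction directly; the role of $b_3$ must be played by the prefix circuit via the Weyl reflection $w$. So I would begin by setting $a_3 := b_3$ as a first attempt and asking whether \Cref{InterferenceInequalities}, suitably conjugated by Weyl elements to pull $a_3$ and $b_3$ out of the picture, can be solved for $(a_1, a_2)$ and $(b_1, b_2)$ given that $(b_1, b_2, b_3)$ lies in the circuit polytope for $(\alpha_1, \ldots, \alpha_n, \beta)$. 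Concretely: the frustrum bound on $(b_1, b_2, b_3)$ reads $\min_{k \ne \ell}\alpha_+ - \alpha_k - \alpha_\ell \ge b_3$ after absorbing $\beta$, and this is exactly (a shift of) the frustrum bound for the shorter sequence applied to $a_3 = b_3$; so the $\ZZ$--coordinate carries through essentially for free once we notice $\beta$ contributes nothing to the upper bound on $b_3$ beyond what $(\alpha_1,\ldots,\alpha_n)$ already provides (using the Important Remark's zero-padding convention).

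With $a_3$ chosen, the real content is to produce $(a_1, a_2)$ with $(a_1, a_2, a_3)$ in the shorter circuit polytope and with the two interference inequalities of \Cref{InterferenceInequalities} satisfied for the target $(b_1, b_2)$; the latter are linear constraints coupling $b_1 \pm b_2$ to $a_1 \pm a_2 \pm \beta$. I would change to the sum/difference coordinates $\sigma_a = a_1 + a_2$, $\delta_a = a_1 - a_2$ and likewise for $b$. In these coordinates \Cref{InterferenceInequalities} asserts that $\sigma_b$ lies in an interval of radius determined by $\beta$ around $\sigma_a$ (folded at $\pi/2$), and $\delta_b$ lies in an interval of radius $\beta$ around $\delta_a$; equivalently, $(\sigma_a, \delta_a)$ must lie in a small ``interference box'' (with one folded side) around $(\sigma_b, \delta_b)$. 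So the claim becomes: this interference box, intersected with the (sum/difference image of the) shorter circuit polytope's slice at height $a_3 = b_3$, is nonempty. I would prove nonemptiness by exhibiting a corner: push $(\sigma_a, \delta_a)$ as far as allowed toward $(\sigma_b, \delta_b)$ subject to the shorter polytope's strength, slant, and frustrum bounds, and check via the global inequalities for the longer sequence --- which $(b_1,b_2,b_3)$ satisfies by hypothesis --- that the resulting point still satisfies all three bounds of \Cref{MainGlobalTheorem} for $(\alpha_1, \ldots, \alpha_n)$. This is a finite case analysis matched against the two convex bodies of \Cref{MainGlobalTheorem}, essentially the ``reverse'' of the Fourier--Motzkin step in that theorem's proof: there we projected $(a_1,a_2,a_3)$ away to get constraints on $(b_1,b_2,b_3)$; here we must undo that projection constructively, and because the projection was computed explicitly the preimage point can be read off.

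The role of the Weyl reflections $w, w'$ is to move $(a_1,a_2,a_3)$ and $(b_1,b_2,b_3)$ out of the fundamental alcove into the enlarged ``disordered'' region of \Cref{EarlyDisorderedRemark,DisorderedMainTheorem}, where \Cref{XXYYProductsOpaque,InterferenceExpressions,InterferenceInequalities} were actually proved without the sorting/positivity normalization --- in particular they let us realize the $b_3$--shift by an honest group element rather than a mere coordinate identification, and they let us land in the specific signed region where the interference inequalities (not their reflected variants) are the binding ones. I would identify the needed $w, w'$ from the permutation implicit in ``$\min_{k\ne\ell}$'' and from the $a_1 \leftrightarrow \frac{\pi}{2} - a_1$ symmetry relating the two convex bodies of \Cref{MainGlobalTheorem}. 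The main obstacle I anticipate is exactly this $b_3$ bookkeeping across the Weyl action: showing that a single simultaneous choice of $w$ and $w'$ makes the $\ZZ$--coordinate match up \emph{and} places the $(b_1, b_2)$--data in the unreflected branch of \Cref{InterferenceInequalities}, rather than needing an incompatible pair of reflections --- and confirming that when $(b_1,b_2,b_3)$ sits in the second convex body of \Cref{MainGlobalTheorem} (the $a_1 \mapsto \pi/2 - a_1$ copy) the argument still goes through, which should follow by applying that linear symmetry to reduce to the first body. The last sentence, that the outer gates are $Z$--rotations, is inherited verbatim from the ``moreover'' clause of \Cref{InterferenceInequalities} together with the fact that Weyl reflections are implemented by local quarter-turns (the $W, W', W''$ of \Cref{SynthesisStepCircuit}), which conjugate $Z$--rotations to $Z$--rotations up to relabeling.
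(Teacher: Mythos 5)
Your proposal has a genuine gap at its starting point. You open by setting $a_3 := b_3$ and claiming the $\ZZ$--coordinate ``carries through essentially for free'' because the frustrum bound for the longer sequence is just a shift of the one for the shorter sequence. This is exactly backwards: the frustrum upper bound for $(\alpha_1, \ldots, \alpha_n, \beta)$ is $\min\{\alpha_+ + \beta - \alpha' - \alpha'',\ \alpha_+ - \alpha',\ \alpha_+ + \beta - \beta - \alpha''\}$, which is \emph{strictly larger} than $\alpha_+ - \alpha' - \alpha''$ whenever $\beta > 0$ and $\alpha'' > 0$. So a target $b_3$ lying near the top of the longer polytope's frustrum face cannot serve as $a_3$ for the shorter sequence at all. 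The paper handles this by letting the ``fixed'' coordinate $a_f$ be either $b_1$ or $b_3$ (never forced to be $b_3$), and \Cref{RelevantConvexSummands} records that both choices really do arise depending on which region of the circuit polytope contains the target. Your mention of Weyl reflections is pointing in the right direction --- they are precisely what permits putting any of the three $b$-coordinates in the fixed slot --- but once you've committed to $a_3 = b_3$ that freedom is gone, and no post hoc Weyl action on the remaining pair recovers it.

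The deeper issue is the step ``push $(\sigma_a, \delta_a)$ as far as allowed toward $(\sigma_b, \delta_b)$ and check.'' That check \emph{is} the theorem. The paper does not regard it as routine: the argument is to construct the polytope $P$ of simultaneously compatible $(a, b)$-pairs (across the several choices of fixed coordinate, ordering, and reflected/unreflected convex body), project away the $a$-coordinates, and verify by computer algebra (\lrs) that the projection equals the full circuit polytope of \Cref{MainGlobalTheorem} for the longer sequence. \Cref{TechnophobeRemark} notes this can in principle be done by hand, but only by walking through the explicit list of four convex summands of \Cref{RelevantConvexSummands} against the explicit inequality tables of \Cref{ProjectedInequalityTables,LiftedInequalityTables}. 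Your proposal correctly identifies the sum/difference coordinate change, the interference-box picture, and the role of the disordered alcove --- all genuine ingredients --- but it does not supply the case-by-case verification, and its single fixed-coordinate ansatz would make that verification fail even if attempted.
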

\begin{proof}
Any canonical gate $\CAN(b_1, b_2, b_3)$ can be written as
\begin{align*}
    \CAN(b_1, b_2, b_3) & = \CAN(0, 0, b_3) \cdot \CAN(b_1, b_2, 0).
    \intertext{%
    Applying \Cref{InterferenceInequalities} to the right factor gives
    }
    \CAN(b_1, b_2, b_3) & = \CAN(0, 0, b_3) \cdot \\
    & \quad \quad \cdot (\Z_{-r} \otimes \Z_{-s}) \cdot \CAN(a_1, a_2, 0) \cdot \\
    & \quad \quad \cdot (\Z_d \otimes \Z_e) \cdot \CAN(\beta) \cdot \\
    & \quad \quad \cdot (\Z_{-t} \otimes \Z_{-u}),
    \intertext{%
    under certain conditions on $a_1$, $a_2$, $b_1$, $b_2$, and $\beta$.
    Since local $Z$--rotations commute with canonical gates of the form $\CAN(0, 0, b_3)$, we may abbreviate this to
    }
    \CAN(b_1, b_2, b_3) & = (\Z_{-r} \otimes \Z_{-s}) \cdot \CAN(a_1, a_2, b_3) \cdot \\
    & \quad \quad \cdot (\Z_d \otimes \Z_e) \cdot \CAN(\beta) \cdot \\
    & \quad \quad \cdot (\Z_{-t} \otimes \Z_{-u}).
\end{align*}
Additionally, our choice to factor out $b_3$ is immaterial: there are Weyl reflections which permute the coordinates within a canonical triple, so by conjugating $\CAN(b_1, b_2, b_3)$ we can place any of the three values in the final slot.
In short, we may appeal to \Cref{InterferenceInequalities}, provided we fix one coordinate and potentially disorder the positive canonical triples.

From here, our proof strategy is similar to that of \Cref{MainGlobalTheorem}.
\Cref{MainGlobalTheorem} itself furnishes us with linear constraints on the spaces of triples $(b_1, b_2, b_3)$ so that a triple satisfies the constraints if and only it can be realized as the positive canonical coordinate of an $\XX$--circuit with interaction strengths $(\alpha_1, \ldots, \alpha_n, \beta)$.
Rather than working with ordered triples $(a_1, a_2, a_3)$, we instead consider unordered triples $(a_h, a_\ell, a_f)$---to be referred to as the ``high'', ``low'', and ``fixed'' coordinates---as in \Cref{DisorderedMainTheorem}.
Then, we interrelate the $a$-- and $b$--coordinates:
\begin{itemize}
    \item
    We select one coordinate $b_f$ from $(b_1, b_2, b_3)$ to serve as the ``fixed'' coordinate (and take the union over such choices), and we set $a_f = b_f$.
    \item
    On $a_h$ and $a_\ell$, we impose the constraint $a_h \ge a_\ell$.
    Similarly, of the remaining coordinates in $(b_1, b_2, b_3)$, we pick $b_h$ to be the larger and $b_\ell$ to be the smaller.
    \item We impose the constraints from \Cref{InterferenceInequalities} on $(a_h, a_\ell, 0)$, $(b_h, b_\ell, 0)$, and $\beta$.
\end{itemize}
Let us call the resulting (nonconvex) polytope $P$.
Points in $P$ capture the following interrelated pieces of data:
\begin{itemize}
    \item
    A canonical coordinate $(b_1, b_2, b_3)$ which admits expression as an $\XX$--circuit with interaction strengths $(\alpha_1, \ldots, \alpha_n, \beta)$.
    \item
    A canonical coordinate $(a_1, a_2, a_3)$ which admits expression as an $\XX$--circuit with interaction strengths $(\alpha_1, \ldots, \alpha_n)$.
    \item
    A choice of value to share among the $a$-- and $b$--coordinates.
    \item
    The condition that, among the unshared coordinates, there exists a circuit of the form in \Cref{XXYYProductsOpaque} relating them.
    (As in the first two bullets, the polytope does \emph{not} record the literal data of such a circuit, only the predicate that one exists.)
\end{itemize}

By projecting away $(a_h, a_\ell, a_f)$ from $P$, we produce the polytope of positive canonical triples $(b_1, b_2, b_3)$ which can be expressed as $\XX$--circuits with the specified interaction strengths, together with the predicate constraint that the last step in the circuit decomposition can be written in the form of the Theorem statement.
This is a subpolytope of that of \Cref{MainGlobalTheorem}, which merely tracks positive canonical triples which can be expressed as $\XX$--circuits with the specified interaction strengths, \emph{without} the constraint on the final local operator.
Appealing again to a computer algebra system, we find that these two polytopes are equal.

See \texttt{regenerate\_xx\_solution\_polytopes} in \texttt{monodromy}~\cite{monodromy} for an executable proof.
\end{proof}

\begin{remark}\label{RelevantConvexSummands}
Naively specified, the polytope $P$ in the proof of \Cref{MainLocalTheorem} has many convex components: the two convex regions of $a$-- and $b$--coordinates each contribute factors of $2$, the choice of which coordinate to fix contributes a factor of $3$, and the choice of which slant and frustrum bounds apply to the disordered $a$--coordinates contribute factors of $2$ and $3$.
However, the projection of $P$ onto the $b$--coordinates, which we used to conclude the theorem, can be shown to have only four regions:
\begin{itemize}
    \item
    The choice of convex region of $b$--coordinates is free, but one then uses the same choice for $a$--coordinates.
    \item
    The fixed coordinate $a_f$ is taken to be either $b_1$ or $b_3$.
    \item
    For the unreflected (resp., reflected) convex region of $b$--coordinates, the slant (resp., strength) inequality is imposed either on $a_f$ or $a_h$ depending on whether $a_f = b_1$ or $a_f = b_3$.
    \item
    The frustrum bound is always imposed on $a_\ell$.
\end{itemize}
The inequalities describing these regions are given in \Cref{ProjectedInequalityTables}.
\end{remark}

\begin{remark}\label{TechnophobeRemark}
It is possible for the technophobic reader to rearrange the proofs of \Cref{MainGlobalTheorem} and \Cref{MainLocalTheorem} so as to avoid computer algebra systems.
First, break \Cref{MainGlobalTheorem} into a forward implication, that the positive canonical triple associated to an $\XX$--circuit satisfies the indicated inequality set, and the reverse implication.
The forward implication can be checked by hand, using a judiciously chosen subset of inequalities from the monodromy polytope; the reverse implication is much harder from this point of view, so we set it aside for a moment.

Now we turn to \Cref{MainLocalTheorem}.
Its proof also relies on a computer algebra system, but we may severely limit the amount of work by inspecting only the convex summands described in \Cref{RelevantConvexSummands}, which is then small enough to accomplish manually.
With only the forward implication of \Cref{MainGlobalTheorem} established, the proof of \Cref{MainLocalTheorem} instead shows that if the $b$--coordinate belongs to the polytope named by \Cref{MainGlobalTheorem} for $(\alpha_1, \ldots, \alpha_n, \beta)$, then there exists an $a$--coordinate in the polytope named by \Cref{MainGlobalTheorem} for $(\alpha_1, \ldots, \alpha_n)$ which is related to the $b$--coordinate by a particular single-step $\XX$--circuit.
Following the induction described in \Cref{EffectiveSynthesis} then yields the missing reverse implication of \Cref{MainGlobalTheorem}, which in turn yields the full strength of \Cref{MainLocalTheorem}.
\end{remark}

\section{Optimal synthesis}\label{OptimalSynthesisSection}

We now put the pieces together to form an optimal synthesis routine.
The actual synthesis process is now straightforward, given in \Cref{EffectiveSynthesis}, but it is trickier to pin down exactly what is meant by ``optimal''.
For instance, the notion of optimality considered by Zhang et al.~\cite{ZVSWControlled} is to minimize two-qubit operation count---but in a larger gateset, where different gates may have uneven performance impact, optimizing count alone may not optimize performance.
Relatedly, if performance is the true goal and the performance penalty incurred for using gates is high, it may be preferable to synthesize a circuit modeling some canonical triple $a' \ne a$ which requires fewer gates, trading the performance hit due to the mismatch for performance gain of dropping some of the gates.

Let us begin with the synthesis procedure itself:

\begin{procedure}[{cf.\ \Cref{SynthesisStepCircuit}}]\label{EffectiveSynthesis}
The existence claim of \Cref{MainLocalTheorem} can be promoted into an algorithmically effective synthesis routine.
Given a sequence of interaction strengths $(\alpha_1, \ldots, \alpha_n, \beta)$ and a positive canonical triple $(b_1, b_2, b_3)$ which belongs to the associated circuit polytope, the polytope $P$ from the proof of \Cref{MainLocalTheorem} can then be specialized so that only $a_h$ and $a_\ell$ are free variables.
(We report these inequality sets in \Cref{LiftedInequalityTables}.)
The content of \Cref{MainLocalTheorem} is that this specialization is always nonempty, so we may find a point $(a_h, a_\ell, a_f)$ in it (e.g., by calculating line-line intersections until we produce a vertex).
This pair of points can then be fed to \Cref{InterferenceExpressions}, which produces the angle values for the $\Z$--rotations.
This proceeds recursively until the sequence of interaction strengths is exhausted.
\end{procedure}

\begin{example}
\begin{figure}
    \centering
    \includegraphics[width=0.4\textwidth, trim={2.8cm 0 3.7cm 0}, clip]{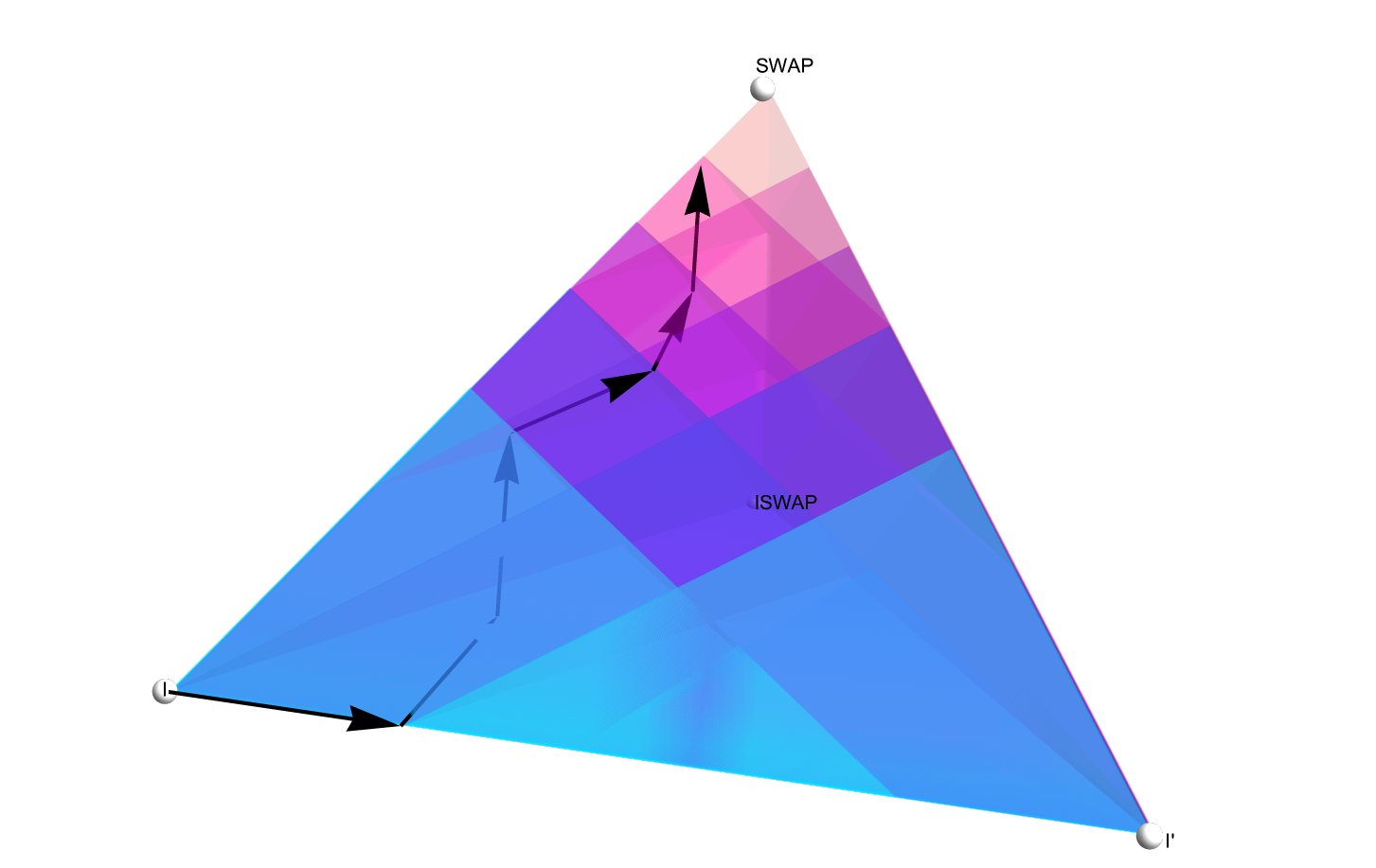}
    \includegraphics[width=0.4\textwidth, trim={5cm 2cm 1cm 3.5cm}, clip]{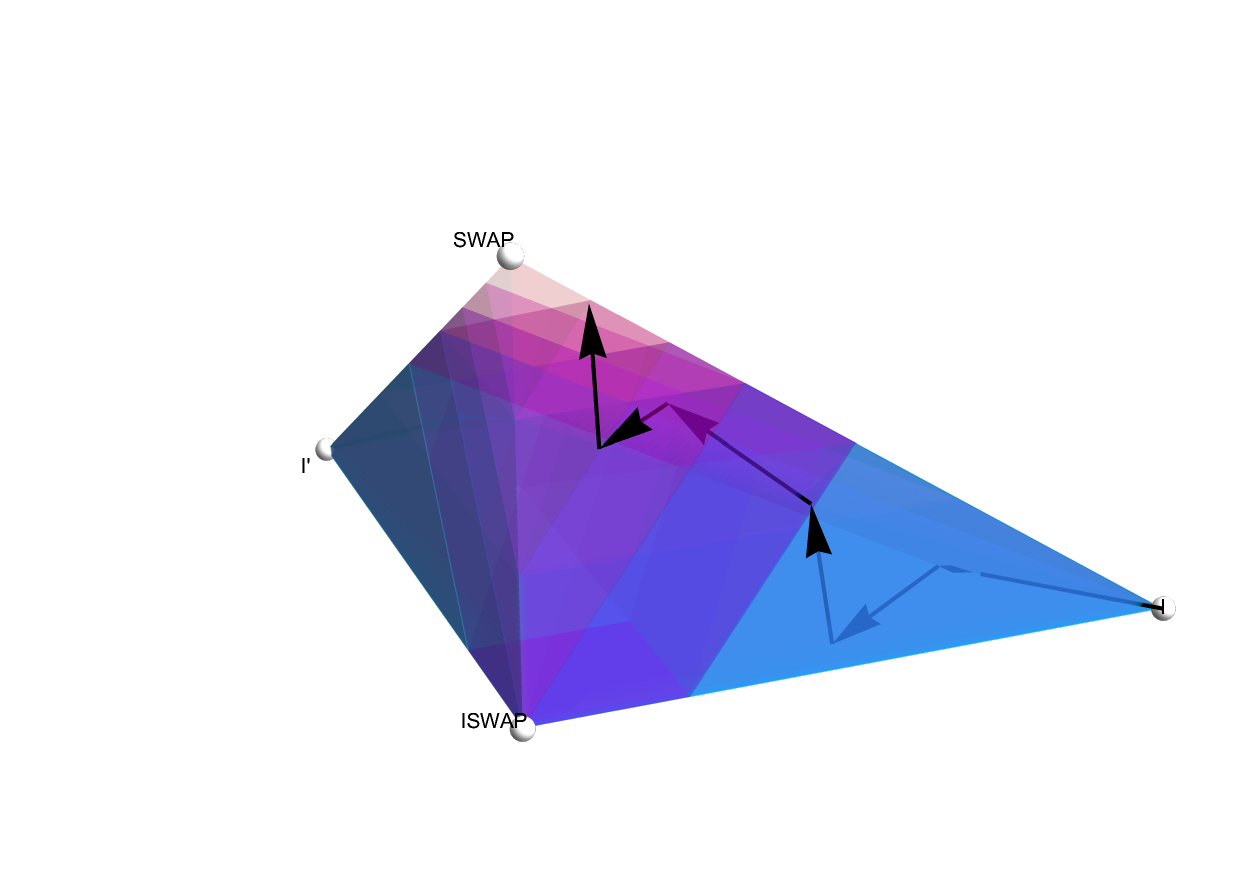}
    \caption{%
    Path of intermediate points in $\A_{C_2}$ produced by \Cref{EffectiveSynthesis} synthesizing $\CAN(0.707, 0.687, 0.687)$ into an $\XX$--circuit with strength sequence $(\frac{\pi}{8}, \frac{\pi}{8}, \frac{\pi}{12}, \frac{\pi}{12}, \frac{\pi}{12}, \frac{\pi}{12})$.
    The various colored regions are the circuit polyhedra for truncations of this sequence of interaction strengths.
    }
    \label{PathFigure}
\end{figure}
In \Cref{PathFigure}, we include a visualization of the intermediate steps produced when using \Cref{EffectiveSynthesis} to synthesize an $\XX$--circuit for a certain canonical point against a particular sequence of interaction strengths.
\end{example}

To progress, we need a quantitative definition of optimality.

\begin{definition}\label{AbstractCostFnDefinition}
Given a target unitary $U$, a gate set $\S$, and a cost function $\C_{\S}$ which consumes $U$ and an $\S$--circuit $C$, the \textit{approximate synthesis task} is to produce an $\S$--circuit $C$ maximizing $\C_{\S}(U, C)$.
Cost functions can enjoy a variety of pleasant properties:
\begin{description}
    \item[\textit{Separable}:]
    For a circuit template $C(\mathbf \theta)$, $\C_{\S}(U, C)$ can be written as a sum $\C'(U, C(\mathbf \theta)) + \C''_{\S}(C, \mathbf \theta)$ where $\C'$ depends only on the unitary which the circuit models and $\C''$ depends only on the circuit and parameters---but not its relationship to $U$.
    \item[\textit{Locally invariant}:]
    $\C''_{\S}(C, \theta) = \C''_{\S}(C)$ is invariant under choice of parameters $\theta$ for local gates in the circuit $C$.
    \item[\textit{Monotonic}:]
    Suppose that $\C$ is a separable cost function.
    If $C$ is contained as a subcircuit in $D$, then $\C''_{\S}(C, \theta) \le \C''_{\S}(D, (\theta, \phi))$.
    \item[\textit{Non-approximating}:]
    The separable cost function $\C$ has $\C'$ given by \[\C'(U, V) = \begin{cases}0 & \text{when $U = V$}, \\ \infty & \text{otherwise} .\end{cases}\]
\end{description}
\end{definition}

These features are chosen both because they feed into an efficient algorithm for optimal synthesis and because they are satisfied in the following guiding example:

\begin{example}\label{InfidelityCostFn}
The \textit{average infidelity} of two gates $U$ and $V$ is
\begin{align*}
    I(U, V) & = 1 - \int_{\psi \in \mathbb P(\mathbb C^4)} \<\psi| U^\dagger V |\psi\>^2 \\
    & = \frac{16 - |\tr U^\dagger V|^2}{4 \cdot 5} \in [0, 4/5].
\end{align*}
For $\S$ a finite collection of $\XX$--interactions with costs $c\co \S \to \R$, we define a separable, locally invariant, monotonic cost function by \[\C_{\S}(U, C) = I(U, C) + \sum_{s \in C} c(s).\]
\end{example}

Average gate infidelity satisfies a few pleasant generic properties, but it is also tightly connected to the theory of $KAK$ decompositions.
We record these properties below.

\begin{remark}\label{InfidelityProperties}
Average infidelity detects gate equivalence, in the sense that $I(U, V) = 0$ if and only if $U = V$.
It is also symmetric: $I(U, V) = I(V, U)$.
However, it fails to satisfy the triangle inequality, even when $U$ and $V$ belong to the canonical family, hence does \emph{not} give a metric.
It satisfies compositionality only to first order:
\begin{align*}
    & \hspace{-1em} 16 - 20 \cdot I(UU', (U + \eps E)(V + \zeta F)) \\
    & = |\tr V^\dagger U^\dagger (U + \eps E)(V + \zeta F)|^2 \\
    & = |1 + \eps \tr U^\dagger E + \zeta \tr V^\dagger F + \eps \zeta \tr V^\dagger U^\dagger E F|^2.
\end{align*}
\end{remark}

\begin{lemma}[{\cite{CBSNG}}]\pushQED{\qed}
Let $U = \CAN(a_1, a_2, a_3)$ and $V = \CAN(b_1, b_2, b_3)$ be two canonical gates with parameter differences $\delta_j = (a_j - b_j)$.
Their average gate infidelity is given by \[20 \cdot I(U, V) = 16 - 16\left(\prod_j \cos^2 \frac{\delta_j}{2} + \prod_j \sin^2 \frac{\delta_j}{2} \right). \qedhere \popQED\]
\end{lemma}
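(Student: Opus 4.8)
The plan is to reduce the claim to an evaluation of $|\tr U^\dagger V|^2$ and then to compute that modulus using the one structural feature that makes canonical gates tractable: the generators $\XX$, $\YY$, $\ZZ$ pairwise commute and each squares to the identity. Hence $\{\CAN(c_1,c_2,c_3)\}$ is an abelian subgroup of $\PU(4)$ with $\CAN(\mathbf c)\,\CAN(\mathbf c') = \CAN(\mathbf c + \mathbf c')$, so $U^\dagger V = \CAN(a_1,a_2,a_3)^\dagger\,\CAN(b_1,b_2,b_3) = \CAN(b_1-a_1,\,b_2-a_2,\,b_3-a_3)$ is itself a canonical gate. Since the canonical generators are real, $\CAN(-\mathbf c)$ is the entrywise complex conjugate of $\CAN(\mathbf c)$, so $|\tr U^\dagger V| = |\tr\CAN(\delta_1,\delta_2,\delta_3)|$ with $\delta_j = a_j - b_j$; thus it is enough to compute the trace of a single canonical gate. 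Combined with the formula $I(U,V) = \bigl(16 - |\tr U^\dagger V|^2\bigr)/20$ from \Cref{InfidelityCostFn}, the lemma is equivalent to $|\tr\CAN(\delta_1,\delta_2,\delta_3)|^2 = 16\bigl(\prod_j\cos^2\tfrac{\delta_j}{2} + \prod_j\sin^2\tfrac{\delta_j}{2}\bigr)$.

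To get that trace I would diagonalize $\XX$, $\YY$, $\ZZ$ simultaneously in the magic (Bell) basis, where each acts by $\pm 1$. Because $\XX\cdot\YY\cdot\ZZ = -I$, only the sign triples of one fixed parity occur, so the four eigenphases of $\CAN(\delta_1,\delta_2,\delta_3)$ are $\tfrac12(\pm\delta_1\pm\delta_2\pm\delta_3)$ over that parity class --- the factor $\tfrac12$ reflecting the normalization of $\CAN$ inherited from \cite{CBSNG}, and the choice of parity class merely conjugating the trace, hence irrelevant to its modulus. Summing those four phases and collapsing the parity-restricted sum with the elementary identities $\sum_\varepsilon e^{i(\varepsilon\cdot\theta)} = \prod_j(e^{i\theta_j}+e^{-i\theta_j})$ and $\sum_\varepsilon\varepsilon_1\varepsilon_2\varepsilon_3\,e^{i(\varepsilon\cdot\theta)} = \prod_j(e^{i\theta_j}-e^{-i\theta_j})$, taken at $\theta_j = \delta_j/2$, I expect to arrive at
\[\tr\CAN(\delta_1,\delta_2,\delta_3) = 4\Bigl(\prod_j\cos\tfrac{\delta_j}{2} - i\prod_j\sin\tfrac{\delta_j}{2}\Bigr).\]
Since the real and imaginary parts are cleanly separated, the squared modulus picks up no cross term and equals $16\bigl(\prod_j\cos^2\tfrac{\delta_j}{2} + \prod_j\sin^2\tfrac{\delta_j}{2}\bigr)$, which is precisely what is needed.

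I do not anticipate a real obstacle --- the argument is a short, elementary computation --- so the only care required is bookkeeping: pinning down the parity of the admissible sign patterns, holding the normalization of $\CAN$ fixed throughout, and checking the real/imaginary separation so that no interference term survives when the modulus is taken. As an alternative that avoids the magic basis, one can instead sum the four diagonal entries of $\CAN(\delta_1,\delta_2,\delta_3)$ directly from the explicit matrix form of the canonical gate and reduce the resulting $|\cdot|^2$ by product-to-sum and double-angle identities; this lands on the manifestly symmetric value $4\bigl(1 + \cos\delta_1\cos\delta_2 + \cos\delta_1\cos\delta_3 + \cos\delta_2\cos\delta_3\bigr)$, which coincides with $16\bigl(\prod_j\cos^2\tfrac{\delta_j}{2} + \prod_j\sin^2\tfrac{\delta_j}{2}\bigr)$ by the half-angle formula.
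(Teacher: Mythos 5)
Your strategy is the right one and essentially forced: reduce to $|\tr U^\dagger V|^2$ via the formula in \Cref{InfidelityCostFn}, use the fact that $\XX, \YY, \ZZ$ commute so that $U^\dagger V = \CAN(-\mathbf\delta)$ up to complex conjugation, and evaluate the trace either from the simultaneous eigenvalues in the magic basis or directly from the explicit $4\times 4$ matrix in \Cref{TraditionalWeylDecompTheorem}. However, the factor of $\tfrac12$ you slide into the eigenphases is not justified by this paper's own definition. The paper defines $\CAN(\mathbf c) = \exp(-i(c_1\XX + c_2\YY + c_3\ZZ))$. Since $\XX, \YY, \ZZ$ act in the magic basis with sign triples $\varepsilon \in \{\pm1\}^3$ obeying $\prod_j\varepsilon_j = -1$ (because $\XX\YY\ZZ = -I$, as you note), the four eigenphases of $\CAN(\mathbf\delta)$ are $-\,\delta\cdot\varepsilon$ — with no half. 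Summing over those four $\varepsilon$ gives $\tr\CAN(\mathbf\delta) = 4\bigl(\prod_j\cos\delta_j \mp i\prod_j\sin\delta_j\bigr)$, hence $|\tr\CAN(\mathbf\delta)|^2 = 16\bigl(\prod_j\cos^2\delta_j + \prod_j\sin^2\delta_j\bigr)$, again with no halves. You can verify this directly from the paper's matrix: $\tr\CAN(\mathbf\delta) = 2e^{\pm i\delta_3}\cos(\delta_1-\delta_2) + 2e^{\mp i\delta_3}\cos(\delta_1+\delta_2)$, and product-to-sum identities land on the same answer. Your own ``alternative'' route (summing the diagonal entries) would reveal the discrepancy if you carried the double angles faithfully: it yields $4\bigl(1 + \cos 2\delta_1\cos 2\delta_2 + \cos 2\delta_1\cos 2\delta_3 + \cos 2\delta_2\cos 2\delta_3\bigr)$, not the expression without the $2$s.

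A concrete sanity check exposes the problem. Take $U = \CAN(\pi/4,0,0)$ and $V = \CAN(0,0,0) = I$: from the matrix, $\tr U = 4\cos(\pi/4) = 2\sqrt2$, so $|\tr U^\dagger V|^2 = 8$ and $20\,I(U,V) = 8$. The lemma's formula gives $16 - 16\cos^2(\pi/8) \approx 2.34$; the formula without halves gives $16 - 16\cos^2(\pi/4) = 8$. So in this paper's conventions the correct statement should read $20\cdot I(U,V) = 16 - 16\bigl(\prod_j\cos^2\delta_j + \prod_j\sin^2\delta_j\bigr)$. What has happened is that you imported the $\CAN$ normalization from \cite{CBSNG}, which carries a factor of $2$ relative to this paper's coordinate scaling (the same kind of rescaling the paper flags for the Haar pushforward), and that fudge cancels against the same factor of $2$ sitting, apparently by a copy-over mistake, inside the lemma's stated formula. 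A blind proof should not silently import a different normalization to match a target formula; it should derive the trace from the $\CAN$ in scope, and in doing so would catch the normalization mismatch in the statement.
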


\begin{lemma}[{\cite{WVMCWRGK}}]\label{ApproximationOuterGates}
Suppose that $C_1$, $C_2$ are fixed canonical gates and that $L_1$, $L_1'$ are fixed local gates.  Letting $L_2$ and $L_2'$ range over all local gates, the value $I(L_1 C_1 L_1', L_2 C_2 L_2')$ is minimized when taking $L_2 = L_1$ and $L_2' = L_1'$. \qed
\end{lemma}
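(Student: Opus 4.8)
The plan is to peel off the local gates algebraically, reduce to a trace inequality, and then attack that inequality in the magic (Bell) basis. Since $I(\cdot,\cdot)$ depends on a pair of gates only through $|\tr(\cdot^\dagger\cdot)|^2$, and since cyclicity of the trace gives
\[\tr\!\big((L_1 C_1 L_1')^\dagger (L_2 C_2 L_2')\big)=\tr\!\big(C_1^\dagger\,(L_1^\dagger L_2)\,C_2\,(L_2'L_1'^\dagger)\big),\]
the statement is equivalent, writing $A=L_1^\dagger L_2$ and $B=L_2'L_1'^\dagger$ (which range freely and independently over all local gates as $L_2,L_2'$ do, with $A=B=\mathbf 1$ corresponding to $L_2=L_1,L_2'=L_1'$), to the assertion $|\tr(C_1^\dagger A C_2 B)|\le|\tr(C_1^\dagger C_2)|$. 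So the real content is: inserting local gates on either side of the canonical gate $C_2$ can only lower its overlap with $C_1$.

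Next I would pass to the magic basis $Q$, under which every local gate becomes a real orthogonal matrix and every canonical gate becomes diagonal: $Q^\dagger\CAN(\mathbf a)Q=D(\mathbf a)=\operatorname{diag}\big(e^{-i\eps^{(j)}\cdot\mathbf a}\big)_{j=1}^4$, where $\eps^{(1)},\dots,\eps^{(4)}$ are the four sign patterns in $\{\pm1\}^3$ of negative product (the simultaneous eigenvalue patterns of $\XX,\YY,\ZZ$). With $D_i=D(\mathbf a_i)$ and $A\mapsto O_A,\ B\mapsto O_B\in O(4)$ the target becomes $|\tr(D_1^\dagger O_A D_2 O_B)|\le|\tr(D_1^\dagger D_2)|$, and expanding, $\tr(D_1^\dagger O_A D_2 O_B)=\sum_{j,k}e^{i(\eps^{(j)}\cdot\mathbf a_1-\eps^{(k)}\cdot\mathbf a_2)}\,t_{jk}$ with $t_{jk}=(O_A)_{jk}(O_B)_{kj}$. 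The one structural fact that drives everything: the rows of $O_A$ and the columns of $O_B$ are unit vectors, so Cauchy--Schwarz yields $\sum_k|t_{jk}|\le1$ for every $j$ and $\sum_j|t_{jk}|\le1$ for every $k$. Hence $t$ lies in the polytope of matrices whose entrywise absolute value is doubly sub-stochastic, whose extreme points are the signed partial permutation matrices.

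Because $t\mapsto|\tr(D_1^\dagger O_A D_2 O_B)|$ is the modulus of a fixed $\mathbb C$-linear functional of $t$, it is convex, so its maximum over the (non-convex) image $\{O_A\odot O_B^{\mathsf T}\}$ is bounded by its maximum over the extreme points of the enclosing polytope. This leaves a purely finite trigonometric comparison: for every signed partial matching $\sigma$, show $\big|\sum_{(j,k)\in\sigma}\pm\,e^{i(\eps^{(j)}\cdot\mathbf a_1-\eps^{(k)}\cdot\mathbf a_2)}\big|\le\big|\sum_j e^{i\eps^{(j)}\cdot(\mathbf a_1-\mathbf a_2)}\big|$, the right side being exactly $|\tr(D_1^\dagger D_2)|=|\tr(C_1^\dagger C_2)|$ and arising from $t=\mathbf 1$, which is realized by $O_A=O_B=\mathbf 1$; that forces the maximum of the overlap to be attained at $A=B=\mathbf 1$, as desired. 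For the full re-pairings one can phrase the comparison through a closed form such as $|\tr(C_1^\dagger C_2)|^2=4\big(1+\sum_{i<j}\cos2\delta_i\cos2\delta_j\big)$ with $\delta=\mathbf a_1-\mathbf a_2$ (a rearrangement of the fidelity formula quoted from \cite{CBSNG}), computing the analogous expression for a competing pairing and checking it is no larger; here the defining inequalities of a positive canonical coordinate from \Cref{CanonicalDecomp} (the ordering and $a_1+a_2\le\tfrac\pi2$, and likewise for $\mathbf a_2$) are what make the ``aligned'' representatives $\mathbf a_1,\mathbf a_2$ the most central ones.

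The step I expect to be the main obstacle is exactly this last finite check. The convex-hull bound is lossy, so one must genuinely verify that every signed partial permutation is dominated by the identity pairing, and this uses the alcove constraints in an essential way (the bare inequality fails for arbitrary diagonal $D_1,D_2$). I would organize it by first arguing that a maximizing extreme point may be taken to be a full signed permutation in the regime $|\tr(C_1^\dagger C_2)|>3$, treating the small-overlap regime separately (there the size-$\le 3$ matchings must be bounded directly using the explicit phases), and then enumerating full matchings modulo the symmetries already present in the problem--permutations of the entries of $\mathbf a_1$ and even sign flips, which fix $|\tr(C_1^\dagger C_2)|$--so that only a short list of representative comparisons remains to be done by hand.
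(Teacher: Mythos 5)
The paper does not prove this lemma at all; it is cited to \cite{WVMCWRGK}. So you are attempting a from-scratch argument, and the central step has a concrete gap. You relax the optimization over $t=O_A\odot O_B^{\mathsf T}$ to the polytope of matrices whose entrywise absolute value is doubly sub-stochastic, and you propose to verify the bound at all of its extreme points, i.e.\ at all signed partial permutations. But the image is smaller than that relaxation in a way that matters. Local gates pass through the magic basis to $SO(4)$, not $O(4)$; and if $O_A, O_B \in SO(4)$ are signed permutation matrices with $t$ of full support, then $t$ is supported on a permutation $\sigma$ with signs satisfying $\prod_j t_{j\sigma(j)} = \operatorname{sgn}(\sigma)^2 = +1$, because $\det O_A = \operatorname{sgn}(\pi_A)\prod_j s^A_j = 1$ and likewise for $O_B$. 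The odd-sign signed permutations are therefore not achievable---and some of them \emph{violate} the inequality you would need to check, even with both triples inside the alcove. Concretely, take $\mathbf a_1 = (\tfrac{\pi}{4},0,0)$ and $\mathbf a_2 = (\tfrac{\pi}{4},\tfrac{\pi}{4},\tfrac{\pi}{4})$, so $C_1\equiv\CX$ and $C_2\equiv\SWAP$. Then $|\tr(D_1^\dagger D_2)| = |1 - i + 1 + i| = 2$, while the odd-sign permutation supported on the $3$-cycle $(1\,2\,3)$ with signs $(+,-,+,+)$ gives $|M_{12}-M_{23}+M_{31}+M_{44}| = |1 + i + 1 + i| = 2\sqrt 2 > 2$. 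This is an extreme point of your enclosing polytope, so the finite enumeration you plan to carry out cannot close, and the alcove constraints alone do not rescue it.

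To repair the plan you would have to make the $SO(4)\times SO(4)$ structure bear on the relaxation: the convex hull of $\{O_A\odot O_B^{\mathsf T}\}$ is a \emph{strict} subpolytope of the doubly sub-stochastic body, its extreme points are not obviously ``the even-sign permutations'' (cutting a polytope down to a subset generally creates new vertices), and the missing facet inequalities are not available from Cauchy--Schwarz alone. As written, the argument proves a statement that is false for the relaxed feasible set and therefore does not establish the lemma; a genuinely different handle on the $\det = 1$ constraint is required, or else one should take a different route entirely (e.g.\ a direct variational argument in $\PU(2)^{\times 2}$ along the lines of the cited reference).
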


We now describe an optimal synthesis procedure for a nice cost function:

\begin{procedure}\label{SynthesisWrapper}
Let $\C$ be a separable, locally invariant, and monotonic cost function.
Let $\S$ be a finite gate set of $\XX$--type interactions, and consider the set of circuit templates given by interleaving unconstrained local gates into the words formed from $\S$.
Traverse these available circuit templates (i.e., the words in $\S$) by ascending order of $\C''_{\S}$.%
\footnote{%
Using a priority queue, one can perform this traversal without enumerating all possible words beforehand.
}
For each such circuit template $C$, use \Cref{MainGlobalTheorem} to calculate the circuit polytope $\Pi(C)$.
Calculate the point $p \in \Pi(C)$ which optimizes $\C'(U, \CAN(p))$.
If the total cost $\C_{\S}$ is the best seen so far, retain $C$ and $p$.
Continue to traverse circuit templates until $\Pi(U) \in \Pi(C)$, at which point $\C'$ vanishes and the ordering of circuit templates guarantees that all future circuit templates will yield a worse cost.%
\footnote{%
\Cref{MainGlobalTheorem} guarantees that this termination condition will eventually be met provided $\S$ contains any interaction $\XX_\beta$ with $\beta \in (0, \frac{\pi}{2})$.
}
Finally, apply \Cref{EffectiveSynthesis} to synthesize a $C$--circuit for $\CAN(p)$, then apply \Cref{ApproximationOuterGates} to produce $U$ itself.
\end{procedure}

\begin{remark}
\begin{figure}
    \centering
    \begin{tabular}{cc}
    {\begin{tikzpicture}
    \begin{axis}[%
        xtick style={draw=none},
        scale=0.36,
        xlabel={$\XX_{\frac{\pi}{8}}$ count},
        ylabel={frequency},
        xlabel near ticks,
        ylabel near ticks,
    ]
    \addplot +[hist={bins=4, data min=2.5, data max=6.5}, fill, opacity=0.5, mark=none] table[y=nuop_depth]{data/nuop_stats_pi4.dat};
    \addplot +[hist={bins=3, data min=2.5, data max=5.5}, fill, opacity=0.5, mark=none] table[y=monodromy_depth]{data/nuop_stats_pi4.dat};
    \end{axis}
    \end{tikzpicture}}
    &
    {\begin{tikzpicture}
    \begin{axis}[%
        xtick style={draw=none},
        scale=0.36,
        xlabel={$\XX_{\frac{\pi}{12}}$ count},
        ylabel={frequency},
        xlabel near ticks,
        ylabel near ticks,
    ]
    \addplot +[hist={bins=6, data min=2.5, data max=8.5}, fill, opacity=0.5, mark=none] table[y=nuop_depth]{data/nuop_stats_pi6.dat};
    \addplot +[hist={bins=6, data min=2.5, data max=8.5}, fill, opacity=0.5, mark=none] table[y=monodromy_depth]{data/nuop_stats_pi6.dat};
    \end{axis}
    \end{tikzpicture}}
    \\
    {\begin{tikzpicture}
    \begin{axis}[%
        xtick style={draw=none},
        scale=0.36,
        xlabel={seconds},
        ylabel={frequency},
        xlabel near ticks,
        ylabel near ticks,
    ]
    \addplot +[hist={bins=20}, blue, fill, opacity=0.5, mark=none] table[y=nuop_time]{data/nuop_stats_pi4.dat};
    \end{axis}
    \end{tikzpicture}}
    &
    {\begin{tikzpicture}
    \begin{axis}[%
        xtick style={draw=none},
        scale=0.36,
        xlabel={seconds},
        ylabel={frequency},
        xlabel near ticks,
        ylabel near ticks,
    ]
    \addplot +[hist={bins=20}, blue, fill, opacity=0.5, mark=none] table[y=nuop_time]{data/nuop_stats_pi6.dat};
    \end{axis}
    \end{tikzpicture}}
    \\
    {\begin{tikzpicture}
    \begin{axis}[%
        xtick style={draw=none},
        scale=0.36,
        xlabel={seconds},
        ylabel={frequency},
        xlabel near ticks,
        ylabel near ticks,
    ]
    \addplot +[hist={bins=20}, red, fill, opacity=0.5, mark=none] table[y=monodromy_time]{data/nuop_stats_pi4.dat};
    \end{axis}
    \end{tikzpicture}}
    &
    {\begin{tikzpicture}
    \begin{axis}[%
        xtick style={draw=none},
        scale=0.36,
        xlabel={seconds},
        ylabel={frequency},
        xlabel near ticks,
        ylabel near ticks,
    ]
    \addplot +[hist={bins=20}, red, fill, opacity=0.5, mark=none] table[y=monodromy_time]{data/nuop_stats_pi6.dat};
    \end{axis}
    \end{tikzpicture}}
    \end{tabular}
    \caption{%
    Comparison of the output and wall-time characteristics of our algorithm (in red) and that of a \texttt{numpy} numerical search (in blue), when targeting the two-qubit gate sets $\S = \{\XX_{\frac{\pi}{8}}\}$ and $\S = \{\XX_{\frac{\pi}{12}}\}$.
    We plot the histograms of wall-times separately, as numerical search is $> 200\times$ slower. Numerical search can fall in local wells and produce sub-optimal circuits. 
    }
    \label{SynthesisWallTimeFig}
\end{figure}
In \Cref{SynthesisWallTimeFig}, we study the execution characteristics of \Cref{SynthesisWrapper} compared to those of blind numerical search.
The implementation of our method is available in Qiskit's \texttt{quantum\_info} subpackage as the class \texttt{XXDecomposer}~\cite{Qiskit}.
Given a Haar-randomly chosen two-qubit unitary operator $U$, our numerical search procedure is to let \texttt{numpy}'s generic optimizer explore the space of circuits of a particular depth, with the objective of minimizing the infidelity with $U$.
If the optimizer cannot find a circuit with infidelity below some threshold, we retry with a circuit of the next larger depth.
Altogether, this is similar to what is implemented in \texttt{NuOp}~\cite{LMMB}, among other compilation suites.
The histograms reported in \Cref{SynthesisWallTimeFig} are the result of sampling over many such $U$, targeting either the gate set $\S = \{\XX_{\frac{\pi}{8}}\}$ or $\S = \{\XX_{\frac{\pi}{12}}\}$.%
\footnote{%
Neither our implementation of \Cref{SynthesisWrapper} nor our invocation of \texttt{numpy} is particularly clever.
We expect that both distributions can be shifted left with further optimization of the implementations, but that the multiplicative difference will be at least as large between ``optimal'' implementations of each synthesis method.
}
\end{remark}

It remains to describe how to find the point $p \in \Pi(C)$ which optimizes $\C'(U, \CAN(p))$.
For a non-approximating cost function, this can be probed directly: if $\Pi(U) \in \Pi(C)$, then we take $p = \Pi(U)$, and otherwise we reject $\Pi(C)$ entirely.
For the approximating cost function defined in \Cref{InfidelityCostFn}, we use the following more elaborate result:

\begin{theorem}\label{ApproximationTheorem}
Let $P$ be an $\XX$--circuit polytope, and let $F \subseteq P$ be an open facet within it.
For $p \in \A_{C_2}$ a fixed positive canonical triple, if $b = q$ is a critical point of the infidelity distance $I|_{a = p, b \in F}$ as constrained to $F$, then $q$ is also a critical point of the Euclidean distance to $p$ as constrained to $F$.%
\end{theorem}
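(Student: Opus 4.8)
The plan is to reduce the claim to the comparison of two Lagrange conditions and then to a short trigonometric case analysis over the facet normals. First I would record a convenient form of the objective. Writing $\delta_j := p_j - b_j$, the formula of Cross et al.\ recalled above reads $20\,I(\CAN(p),\CAN(b)) = 16 - 16\bigl(\prod_j\cos^2\tfrac{\delta_j}{2} + \prod_j\sin^2\tfrac{\delta_j}{2}\bigr)$; substituting $\cos^2\tfrac{x}{2} = \tfrac{1+\cos x}{2}$ and $\sin^2\tfrac{x}{2} = \tfrac{1-\cos x}{2}$ and expanding, the odd-degree elementary symmetric terms in $(\cos\delta_j)_j$ cancel, leaving
\[ 20\,I(\CAN(p),\CAN(b)) = 12 - 4\sum_{j<k}\cos\delta_j\cos\delta_k . \]
Thus, on the affine hyperplane containing $F$, the objective is affine and strictly decreasing in $g(\delta) := \sum_{j<k}\cos\delta_j\cos\delta_k$, and its ambient gradient in $b$-coordinates is a scalar multiple of $v(\delta) := \bigl(\sin\delta_1(\cos\delta_2+\cos\delta_3),\,\sin\delta_2(\cos\delta_1+\cos\delta_3),\,\sin\delta_3(\cos\delta_1+\cos\delta_2)\bigr)$.

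Next I would translate ``critical point on $F$'' into a statement about a normal vector $N$ to the hyperplane containing the open facet $F$: a smooth function restricted to $F$ is critical at $q$ exactly when its ambient gradient there is a multiple of $N$. For the squared Euclidean distance $b \mapsto \tfrac12\|b-p\|^2$ (which has the same critical points as the distance itself) the ambient gradient is $b - p = -\delta$, so $q$ is Euclidean-critical iff $\delta \parallel N$; by the previous paragraph $q$ is infidelity-critical iff $v(\delta) \parallel N$. So it suffices to prove $v(\delta) \parallel N \implies \delta \parallel N$, and here I would invoke two structural facts. (i) Reading off \Cref{MainGlobalTheorem} and \Cref{CanonicalDecomp}, every facet normal of an $\XX$-circuit polytope has entries, up to scaling, in $\{-1,0,1\}$: the strength, slant, and frustrum facets contribute directions $(1,1,1)$, $(\pm1,1,1)$, $(0,0,1)$, the walls of $\A_{C_2}$ contribute $(1,-1,0)$, $(0,1,-1)$, $(0,0,1)$, $(1,1,0)$, and no new directions appear when one passes to the (non-convex) union of the two bodies of \Cref{MainGlobalTheorem} or intersects with $\A_{C_2}$. (ii) Since $p, q \in \A_{C_2}$ have all coordinates in $[0,\tfrac\pi2)$, we have $\delta_j \in (-\tfrac\pi2,\tfrac\pi2)$, hence $\cos\delta_j > 0$ and $|\delta_j \pm \delta_k| < \pi$ for all $j, k$.

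For the case analysis, suppose $v(\delta) \parallel N$ with $N$ as in (i). If $N_j = 0$ then $v(\delta)_j = 0$, and since $\cos\delta_k + \cos\delta_\ell > 0$ this forces $\delta_j = 0$, consistent with $\delta \parallel N$. For a pair $j, k$ with $N_j, N_k \neq 0$, parallelism yields $N_k\,v(\delta)_j = N_j\,v(\delta)_k$; expanding and applying the identities $\sin x \pm \sin y = 2\sin\tfrac{x\pm y}{2}\cos\tfrac{x\mp y}{2}$ and $\sin(x \pm y) = 2\sin\tfrac{x\pm y}{2}\cos\tfrac{x\pm y}{2}$ collapses it to
\[ 2\sin\tfrac{\delta_j - \epsilon\delta_k}{2}\Bigl(\cos\tfrac{\delta_j - \epsilon\delta_k}{2} + \cos\delta_\ell\,\cos\tfrac{\delta_j + \epsilon\delta_k}{2}\Bigr) = 0, \qquad \epsilon := N_jN_k \in \{\pm1\}, \]
where $\delta_\ell$ is the remaining coordinate (with $\cos\delta_\ell = 1$ when $N_\ell = 0$, since then $\delta_\ell = 0$ by the first step). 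By (ii) the parenthesized factor is a sum of strictly positive quantities, hence nonzero, so $\sin\tfrac{\delta_j - \epsilon\delta_k}{2} = 0$; as $|\delta_j - \epsilon\delta_k| < \pi$, this gives $\delta_j = \epsilon\delta_k$, i.e.\ $\delta_j/N_j = \delta_k/N_k$. Running this over all such pairs shows $\delta/N$ is constant on the support of $N$, hence $\delta \parallel N$, which is what was needed.

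I expect the crux to be exactly the nonvanishing of that parenthesized ``obstruction'' factor: this is where the positivity $\cos\delta_j > 0$ --- equivalently, the bound $a_j < \tfrac\pi2$ built into the positive canonical chamber --- does the real work, ruling out the spurious branch $\cos\tfrac{\delta_j - \epsilon\delta_k}{2} = -\cos\delta_\ell\cos\tfrac{\delta_j + \epsilon\delta_k}{2}$ on which the equivalence of critical points would fail. A secondary point requiring care is the bookkeeping of facet normals when the circuit polytope is presented as a union of two convex bodies and when $F$ lies on the folded wall $a_3 = 0$; in both cases one should check that nothing above depends on the choice of convex component or of sheet.
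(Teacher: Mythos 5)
Your argument is correct and clean for codimension-1 faces, and it implicitly covers codimensions 0 and 3 as well: the unconstrained critical condition $v(\delta)=0$ forces $\delta=0$ since $\cos\delta_k+\cos\delta_\ell>0$, and at a vertex both criticality conditions are vacuous. The uniform factorization $N_k v(\delta)_j - N_j v(\delta)_k = 2\sin\tfrac{\delta_j-\epsilon\delta_k}{2}\bigl(\cos\tfrac{\delta_j-\epsilon\delta_k}{2} + \cos\delta_\ell\cos\tfrac{\delta_j+\epsilon\delta_k}{2}\bigr)$, together with the observation that $\delta_j\in(-\tfrac{\pi}{2},\tfrac{\pi}{2})$ keeps the parenthesized factor strictly positive, is a genuine improvement in uniformity over the paper's \Cref{ProofOfApproximationTheorem}, which works each normal direction $(0,0,1)$, $(1,1,1)$, $(-1,1,1)$ separately, then repeats the exercise for the lines where these meet, and uses a Weyl-reflection trick for the outer alcove walls. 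Your single identity dispatches all of those at once, since those lines have tangents with entries in $\{-1,0,1\}$ and the same pairwise cancellation applies.

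The gap is in the remaining codimension-2 faces. The paper uses ``facet'' to mean a face of \emph{any} codimension (its appendix explicitly proves ``codimension 0/1/2/3 facet'' lemmas, and the algorithm described after the theorem projects onto affine spans of faces of every dimension), so the theorem as used really does need all codimensions. Your argument handles the codimension-2 faces arising as intersections of two inner walls of $P$, since their tangent vectors $(\pm1,\mp1,0)$, $(0,1,-1)$, $(1,0,0)$ are again in $\{-1,0,1\}$, but it breaks for the ``inner crease'' faces where an inner wall of $P$ meets an outer wall of $\A_{C_2}$. There the tangent vector is $(-2,1,1)$, $(1,1,-2)$, or $(-2,-1,-1)$. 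Writing, say, $-2v_1+v_2+v_3 = (v_2-v_1)+(v_3-v_1)$ and applying your factorization to each summand gives $-2\sin\tfrac{\delta_1-\delta_2}{2}A - 2\sin\tfrac{\delta_1-\delta_3}{2}B$ with $A,B>0$; this can vanish with $\delta_1-\delta_2$ and $\delta_1-\delta_3$ of opposite signs, so the critical condition no longer forces each half-angle sine to vanish separately, and one cannot conclude $-2\delta_1+\delta_2+\delta_3=0$. The paper handles precisely these faces with a different strategy: it shows that $\nabla I|_a$ has a strictly positive inner product with the inward-facing normal of the adjacent codimension-1 inner facet, so $I|_a$ attains no extremum on the interior of the inner-crease face (and hence the face can be ignored in the projection algorithm). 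Your proof would need to be supplemented with some version of this irrelevance argument to cover those faces.
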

\begin{proof}
\Cref{MainGlobalTheorem} gives an explicit enumeration of the available open facets of $P$, and we approach the optimization problem over each facet separately.
We defer this to \Cref{ProofOfApproximationTheorem}.
\end{proof}


\noindent
This result means that we can repurpose the standard procedure used to calculate the nearest point in Euclidean distance to instead find the best approximating canonical triple.
Namely, to calculate the nearest point in Euclidean distance, project the point onto the affine subspaces spanned by each facet of the polytope (e.g., by solving a least-squares problem), retain those projections which belong to the polytope, and from that finite set select the point of minimum (infidelity) distance.

\begin{remark}
This is extremely unusual behavior for these two optimization problems and relies on the specific form of the polytopes appearing in \Cref{MainGlobalTheorem}.
For contrast, consider the line passing through the origin with slope $(\frac{\pi}{4}, \frac{\pi}{50}, \frac{\pi}{50})$ and the off-body point $(\frac{83 \pi}{400}, \frac{83 \pi}{400}, \frac{83 \pi}{400})$.
The fidelity-nearest point appears after traveling for one unit of time, but the Euclidean-nearest point appears after traveling for $\approx 95\%$ of a unit of time.
\end{remark}

\begin{remark}
Numerical experiment indicates that the nearest point under infidelity distance exactly agrees with the nearest point under Euclidean distance---i.e., that the same critical point achieves the minimum value in both of these searches.
However, this conjecture yields no algorithmic speedup when producing these minimizers, so we are not motivated to pursue it here.
\end{remark}

\section{Gateset optimization and numerical experiment}\label{GatesetOptSection}

In this section, we bring the theory of \Cref{OptimalSynthesisSection} to bear on deciding which native gates are worth bestowing on a device.
Even if a device is physically capable of enacting some quantum operation, there is calibration overhead to making that operation available as a reliable user-facing gate.
At the same time, the more high-fidelity native interactions are available, the more clever and adaptable a synthesis method (including ours) can be.
Accordingly, we would like to find a small set of $\XX$ operations that optimizes certain objective functions which measure synthesis performance.
The primary objective with which we will concern ourselves is expected cost:

\begin{definition}
For a two-qubit unitary $U \in \PU(4)$ and a native gate set $\S$, let $\C_{\S}(U) := \min_C \C_{\S}(U, C)$ be a cost function as in \Cref{AbstractCostFnDefinition} (e.g., \Cref{InfidelityCostFn} or its non-approximating variant).
The \textit{expected cost} is defined as \[\<\C_{\S}\> = \int_{U \in \PU(4)} \C_{\S}(U) \, d\mu^{\Haar}.\]
\end{definition}

For $\XX$--based gate sets $\S$ and for favorable cost functions, we now show how to compute this value \emph{exactly}.
Starting with the definition
\[\<\C_{\S}\> = \int_{U \in PU(4)} \min_C \C_{\S}(U, C) \, d\mu^{\Haar},\]
we use separability and non-approximation to reduce to the case where $U$ admits an exact model by $C$:
\[\<\C_{\S}\> = \int_{U \in PU(4)} \min_{U = C(\theta)} \C''_{\S}(C, \theta) \, d\mu^{\Haar}.\]
By assuming $\S$ finite and $\C''_{\S}$ locally invariant, we learn that the integrand $\min_{U \in C} \C''_{\S}(C)$ takes on finitely many values, supported by finitely many choices of $C$.
By sorting the $C$ compatibly with $\C''_{\S}(C)$, we may further reduce to
\[\<\C_{\S}\> = \sum_C \int_{\substack{U \in \operatorname{Image}(C) \\ U \not \in \operatorname{Image}(C' \mid C' < C)}} \C''_{\S}(C) \, d\mu^{\Haar}.\]
Since $\C''_{\S}(C)$ is constant on each region, each summand is given by the reweighted Haar volume of the corresponding region.
Since constant functions pull back from constant functions, we can also push these integrals forward along $\Pi$ and compute them in $\A_{C_2}$:
\[\<\C_{\S}\> = \sum_C \C''_{\S}(C) \left(\Pi_* \mu^{\Haar}\left( \Pi(C) \setminus \bigcup_{C' < C} \Pi(C') \right) \right).\]

Altogether, this reduces the problem to calculating the Haar volume of the polytopes which appear in \Cref{MainGlobalTheorem}.
A formula for $\Pi_* \mu^{\Haar}$ which enables this was previously reported by Watts et al.:

\begin{lemma}[{\cite{WCV}, \cite{MKMZ}, \cite{Mehta}}]\pushQED{\qed}
The pushforward of the Haar measure is given by%
\footnote{%
The extra factor of $2$ appearing in this formula comes from a different scaling of our coordinate systems.
}%
\textsuperscript{,}%
\footnote{%
This density function has a unique local maximum at $(\frac{\pi}{4}, \frac{\pi}{8}, 0)$.
}
\[\Pi_* d\mu^{\Haar} = \frac{384}{\pi} \prod_{1 \le j < k \le 3} \sin(2 c_j + 2 c_k) \sin(2 c_j - 2 c_k). \qedhere \popQED\]
\end{lemma}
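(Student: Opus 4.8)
The plan is to recognize the formula as an instance of the Weyl integration formula for the symmetric space $SU(4)/SO(4)$, assembled from a change of variables, an eigenvalue density (equivalently, a Jacobian computation), and a normalization. First I would reduce to $SU(4)$: normalized Haar measure on $\PU(4)$ is the pushforward of normalized Haar measure on $SU(4)$ along the projection $SU(4) \to \PU(4)$, so it suffices to compute the pushforward of $\Pi$ for $U$ Haar-random in $SU(4)$. Then I would conjugate by the magic unitary $Q$ (cf.\ \cite{KrausCirac,Makhlin}), which carries the local subgroup $\PU(2)^{\times 2}$ onto the image of $SO(4)$ and each canonical gate $\CAN(a_1, a_2, a_3)$ onto a diagonal unitary $D = \operatorname{diag}(e^{i\theta_1}, e^{i\theta_2}, e^{i\theta_3}, e^{i\theta_4})$. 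Writing $\tilde U := Q^{\dagger} U Q$, \Cref{TraditionalWeylDecompTheorem} becomes a polar decomposition $\tilde U = O\, D\, O'$ with $O, O' \in SO(4)$; since $O'(O')^{T} = I$ and $D$ is symmetric, $\tilde U \tilde U^{T} = O\, D^{2}\, O^{T}$, so $\Pi$ factors through $U \mapsto \tilde U \tilde U^{T}$ followed by taking the eigenvalue multiset $\{e^{2i\theta_1}, \dots, e^{2i\theta_4}\}$, and $\A_{C_2}$ is the corresponding Weyl alcove (a fundamental domain, with the boundary gluing recorded in \Cref{CanonicalDecomp}).

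Second, I would spell out the affine dictionary between $(a_1, a_2, a_3)$ and $(\theta_1, \dots, \theta_4)$ by diagonalizing $a_1 \XX + a_2 \YY + a_3 \ZZ$: one may take $\theta_1 = -a_1 - a_2 + a_3$, $\theta_2 = -a_1 + a_2 - a_3$, $\theta_3 = a_1 - a_2 - a_3$, $\theta_4 = a_1 + a_2 + a_3$, the four sign patterns with an even number of minus signs. The payoff of this step is that the six differences $\theta_i - \theta_j$ ($i < j$) run exactly over $\{2a_j + 2a_k,\ 2a_j - 2a_k : 1 \le j < k \le 3\}$ --- the arguments appearing in the target formula --- and that on $\A_{C_2}$ each of these lies in $[0, \pi]$, since there $0 \le a_j + a_k \le \tfrac{\pi}{2}$ and $0 \le a_j - a_k$; hence the relevant sines are nonnegative and the statement can be written without absolute-value bars.

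Third comes the substantive ingredient: the law of the eigenvalues of $\tilde U \tilde U^{T}$. Because the fibers of $U \mapsto \tilde U \tilde U^{T}$ are $SO(4)$-cosets, $\tilde U \tilde U^{T}$ is Haar-uniform among symmetric special-unitary matrices, so its eigenvalue phases $\psi_j = 2\theta_j$ obey the classical circular-orthogonal distribution, with density proportional to $\prod_{i<j}|e^{i\psi_i} - e^{i\psi_j}| = 2^{6}\prod_{i<j}|\sin(\theta_i - \theta_j)|$ (\cite{Mehta,MKMZ,WCV}). Equivalently, one can produce this density as the Jacobian of the multiplication map $SO(4) \times T \times SO(4) \to SU(4)$, $T$ the diagonal torus: decomposing $\su_4 = \so_4 \oplus \mathfrak p$ with $\mathfrak a \subseteq \mathfrak p$ the diagonal directions, the off-diagonal $(i, j)$-block contributes a factor $|\sin(\theta_i - \theta_j)|$ because $\operatorname{Ad} D$ rotates that block by the angle $\theta_i - \theta_j$ and each restricted root of type AI has multiplicity one. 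Combining with the second step yields $\Pi_* \mu^{\Haar} = C \prod_{1 \le j < k \le 3} \sin(2a_j + 2a_k)\sin(2a_j - 2a_k)$ on $\A_{C_2}$, and $C$ is then forced by $\Pi_* \mu^{\Haar}$ being a probability measure: $C = \left(\int_{\A_{C_2}} \prod_{j<k}\sin(2a_j + 2a_k)\sin(2a_j - 2a_k)\, da_1\, da_2\, da_3\right)^{-1} = 384/\pi$, an elementary (if tedious) iterated integral.

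I expect the only genuine difficulty to be bookkeeping rather than structure: getting the restricted-root multiplicities and the powers of $2$ right --- the paper's own footnote flags that the overall constant is sensitive to coordinate scaling --- and being careful that $\A_{C_2}$ is only a \emph{fundamental domain}, with the identification $(a_1, a_2, 0) \sim (\tfrac{\pi}{2} - a_1, a_2, 0)$ of \Cref{CanonicalDecomp}, since that is precisely the region over which the normalizing integral must be taken.
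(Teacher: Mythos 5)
Your proposal is correct, and it is essentially the standard derivation that underlies the cited references: the paper itself states this lemma without proof, pointing to \cite{WCV,MKMZ,Mehta}, all of which rest on exactly the chain you reconstruct---magic-basis conjugation to turn local equivalence into the Cartan decomposition of $SU(4)/SO(4)$, the circular-orthogonal (type~AI, $\beta=1$) eigenvalue density for $\tilde U \tilde U^T$, the affine change of variables $\theta_i-\theta_j \leftrightarrow \pm 2(a_j\pm a_k)$, and a closing normalization over the alcove. The only notational wrinkle worth flagging is that the paper's displayed formula writes $c_j, c_k$ where your $a_j, a_k$ (the canonical coordinates) are clearly intended; $c_\theta$ is reserved elsewhere for $\cos\theta$, so your reading is the right one.
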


\begin{figure}[t]
    \centering
    \begin{tikzpicture}[scale=0.90]
    \begin{axis}[
    xlabel={CX fractional strength},
    ylabel={infidelity},
    legend entries = {average cost},
    legend pos=south east,
    xlabel near ticks,
    ylabel near ticks,
    scale=0.9,
    ]
    \addplot +[mark=none] table[x expr={\thisrow{strength1}*3.1415/4},y=average_cost]{data/gateset_landscape_1d.dat};
    \end{axis}
    \end{tikzpicture}
    \caption{%
    The expected infidelity of a Haar-random operator decomposed exactly into $\S_x = \{\CX, \XX_x\}$.
    Uses an additive affine error model with offset $b \approx 1.909 \times 10^{-3}$ and slope $\frac{\pi}{4} \cdot m \approx 5.76 \times 10^{-3}$.
    }
    \label{CostGraph1D}
\end{figure}

\noindent
Such trigonometric integrals over tetrahedra can be performed exactly.
Altogether, this gives us quantitative means by which to study the effect of tuning the inputs to a parametric gate set, e.g., $\S(x) = \{\XX_{\frac{\pi}{4}}, \XX_x\}$.
A parametric choice of gate set requires a parametric cost function, and our parametric cost function of interest is as follows:

\begin{definition}\label{AffineErrorModelDefn}
In our setting, we find it experimentally justified to assume an \textit{affine error model}: we take $\XX_x$ to have fidelity cost $mx + b$ for some experimentally determined values of $m$ and $b$.%
\footnote{%
In one experiment, we measured $\frac{\pi}{4} \cdot m \approx 5.76 \times 10^{-3}$ and $b \approx 1.909 \times 10^{-3}$.
This reported offset $b$ incorporates the average infidelity cost of local post-rotations, so as to better model the total circuit execution cost while maintaining local invariance.
}
From this, we build a separable, locally invariant, additive cost component by \[\C''_{\S}(C) = \sum_{\XX_x \in C} (mx + b).\]
\end{definition}

\begin{remark}
The reader who would like to account, in the above framework, for the \emph{worst case} cost of the interleaved single-qubit operations can absorb that extra amount into the $b$ parameter.
\end{remark}

\begin{example}\label{Cost1DExample}
\begin{figure}[t]
    \centering
    \includegraphics[width=0.4\textwidth, trim={2cm 0cm 8cm 3cm}, clip]{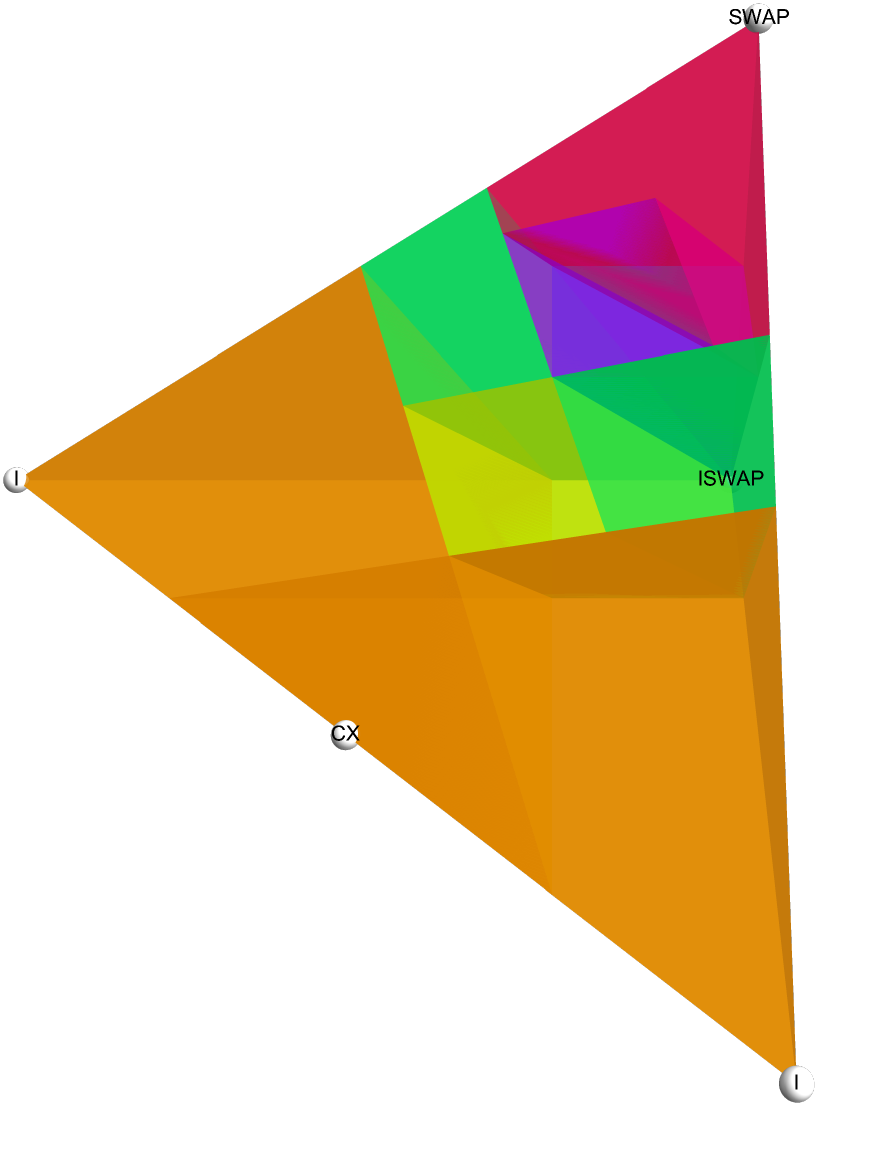}
    \includegraphics[width=0.4\textwidth, trim={5.5cm 4.5cm 4cm 9cm}, clip]{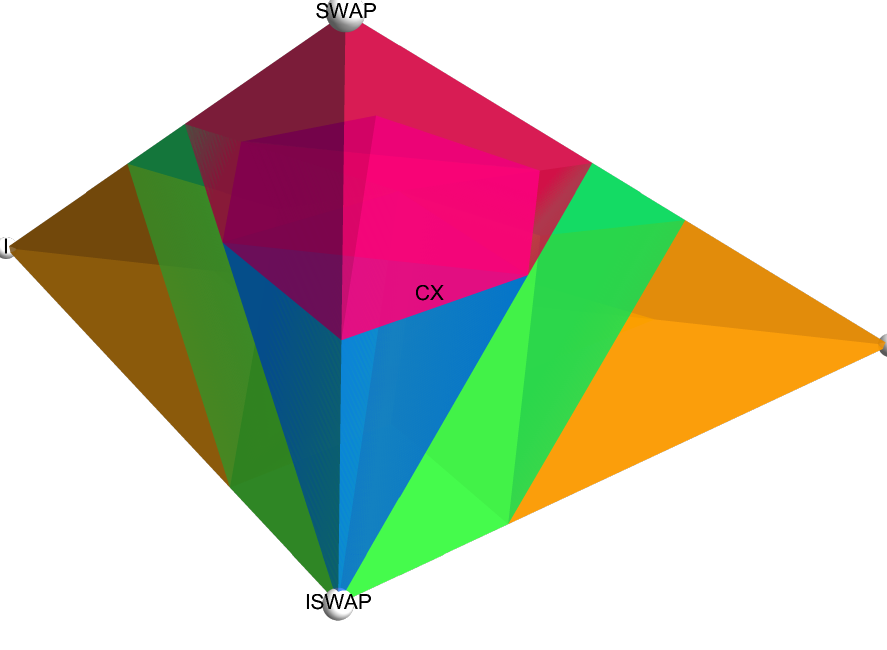}
    \caption{%
    An optimal set of $\S$--circuit polytopes covering $\A_{C_2}$ for $\S = \{\XX_{\frac{\pi}{4}}, \XX_{\frac{\pi}{8}}\}$.
    There are six regions depicted: $(\frac{\pi}{8}, \frac{\pi}{8}, \frac{\pi}{8})$ in orange, $(\frac{\pi}{8}, \frac{\pi}{8}, \frac{\pi}{4})$ in yellow, $(\frac{\pi}{8}, \frac{\pi}{8}, \frac{\pi}{8}, \frac{\pi}{8})$ in green, $(\frac{\pi}{8}, \frac{\pi}{4}, \frac{\pi}{4})$ in blue, $(\frac{\pi}{8}, \frac{\pi}{8}, \frac{\pi}{8}, \frac{\pi}{4})$ in purple, and $(\frac{\pi}{4}, \frac{\pi}{4}, \frac{\pi}{4})$ in red.
    There are also six regions which have circuit depth at most two, hence they do not contribute volume and we suppress them from the picture.
    }
    \label{CXCX2CoverageSets}
\end{figure}
Consider the gate set $\S = \{\XX_{\frac{\pi}{4}}, \XX_x\}$ with cost given by the additive affine error model with parameters $b \approx 1.909 \times 10^{-3}$ and $\frac{\pi}{4} \cdot m \approx 5.76 \times 10^{-3}$.
In \Cref{CostGraph1D}, we display how the expected infidelity of synthesizing an $\S$--circuit for a Haar-randomly chosen unitary varies with the gate set parameter $x$.
The ends of this curve degenerate to the case of the smaller gate set $\{\XX_{\frac{\pi}{4}}\}$.
The precise location of the optimum in the middle depends on the ratio $m / b$; for experimentally realistic error models like the one depicted here, it is located near $\frac{\pi}{8}$, achieving an expected infidelity of $1.62 \times 10^{-2}$.
We observe also that the basin for this minimum is fairly wide, so that $\frac{\pi}{8}$ is a good choice for inclusion in a native gate set even if the error model varies somewhat over time or across a device.%
\footnote{%
Low-denominator rational multiples of $\frac{\pi}{4}$ are also easier to use in a randomized benchmarking scheme.
}
Finally, in \Cref{CXCX2CoverageSets} we depict the optimal synthesis regions within the Weyl alcove for the gate set $\{\XX_{\frac{\pi}{4}}, \XX_{\frac{\pi}{8}}\}$.
\end{example}

\begin{example}\label{Cost2DExample}
\begin{figure}[t]
    \centering
    \begin{tikzpicture}[scale=0.90]
    \begin{axis}[
    xlabel={CX fractional strength},
    ylabel={CX fractional strength},
    zlabel={infidelity},
    legend entries = {average cost},
    legend pos=north west,
    colorbar,
    ]
    \addplot+ [only marks, scatter, point meta=explicit] table[x expr={\thisrow{strength1}*3.1415/4},y expr={\thisrow{strength2}*3.1415/4},meta expr={\thisrow{average_cost}}]{data/gateset_landscape_2d.dat};
    \end{axis}
    \end{tikzpicture}
    \caption{%
    The expected infidelity of a Haar-random operator decomposed exactly into $\S_{x, y} = \{\CX, \XX_x, \XX_y\}$.
    Uses an additive affine error model with offset $b \approx 1.909 \times 10^{-3}$ and slope $\frac{\pi}{4} \cdot m \approx 5.76 \times 10^{-3}$.
    }
    \label{CostGraph2D}
\end{figure}
\begin{figure}[t]
    \centering
    \includegraphics[width=0.4\textwidth, trim={2cm 0cm 8cm 3cm}, clip]{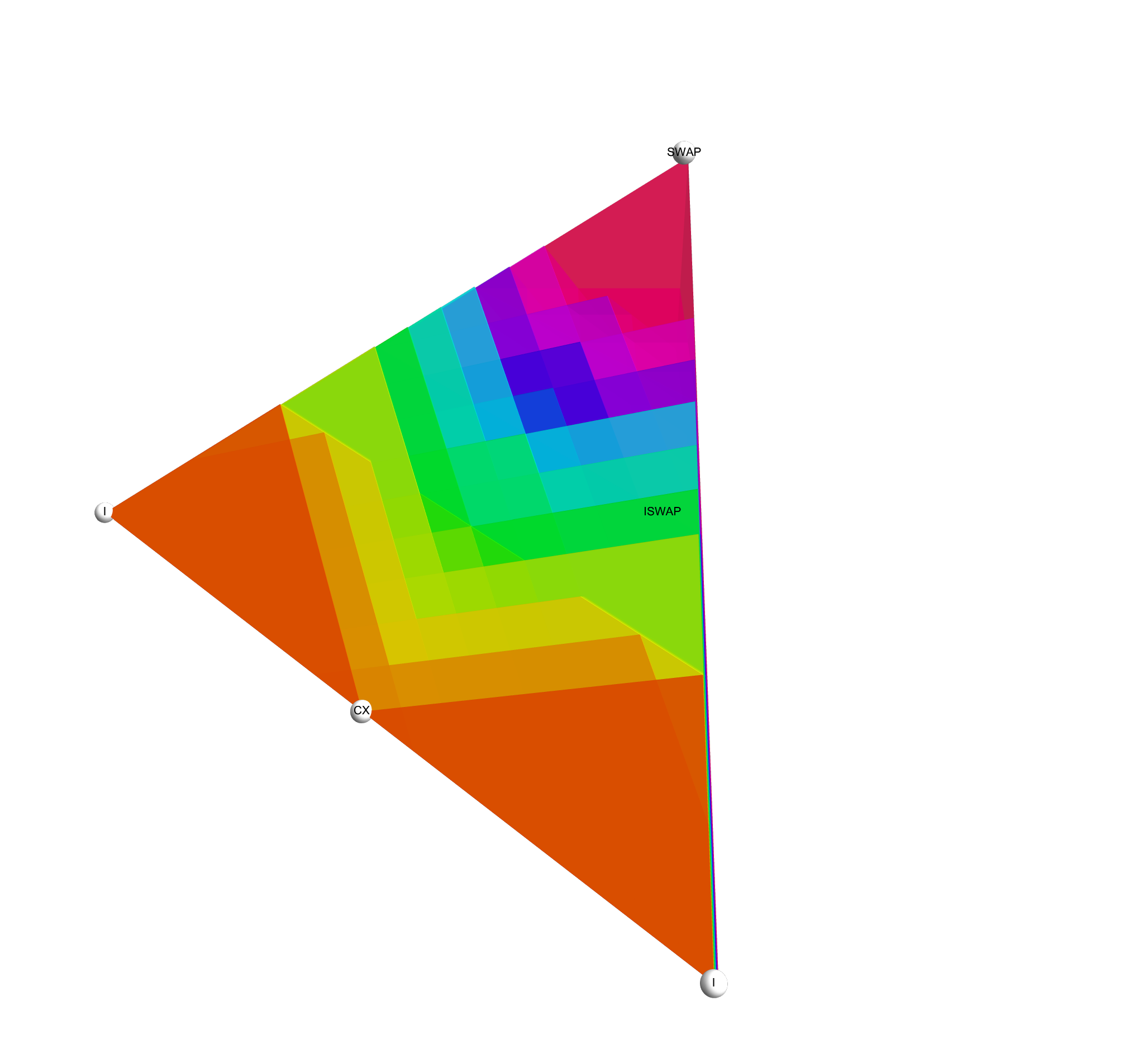}
    \includegraphics[width=0.4\textwidth, trim={8.5cm 6.5cm 5cm 11.5cm}, clip]{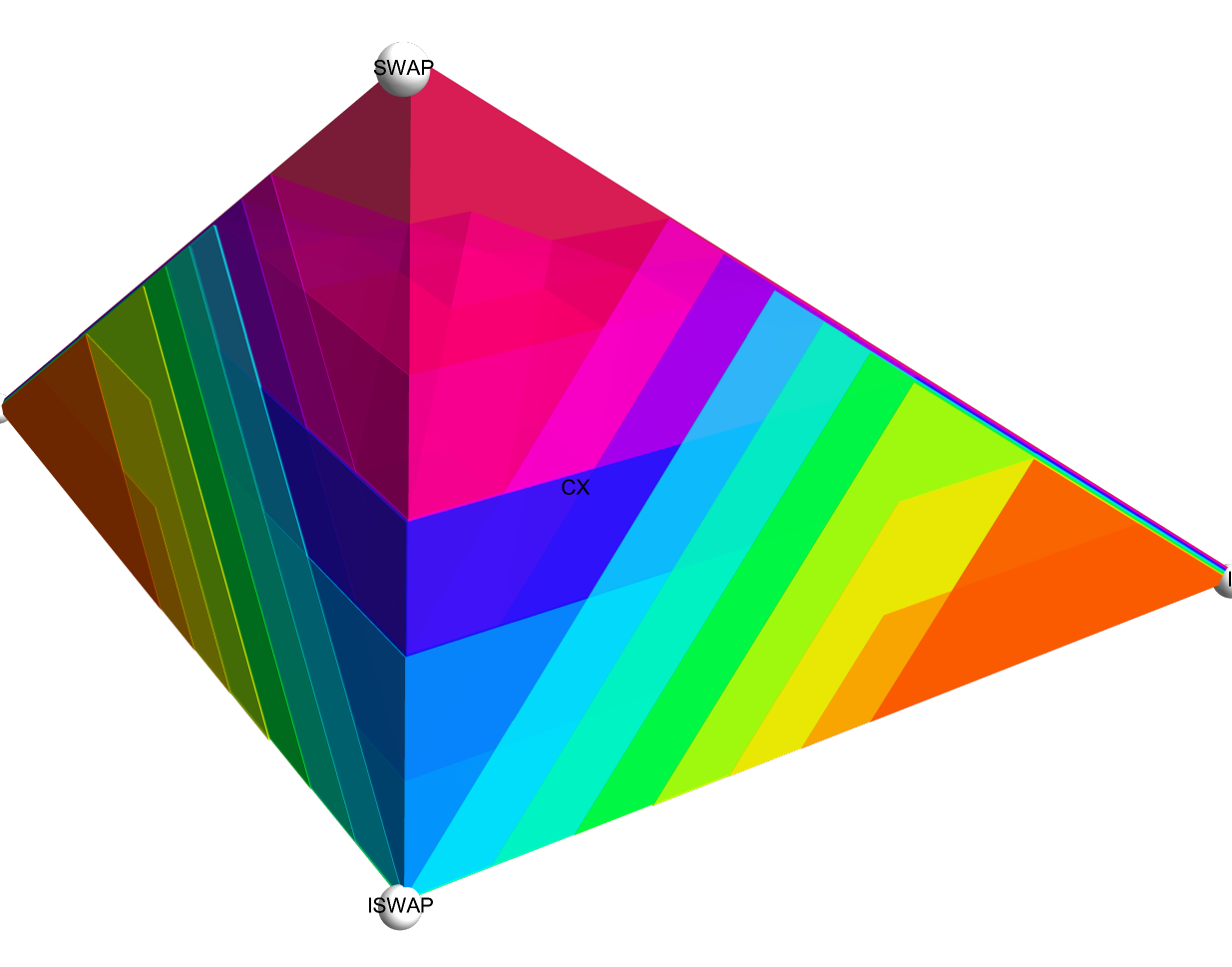}
    \caption{%
    An optimal set of $\S$--circuit polytopes covering $\A_{C_2}$ for $\S = \{\XX_{\frac{\pi}{4}}, \XX_{\frac{\pi}{8}}, \XX_{\frac{\pi}{12}}\}$.
    The nineteen regions are too many to name explicitly, but their hues indicate an increasing cost from a minimum at $\I$ to a maximum at $\SWAP$.
    There are also ten regions which have circuit depth at most two, hence they do not contribute volume and we suppress them from the picture.
    }
    \label{CXCX2CX3CoverageSets}
\end{figure}
Consider next the gate set $\S = \{\XX_{\frac{\pi}{4}}, \XX_x, \XX_y\}$ with the same cost function.
In \Cref{CostGraph2D}, we display the expected infidelity of synthesizing an $\S$--circuit for a Haar-randomly chosen unitary against both parameters $x$ and $y$.
The edges of this triangular figure degenerate to the case discussed in \Cref{Cost1DExample} along the lines $x = \frac{\pi}{4}$, $y = 0$, and $x = y$.
As before, the precise location of the optimum in the middle depends on the ratio $m / b$, but for experimentally realistic error models like the one depicted here, it is located near $(x, y) = (\frac{\pi}{8}, \frac{\pi}{12})$, this time achieving an expected infidelity of $1.51 \times 10^{-2}$.
Again, the basin is fairly wide and the minimum fairly independent of the value of $m / b$, so that $(\frac{\pi}{8}, \frac{\pi}{12})$ are good choices for inclusion in a native gate set even if the observed error model exhibits mild variation over time or across a device.
In \Cref{CXCX2CX3CoverageSets} we depict the optimal synthesis regions within the Weyl alcove for the gate set $\{\XX_{\frac{\pi}{4}}, \XX_{\frac{\pi}{8}}, \XX_{\frac{\pi}{12}}\}$.
\end{example}

\begin{example}
\begin{figure}[t]
    \begin{tikzpicture}[scale=0.90]
    \begin{axis}[
    xlabel={CX fractional strength},
    ylabel={infidelity},
    legend entries = {average cost},
    legend pos=south east,
    ]
    \addplot table[x expr={\thisrow{strength1}*3.1415/4},y=average_cost]{data/approx_gateset_landscape_1d.dat};
    \end{axis}
    \end{tikzpicture}
    \caption{
    The expected infidelity of a Haar-random operator decomposed approximately into $\S_x = \{\CX, \XX_x\}$.
    Uses an additive affine error model with offset $b \approx 1.909 \times 10^{-3}$ and slope $\frac{\pi}{4} \cdot m \approx 5.76 \times 10^{-3}$.
    }
    \label{Approx1DLandscape}
\end{figure}
\begin{figure}[t]
    \centering
    \begin{tikzpicture}
    \begin{axis}[
    yscale=2.0,
    xscale=0.80,
    xbar,
    ytick=data,
    yticklabels from table={data/approximate_circuit_dist.dat}{label},
    bar width=4pt,
    xmajorgrids=true,
    ]
    \pgfplotsset{every tick label/.append style={font=\tiny}}
    \addplot table[y=index, x=percent=]{data/approximate_circuit_dist.dat};
    \end{axis}
    \end{tikzpicture}
    \caption{%
    The distribution of circuit types encountered when approximately synthesizing 31,800 Haar-random two-qubit unitaries according to the additive affine error model with offset $b \approx 1.909 \times 10^{-3}$ and slope $\frac{\pi}{4} \cdot m \approx 5.76 \times 10^{-3}$.
    }
    \label{ApproximateCircuitDistribution}
\end{figure}
Taking these results for exact synthesis as inspiration, we can also explore effects introduced by approximate synthesis.
Our results here cannot be so clean, because we lose access to our method for analytic calculation, but we can still perform Monte Carlo experiments to analyze the relationship between $\S_x = \{\CX, \XX_x\}$ and the expected infidelity.
The plot in \Cref{Approx1DLandscape} shares many of the same qualitative features as \Cref{CostGraph1D} (e.g., the approximate position of the global minimum, and the non-concave kink near $x = \frac{\pi}{2} \cdot 1/3$), with an overall vertical shift coming from the approximation savings.
The global optimum for approximate synthesis into $\S_{x, y} = \{\CX, \XX_x, \XX_y\}$ is again near to the global optimum for exact synthesis, so we re-use the user-friendly value of $(x, y) = (\frac{\pi}{8}, \frac{\pi}{12})$ and depict in \Cref{ApproximateCircuitDistribution} the frequencies that these regions are used by approximate synthesis of Haar-random operations.
%
\end{example}

\begin{remark}\label{EfficiencyLowerBound}
\begin{figure}[t]
    \centering
    \resizebox{\columnwidth}{!}{%
    \begin{tabular}{@{}lccrr@{}} \toprule
    Gateset $\S$ & Approx.? & $\operatorname{argmin} / \frac{\pi}{4}$ & $\min \<\C_{\S}\>$ & (\%) \\
    \midrule
    $\{\CX\}$ & & - & $2.279 \times 10^{-2}$ & (100\%) \\
    $\{\CX\}$ & \checkmark & - & $2.058 \times 10^{-2}$ & (90.3\%) \\
    $\{\CX, \XX_x\}$ & & 0.52041 & $1.617 \times 10^{-2}$ & (70.9\%) \\
    $\{\CX, \XX_x\}$ & \checkmark & 0.51932 & $1.526 \times 10^{-2}$ & (67.0\%) \\
    $\{\CX, \XX_x, \XX_y\}$ & & $\begin{array}{c} 0.61856, \\ 0.41872 \end{array}$ & $1.510 \times 10^{-2}$ & (66.3\%) \\
    $\{\CX, \XX_x, \XX_y\}$ & \checkmark & $\begin{array}{c} 0.66362, \\ 0.43004 \end{array}$ & $1.445 \times 10^{-2}$ & (63.4\%) \\
    $\{\XX_{\mathit{cts}}\}$ & & - & $1.437 \times 10^{-2}$ & (63.1\%) \\
    $\{\XX_{\mathit{cts}}\}$ & \checkmark & - & $1.394 \times 10^{-2}$ & (61.2\%) \\
    \bottomrule
    \end{tabular}%
    }
    \caption{%
    The expected infidelity of a Haar-random operator with optimal decomposition into optimally chosen gate-sets of various sizes.
    Uses an additive affine error model with offset $b \approx 1.909 \times 10^{-3}$ and slope $\frac{\pi}{4} \cdot m \approx 5.76 \times 10^{-3}$.
    Note that $\{\XX_{\mathit{cts}}\}$ may be physically unrealistic.
    }
    \label{OptimaTable}
\end{figure}
In the limit where $\S$ contains all $\XX$ interactions, the most efficient circuit for $\CAN(a_1, a_2, a_3)$ is given by a product \[\CAN(a_1, a_2, a_3) = \]\[\CAN(a_1) \cdot \CAN(0, a_2) \cdot \CAN(0, 0, a_3),\] where each factor in the product is a Weyl reflection of a single $\XX$ gate of the same parameter, and where factors are dropped when the relevant parameter vanishes.
Under the assumption of an additive affine error model, this establishes a lower bound for how efficient we can expect our circuits to possibly be, as they are assembled from a more restrictive gate set.

Comparing \Cref{Cost1DExample} and \Cref{Cost2DExample}, we observe that there are rapidly diminishing returns to enlarging the native gate set.
Specialized to the same error model as in the Examples, the performance lower bound argued above is $(3/2 \cdot \frac{\pi}{4}) \cdot m + 3 b$, resulting in the table in \Cref{OptimaTable}.
\end{remark}


\begin{remark}
\begin{figure}[t]
    \centering
    \begin{tikzpicture}
    \begin{axis}[
    yscale=0.8,
    xscale=0.88,
    xbar,
    ytick=data,
    yticklabels from table={data/approximate_mirror_circuit_dist.dat}{label},
    bar width=4pt,
    xmajorgrids=true,
    ]
    \pgfplotsset{every tick label/.append style={font=\tiny}}
    \addplot table[y=index, x=percent=]{data/approximate_mirror_circuit_dist.dat};
    \end{axis}
    \end{tikzpicture}
    \caption{%
    The distribution of circuit types encountered when approximately synthesizing 49,700 Haar-random two-qubit unitaries, allowing mirroring, according to the additive affine error model with offset $b \approx 1.909 \times 10^{-3}$ and slope $\frac{\pi}{4} \cdot m \approx 5.76 \times 10^{-3}$.
    }
    \label{ApproximateMirrorCircuitDistribution}
\end{figure}
\begin{figure}[t]
    \centering
    \resizebox{\columnwidth}{!}{%
    \begin{tabular}{@{}lccrr@{}} \toprule
    Gateset $\S$ & Approx.? & $\operatorname{argmin} / \frac{\pi}{4}$ & $\min \<\C_{\S}\>$ & (\%) \\
    \midrule
    $\{\CX\}$ & & - & $2.279 \times 10^{-2}$ & (100\%) \\
    $\{\CX\}$ & \checkmark & - & $1.895 \times 10^{-2}$ & (83.2\%) \\
    $\{\CX, \XX_x\}$ & & $0.49970$ & $1.437 \times 10^{-2}$ & (63.1\%) \\
    $\{\CX, \XX_x\}$ & \checkmark & $0.46411$ & $1.352 \times 10^{-2}$ & (59.3\%) \\
    $\{\CX, \XX_x, \XX_y\}$ & & $\begin{array}{c} 0.49904, \\ 0.24991 \end{array}$ & $1.350 \times 10^{-2}$ & (59.2\%) \\
    $\{\XX_{\mathit{cts}}\}$ & & - & $1.304 \times 10^{-2}$ & (57.2\%) \\
    $\{\CX, \XX_x, \XX_y\}$ & \checkmark & $\begin{array}{c} 0.54244, \\ 0.37083 \end{array}$ & $1.300 \times 10^{-2}$ & (57.0\%) \\
    $\{\XX_{\mathit{cts}}\}$ & \checkmark & - & $1.241 \times 10^{-2}$ & (54.7\%) \\
    \bottomrule
    \end{tabular}%
    }
    \caption{%
    The expected infidelity of a Haar-random operator \emph{or its mirror} with optimal decomposition into optimally chosen gate-sets of various sizes.
    Uses an additive affine error model with offset $b \approx 1.909 \times 10^{-3}$ and slope $\frac{\pi}{4} \cdot m \approx 5.76 \times 10^{-3}$.
    Note that $\{\XX_{\mathit{cts}}\}$ may be physically unrealistic.
    }
    \label{MirrorTable}
\end{figure}
For a two-qubit unitary $U$, its \textit{mirror} is the gate $U \cdot \SWAP$.
The mirror of a canonical gate $\CAN(a_1, a_2, a_3)$ is again canonical, given by the formula \[\begin{cases}\CAN\left(\frac{\pi}{4} + a_3, \frac{\pi}{4} - a_2, \frac{\pi}{4} - a_1\right) & \text{when $a_1 \le \frac{\pi}{4}$}, \\ \CAN\left(\frac{\pi}{4} - a_3, \frac{\pi}{4} - a_2, a_1 - \frac{\pi}{4} \right) & \text{otherwise}. \end{cases}\]
This formula shows that mirroring interchanges the regions of $\A_{C_2}$ with the most and least infidelity cost, suggesting that our technique may be particularly fruitful at reducing the cost of mirrorable gates.
We summarize the numerical results in \Cref{MirrorTable}, and we depict in \Cref{ApproximateMirrorCircuitDistribution} the relative frequency of different circuit templates when synthesizing up to mirroring.
The main points are that these two synthesis strategies are ``compatible'', in that mirroring can be used in tandem with fractional synthesis to effect a combined decrease in infidelity, and that the optimal choice of finite gateset extension is somewhat different between the two but not wildly so.
\end{remark}

\begin{example}
\begin{figure}[t]
    \centering
    \resizebox{\columnwidth}{!}{%
    \begin{tabular}{@{}ccccr@{}} \toprule
    Gateset $\S$ & Approx.? & Mirror? & $\<\C_{\S}\>$ & (\%) \\
    \midrule
    $\{\CX, \XX_{\frac{\pi}{8}}\}$ & & & $1.619 \times 10^{-2}$ & (71.0\%) \\
    " & \checkmark & & $1.534 \times 10^{-2}$ & (67.3\%) \\
    " & & \checkmark & $1.437 \times 10^{-2}$ & (63.1\%) \\
    " & \checkmark & \checkmark & $1.374 \times 10^{-2}$ & (60.3\%) \\
    \midrule
    $\{\CX, \XX_{\frac{\pi}{8}}, \XX_{\frac{\pi}{12}}\}$ & & & $1.564 \times 10^{-2}$ & (68.6\%) \\
    " & \checkmark & & $1.483 \times 10^{-2}$ & (65.1\%) \\
    " & & \checkmark & $1.352 \times 10^{-2}$ & (59.3\%) \\
    " & \checkmark & \checkmark & $1.304 \times 10^{-2}$ & (57.2\%) \\
    \bottomrule
    \end{tabular}%
    }
    \caption{%
    The expected infidelity of a Haar-random operator with various optimal synthesis methods into a fixed pair of convenient gate sets.
    Uses an additive affine error model with offset $b \approx 1.909 \times 10^{-3}$ and slope $\frac{\pi}{4} \cdot m \approx 5.76 \times 10^{-3}$.
    Compare the expected fidelities with those advertised in \Cref{OptimaTable} and \Cref{MirrorTable}.
    }
    \label{NiceAnglesTable}
\end{figure}
We summarize the statistics on exact, approximate, and mirrored synthesis for the gate sets $\{\XX_{\frac{\pi}{4}}\}$, $\{\XX_{\frac{\pi}{4}}, \XX_{\frac{\pi}{8}}\}$, and $\{\XX_{\frac{\pi}{4}}, \XX_{\frac{\pi}{8}}, \XX_{\frac{\pi}{12}}\}$ in \Cref{NiceAnglesTable}.
These fractional gates are chosen because they are near to the unconstrained optima and because they are integer fractional iterates of $\CX$, which means they can be easily calibrated and benchmarked.
These are to be compared with the precise optima reported in \Cref{OptimaTable} and \Cref{MirrorTable}.
The main point is that using these fractional gates, rather than the optimal choice of fractional exponents, comes at only a modest cost in infidelity while greatly simplifying engineering.
\end{example}

\section*{Acknowledgements}

We would like to extend our thanks
to David Avis for assisting with \lrs~\cite{Avis,AvisFukuda};
to Jay Gambetta, for arranging the productive environment that gave rise to this work;
to Isaac Lauer, for keeping us grounded in the realities of working with superconducting hardware;
to Douglas Pearson, for helpful comments on a draft of this manuscript;
to Michael Mann, for providing the musical score to which this project was carried out;
and to the anonymous referees, whose helpful comments significantly improved the readability of this paper.

\clearpage

\bibliographystyle{quantum}
\bibliography{xx_synthesis}

\clearpage

\appendix
\section{Case-work for the approximation theorem}\label{ProofOfApproximationTheorem}

In this Appendix, we chase out the requisite case work to prove \Cref{ApproximationTheorem}.
We begin with some reductions.

\begin{lemma}
Let $a = (a_1, a_2, a_3)$ be a positive canonical triple, and let $P \subseteq \A_{C_2}$ be a polyhedron satisfying the reflection-closure property \[P = \left\{\left(\frac{\pi}{2} - b_1, b_2, b_3\right) \middle| b \in P\right\}.\]
The point $b \in P$ nearest in infidelity distance to the point $a \in \A_{C_2}$ satisfies $b_1 \le \frac{\pi}{4}$ if $a_1 \le \frac{\pi}{4}$ and $b_1 \ge \frac{\pi}{4}$ if $a_1 \ge \frac{\pi}{4}$.
\end{lemma}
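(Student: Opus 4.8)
The idea is to reduce ``nearest in infidelity distance'' to a one-variable optimization in the first coordinate, and then to exploit a single algebraic identity relating $a_1-b_1$ to $a_1-(\tfrac\pi2-b_1)$. By the formula of \cite{CBSNG} recalled above, minimizing $I(\CAN(a),\CAN(b))$ over $b\in P$ is equivalent to maximizing
\[
f(b) \;=\; \prod_{j}\cos^2\tfrac{a_j-b_j}{2} \;+\; \prod_{j}\sin^2\tfrac{a_j-b_j}{2}.
\]
Freezing $b_2,b_3$ and setting $A=\cos^2\tfrac{a_2-b_2}{2}\cos^2\tfrac{a_3-b_3}{2}$, $B=\sin^2\tfrac{a_2-b_2}{2}\sin^2\tfrac{a_3-b_3}{2}$, I would rewrite this via the half-angle formulas as $f(b)=\tfrac{A+B}{2}+\tfrac{A-B}{2}\cos(a_1-b_1)$, which is a nondecreasing function of $\cos(a_1-b_1)$ provided $A\ge B\ge0$. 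That inequality holds because a positive canonical triple satisfies $a_1\ge a_2\ge a_3\ge0$ and $a_1+a_2\le\tfrac\pi2$, hence $a_2,a_3\le\tfrac\pi4$ (and likewise $b_2,b_3\le\tfrac\pi4$ since $b\in\A_{C_2}$), so each relevant half-angle difference has absolute value at most $\tfrac\pi8$ and the squared cosine dominates the squared sine in each factor.

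Next I would compare a given $b\in P$ with its reflection $b'=(\tfrac\pi2-b_1,\,b_2,\,b_3)$, which lies in $P$ by the reflection-closure hypothesis and has the same values of $A$ and $B$. The two competing quantities are $\cos(a_1-b_1)$ and $\cos\!\big(a_1-(\tfrac\pi2-b_1)\big)=\cos(a_1+b_1-\tfrac\pi2)$; since $a_1,b_1\in[0,\tfrac\pi2]$, both arguments lie in $[-\tfrac\pi2,\tfrac\pi2]$, where $\cos$ is a decreasing function of the absolute value of its argument. The key identity is
\[
\big(a_1+b_1-\tfrac\pi2\big)^2-\big(a_1-b_1\big)^2 \;=\; 4\big(a_1-\tfrac\pi4\big)\big(b_1-\tfrac\pi4\big),
\]
which shows $|a_1+b_1-\tfrac\pi2|\le|a_1-b_1|$ precisely when $a_1$ and $b_1$ lie on opposite sides of $\tfrac\pi4$. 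Consequently, whenever $a_1\le\tfrac\pi4\le b_1$ (or the reverse inequalities), one gets $\cos(a_1+b_1-\tfrac\pi2)\ge\cos(a_1-b_1)$, hence $f(b')\ge f(b)$.

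To finish: take any $b\in P$ nearest to $a$ in infidelity distance. If $a_1\le\tfrac\pi4$ but $b_1>\tfrac\pi4$, the comparison above gives $f(b')\ge f(b)$, so $b'$ is also nearest and has first coordinate $\tfrac\pi2-b_1<\tfrac\pi4$; replacing $b$ by $b'$ produces a nearest point with $b_1\le\tfrac\pi4$. The case $a_1\ge\tfrac\pi4$ is identical with all inequalities reversed. I expect the only genuinely delicate point to be the verification that $A\ge B$ — that is, carefully tracking the coordinate ranges so that both the $\cos$/$\sin$ comparison within each factor and the monotonicity of $\cos$ on $[-\tfrac\pi2,\tfrac\pi2]$ are legitimate; once those bounds are in hand, the rest is the single clean identity displayed above.
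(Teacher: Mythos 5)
Your proof is correct and follows the same architecture as the paper's: use the $[\text{CBSNG}]$ formula, bound the $j=2,3$ factors so that the optimization reduces to a monotone one-variable comparison in $\cos(a_1 - b_1)$, then compare $b$ with its reflection $b' = (\tfrac{\pi}{2}-b_1, b_2, b_3)$. What you add, and what is genuinely valuable, is the clean identity
\[
\big(a_1+b_1-\tfrac{\pi}{2}\big)^2-\big(a_1-b_1\big)^2 = 4\big(a_1-\tfrac{\pi}{4}\big)\big(b_1-\tfrac{\pi}{4}\big),
\]
together with the observation that both arguments lie in $[-\tfrac{\pi}{2},\tfrac{\pi}{2}]$, where cosine is decreasing in absolute value; this pins down exactly when the reflection helps, namely when $a_1$ and $b_1$ sit on opposite sides of $\tfrac{\pi}{4}$. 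The paper at this point merely asserts that reflection ``trades'' $\cos^2(a_1-b_1)$ and $\sin^2(a_1-b_1)$ and that ``$\cos^2(a_1-b_1) \ge \tfrac12$ holds exactly when the conclusion of the Lemma does''; neither of these is literally correct (reflection sends $\cos^2(a_1-b_1)$ to $\sin^2(a_1+b_1)$, not to $\sin^2(a_1-b_1)$, and $|a_1-b_1|\le \tfrac{\pi}{4}$ is not equivalent to $a_1,b_1$ lying on the same side of $\tfrac{\pi}{4}$). Your identity supplies precisely the step that the paper hand-waves. One small point worth being explicit about in a final write-up: since the minimizer may not be unique, the statement should be read as asserting the existence of a nearest point on the appropriate side of $\tfrac{\pi}{4}$ — which is exactly what your ``replace $b$ by $b'$'' argument delivers.
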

\begin{proof}
In the expression \[\prod_j \cos^2(a_j - b_j) + \prod_j \sin^2(a_j - b_j),\] the bounds $0 \le a_2, b_2, a_3, b_3 \le \frac{\pi}{4}$ entail that the first summand is bounded from below by $\frac{1}{4} \cos^2(a_1 - b_1)$ and the second summand is bounded from above by $\frac{1}{4} \sin^2(a_1 - b_1)$.
Notice that replacing $b$ with its reflection $(\frac{\pi}{2} - b_1, b_2, b_3)$ trades the positions in the expression of $\cos^2(a_1 - b_1)$ and $\sin^2(a_1 - b_1)$.
Additionally, notice that the expression is maximized when $\cos^2(a_1 - b_1)$ takes on the larger of the two values, i.e., when $\cos^2(a_1 - b_1) \ge \frac{1}{2}$.
This then holds exactly when the conclusion of the Lemma does.
\end{proof}


\begin{corollary}
We need only handle the case $a_1 \le \frac{\pi}{4}$.
\end{corollary}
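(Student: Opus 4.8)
The plan is to obtain the corollary by combining the preceding Lemma with a reflection symmetry of the hypotheses and conclusion of \Cref{ApproximationTheorem}. Write $\rho$ for the affine involution $(x_1, x_2, x_3) \mapsto (\frac{\pi}{2} - x_1, x_2, x_3)$ of $\R^3$. A Remark following \Cref{MainGlobalTheorem} records that every $\XX$--circuit polytope $P$ is a union of two convex bodies interchanged by $\rho$, so $P$ is $\rho$--invariant and $\rho$ permutes the open facets of $P$. The first thing I would check is that $\rho$ is simultaneously an isometry for Euclidean distance and for infidelity distance, in the relevant sense: applying $\rho$ to \emph{both} arguments sends the coordinate differences $(\delta_1, \delta_2, \delta_3) = (a_1 - b_1, a_2 - b_2, a_3 - b_3)$ to $(-\delta_1, \delta_2, \delta_3)$, and both $\|a - b\|^2$ and the infidelity formula $\prod_j \cos^2(\delta_j/2) + \prod_j \sin^2(\delta_j/2)$ depend on the $\delta_j$ only through even functions. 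Hence $\rho$ carries a critical point of $I|_{a = p,\, b \in F}$ on a facet $F$ to a critical point of $I|_{a = \rho p,\, b \in \rho F}$, and likewise for Euclidean distance; so an instance of \Cref{ApproximationTheorem} for the datum $(P, F, p)$ is equivalent to the instance for $(P, \rho F, \rho p)$.

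Next I would verify that $\rho$ respects the standing assumptions on the data. It preserves the set of $\XX$--circuit polytopes (since $\rho P = P$), and it interchanges positive canonical triples with $a_1 \le \frac{\pi}{4}$ and those with first coordinate $\ge \frac{\pi}{4}$: when $a$ is positive canonical with $a_1 \ge \frac{\pi}{4}$, the sortedness $a_1 \ge a_2$ and the bound $a_1 + a_2 \le \frac{\pi}{2}$ show that $\rho a = (\frac{\pi}{2} - a_1, a_2, a_3)$ is again a positive canonical triple, now with first coordinate $\le \frac{\pi}{4}$. Therefore any instance of \Cref{ApproximationTheorem} with $a_1 \ge \frac{\pi}{4}$ is, after applying $\rho$, an instance with first coordinate $\le \frac{\pi}{4}$, and transporting the conclusion back through $\rho$ finishes that case. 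This is exactly the assertion of the corollary. For completeness I would also note that the preceding Lemma then tells us the infidelity-nearest point of $P$ to $a$ actually lies in the sub-body $\{b_1 \le \frac{\pi}{4}\}$, which is what lets the subsequent facet-by-facet analysis restrict to that half.

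I do not expect a real obstacle here: both isometry checks and the positive-canonical bookkeeping are routine. The only point requiring a moment's care is the degenerate plane $a_3 = 0$, where the identification $(a_1, a_2, 0) \sim (\frac{\pi}{2} - a_1, a_2, 0)$ from \Cref{CanonicalDecomp} makes $\rho$ act as the identity on $\A_{C_2}$; there the two cases $a_1 \le \frac{\pi}{4}$ and $a_1 \ge \frac{\pi}{4}$ already coincide and nothing needs to be reduced, so it suffices to confirm that this identification does not remove any of the facets that the later case work must treat.
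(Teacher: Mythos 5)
Your proposal is correct and is essentially the same reflection-symmetry argument as the paper: invoke the preceding Lemma (which applies because \Cref{MainGlobalTheorem} makes the circuit polytope reflection-closed), and use the fact that both infidelity and Euclidean distance are invariant under reflecting the first coordinate of $a$ and $b$ simultaneously. Your added observations — that the reflection is also a Euclidean isometry (which the paper leaves implicit, though it is needed since \Cref{ApproximationTheorem} compares the two notions of critical point) and the bookkeeping at $a_3 = 0$ — are correct refinements, not a different route.
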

\begin{proof}
\Cref{MainGlobalTheorem} shows that $\XX$--circuit polytopes are reflection-invariant, so that we may apply the Lemma.
Furthermore, since average infidelity is invariant under replacing \emph{both} coordinates by their reflections
\begin{align*}
    (a_1, a_2, a_3) & \mapsto \left(\frac{\pi}{2} - a_1, a_2, a_3\right), \\
    (b_1, b_2, b_3) & \mapsto \left(\frac{\pi}{2} - b_1, b_2, b_3\right).
\end{align*}
we may reduce to one case of the Lemma, and we choose the case indicated in the Corollary statement.
\end{proof}

From here, we turn to the actual casework, walking first over the codimensions of the various facets of some implicitly understood $\XX$--circuit polytope and then over their possible slopes (noting that in each codimension there is a finite set of possibilities).
The codimension $0$ and $3$ (i.e., top- and bottom-dimensional) cases are trivial:

\begin{lemma}[{cf.\ \Cref{InfidelityProperties}}]
The infidelity functional $I|_a$ is extremized on the interior of a codimension $0$ facet if and only if $a$ is a member of that facet.
\qed
\end{lemma}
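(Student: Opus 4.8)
The plan is to reduce the statement to a one-line gradient computation using the closed form for average infidelity between canonical gates. Recall the formula recalled just above: for $U = \CAN(a_1,a_2,a_3)$ and $V = \CAN(b_1,b_2,b_3)$ with $\delta_j = a_j - b_j$, one has $20\, I(U,V) = 16 - 16 f(\delta)$, where $f(\delta) = \prod_j \cos^2\tfrac{\delta_j}{2} + \prod_j \sin^2\tfrac{\delta_j}{2}$. Since a codimension-$0$ facet $F$ is open and full-dimensional in $\A_{C_2}$, it imposes no active constraints, so ``extremized on the interior of $F$'' means: $b \mapsto f(a-b)$ has an interior critical point on $F$. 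Throughout we work under the reduction $a_1 \le \tfrac{\pi}{4}$ supplied by the preceding Corollary.

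For the easy ($\Leftarrow$) direction, if $a \in F$ then $b = a$ is an interior point at which $I|_a$ takes the value $0$; by \Cref{InfidelityProperties}, $I$ is nonnegative and vanishes only when $\CAN(a) = \CAN(b)$, i.e.\ exactly when $a = b$, so $b = a$ is the global minimum of $I|_a$ and in particular an extremum interior to $F$.

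For the ($\Rightarrow$) direction I would differentiate directly: $\partial f/\partial b_j = \tfrac12 \sin(a_j - b_j)\bigl(\prod_{k \ne j}\cos^2\tfrac{\delta_k}{2} - \prod_{k \ne j}\sin^2\tfrac{\delta_k}{2}\bigr)$. The crux is that the parenthesized factor is strictly positive at every point of (the interior of) a codimension-$0$ facet: there all the defining inequalities of $\A_{C_2}$ are strict, so $a_2,a_3,b_2,b_3 \in [0,\tfrac{\pi}{4}]$ and $b_1 + b_2 < \tfrac{\pi}{2}$, whence $-\tfrac{\pi}{2} < \delta_1 < \tfrac{\pi}{4}$; therefore $\tan^2\tfrac{\delta_k}{2} < 1$ for $k \in \{2,3\}$ and $\tan^2\tfrac{\delta_1}{2} \le 1$, so for each $j$ the subtracted product $\prod_{k\ne j}\sin^2\tfrac{\delta_k}{2}$ is strictly smaller than the (nonzero) product $\prod_{k\ne j}\cos^2\tfrac{\delta_k}{2}$. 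Hence $\partial f/\partial b_j = 0$ forces $\sin(a_j - b_j) = 0$, whose only solution in the admissible range is $b_j = a_j$. Thus the unique interior critical point of $I|_a$ on $F$ is $b = a$, and its existence forces $a \in F$. (As a byproduct, this also identifies the critical point with the one for Euclidean distance, matching the role of this case inside \Cref{ApproximationTheorem}.)

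The only mild obstacle is the boundary bookkeeping in the third paragraph: one must be careful that strict interiority of a codimension-$0$ facet really does keep each $\tan^2\tfrac{\delta_k}{2}$ strictly below (respectively, for $k=1$, away from the degenerate value of) the thresholds at which the parenthesized factor could vanish, so that no spurious interior critical point is introduced. Everything else is a single derivative together with the nonnegativity and vanishing locus of $I$.
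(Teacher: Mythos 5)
Your proof is correct. The paper does not actually write out a proof of this lemma: it is stated with a terminal $\qed$ and only a ``cf.\ \Cref{InfidelityProperties}'' pointer, treating it as immediate from the nonnegativity and vanishing-locus properties of average infidelity. Your argument is the natural filling-in: the $(\Leftarrow)$ direction uses precisely the facts quoted in \Cref{InfidelityProperties} (that $I \ge 0$ with $I(U,V)=0$ iff $U=V$, so $b=a$ is the unique global minimizer), and the $(\Rightarrow)$ direction supplies what the paper leaves tacit, namely that $b=a$ is the \emph{only} critical point of $b \mapsto I(\CAN(a),\CAN(b))$ in the admissible range, via the factored gradient
\[
\frac{\partial f}{\partial b_j} = \tfrac12 \sin(a_j - b_j)\Bigl(\prod_{k\ne j}\cos^2\tfrac{\delta_k}{2} - \prod_{k\ne j}\sin^2\tfrac{\delta_k}{2}\Bigr),
\]
whose second factor is shown to be strictly positive on the interior of the facet. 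The one small imprecision is writing $\tan^2\tfrac{\delta_1}{2} \le 1$: at interior points one in fact has $\delta_1 \in (-\tfrac{\pi}{2}, \tfrac{\pi}{4})$ strictly, hence $\tan^2\tfrac{\delta_1}{2} < 1$; but this slip is harmless since the companion factors $\tan^2\tfrac{\delta_2}{2}, \tan^2\tfrac{\delta_3}{2}$ are always strictly below $1$, so every product $\prod_{k\ne j}\tan^2\tfrac{\delta_k}{2}$ is strictly below $1$ regardless.
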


\begin{lemma}
The codimension $3$ facets contribute a finite set of points at which the restricted infidelity function $I|_a$ may be extremized.
\qed
\end{lemma}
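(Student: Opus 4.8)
The plan is to treat this as a purely combinatorial observation about the polytope, the optimization content being essentially vacuous. First I would recall that here ``facet'' is used in the loose sense of ``face of arbitrary codimension'', as in \Cref{ApproximationTheorem} and the surrounding casework, so that a codimension-$3$ facet of an $\XX$--circuit polytope $P \subseteq \A_{C_2}$ is a $0$-dimensional face, i.e.\ a vertex, since $\A_{C_2}$ is three-dimensional. By \Cref{MainGlobalTheorem}, $P$ is the intersection of finitely many closed half-spaces, and it is bounded because it is contained in the bounded region $\A_{C_2}$; hence $P$ is a genuine polytope and has only finitely many vertices. These vertices are moreover effectively computable, arising as the finitely many common solutions of triples of bounding hyperplanes which in addition satisfy the remaining inequalities — exactly the computation invoked in \Cref{EffectiveSynthesis}.

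Next I would observe that restricting the smooth functional $I|_a$ to a single point $q$ makes $q$ trivially a critical point of the restriction: there are no tangent directions along which to vary, so the first-order optimality condition holds automatically. Thus each codimension-$3$ facet contributes exactly one candidate point, and the union over the finitely many such facets is finite, as claimed. Together with the codimension-$0$ Lemma just stated and the codimension-$1$ and codimension-$2$ cases to be handled below, this accounts for every face of $P$ and hence locates all critical points of $I|_a$ constrained to $P$.

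I do not anticipate a real obstacle here: the only point needing a word of care is confirming that an $\XX$--circuit polytope cannot be unbounded or otherwise degenerate in a way that would break the finiteness of its vertex set, which is immediate from its explicit bounded inequality presentation in \Cref{MainGlobalTheorem}. The genuine substance of \Cref{ApproximationTheorem} lies entirely in the codimension-$1$ and codimension-$2$ cases, where the infidelity critical-point condition and the Euclidean critical-point condition must actually be compared slope-by-slope.
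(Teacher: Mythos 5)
Your proposal is correct and matches the paper's treatment: the paper marks this lemma with an immediate \qed (no proof), implicitly treating it as the trivial observation you spell out, namely that the codimension-$3$ faces of a bounded polyhedron in the three-dimensional alcove are its finitely many vertices, each of which is vacuously a critical point of the restricted functional. Your filling-in of the details (boundedness of the $\XX$--circuit polytope via \Cref{MainGlobalTheorem}, the ``facet means face of any codimension'' reading) is accurate and adds nothing that the paper's argument does not already implicitly rely on.
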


\begin{lemma}
The infidelity functional $I|_a$ is extremized on the interior of codimension $1$ facets coincident with the outer walls of the Weyl alcove if and only if $a$ is a member of that facet.
\end{lemma}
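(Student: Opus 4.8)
The plan is to run the same facet‑by‑facet analysis used for the other codimensions, now restricted to those codimension‑$1$ facets whose affine span is a bounding wall of $\A_{C_2}$ — i.e.\ one of $\{b_1 = b_2\}$, $\{b_2 = b_3\}$, $\{b_3 = 0\}$, or $\{b_1 + b_2 = \frac{\pi}{2}\}$. The reflection $(b_1,b_2,b_3) \mapsto (\frac{\pi}{2} - b_1, b_2, b_3)$, under which every $\XX$--circuit polytope is invariant (by \Cref{MainGlobalTheorem}) and under which $I$ is invariant when applied simultaneously to $a$, carries $\{b_1 + b_2 = \frac{\pi}{2}\}$ onto $\{b_1 = b_2\}$ and fixes the other two, so only three walls need to be treated.

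For each wall I would parametrize its supporting plane by the two coordinates that remain free along it and compute the intrinsic gradient of the restricted infidelity functional. Writing $g = \prod_j \cos^2\frac{\delta_j}{2} + \prod_j \sin^2\frac{\delta_j}{2}$ with $\delta_j = a_j - b_j$, so that $20\,I = 16 - 16g$, the identities $\cos^2\frac{x}{2} = \frac{1+\cos x}{2}$ and $\sin^2\frac{x}{2} = \frac{1-\cos x}{2}$ give
\[
\frac{\partial g}{\partial b_k} \;=\; \tfrac12 \sin(a_k - b_k)\Bigl(\textstyle\prod_{j\neq k}\cos^2\tfrac{\delta_j}{2} \;-\; \prod_{j\neq k}\sin^2\tfrac{\delta_j}{2}\Bigr).
\]
The reduction to $a_1 \le \frac{\pi}{4}$ from the preceding Corollary, together with the universal bounds $0 \le a_2, b_2, a_3, b_3 \le \frac{\pi}{4}$ and (on each of these walls) $b_1 \le \frac{\pi}{4}$, confines every $\tfrac{\delta_j}{2}$ to $(-\tfrac{\pi}{4}, \tfrac{\pi}{4})$, whence $\cos^2\tfrac{\delta_j}{2} > \tfrac12 > \sin^2\tfrac{\delta_j}{2}$ and the bracketed auxiliary factor is strictly positive. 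On $\{b_3 = 0\}$ the two free partials are then positive multiples of $\sin(a_1-b_1)$ and $\sin(a_2-b_2)$, so vanishing of the gradient forces $b_1 = a_1$ and $b_2 = a_2$; the critical point is thus the orthogonal projection of $a$ onto the wall, which lies in the relative interior of the facet precisely when $a$ does. On $\{b_1 = b_2\}$ and $\{b_2 = b_3\}$ one free direction is a diagonal, so the corresponding directional derivative is a sum of two such terms: here I would first use the component transverse to the wall to pin one coordinate as above, and then use the sign pattern of $\sin$ and monotonicity along the diagonal to locate the zero of the remaining one‑variable equation and compare it with the facet's extent, again concluding that the critical point can only be the projection of $a$.

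The step I expect to be the main obstacle is this last one: controlling the sign and monotonicity of expressions such as $\sin(a_1-t)\cos^2\tfrac{a_2-t}{2} + \sin(a_2-t)\cos^2\tfrac{a_1-t}{2}$ along the diagonal and showing that any root lying in the facet must have $t$ equal to the matching coordinate of $a$. A secondary delicate point is the wall $\{b_3 = 0\}$, where the monodromy identification $(b_1,b_2,0) \sim (\tfrac{\pi}{2}-b_1, b_2, 0)$ glues the facet to its mirror image: one must also run the computation on the reflected sheet, where $\cos^2\tfrac{a_1-b_1}{2}$ becomes $\cos^2\tfrac{a_1 - \pi/2 + b_1}{2} = \tfrac{1 + \sin(a_1+b_1)}{2}$, and check that the same bounds keep this auxiliary factor positive and single out the half $\{b_1 \le \tfrac{\pi}{4}\}$ isolated by the first reduction lemma of this appendix. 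The remaining enumeration of codimension‑$1$ slopes within each wall, and the exclusion of the measure‑zero configurations where an auxiliary factor degenerates, should be routine.
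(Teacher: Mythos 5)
The lemma you are trying to prove is handled in the paper by a short geometric argument that is quite different from yours: the paper observes that a point $b$ on an outer alcove wall of $P$ lies in the \emph{interior} of the Weyl–closed solid $\bigcup_{w\in W} w\cdot P$, and that the optimization problem $\min_{v,w\in W} I(v\cdot a, w\cdot b)$ is equivalent to the original; hence such $b$ is governed by the codimension‑$0$ case, where an interior extremum occurs only at $a$ itself. No gradient computation is involved.

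Your proof instead runs the Lagrange‑type gradient analysis directly, and the trigonometric calculation you give is correct as far as it goes: on $\{b_3=0\}$ with $b_1,b_2$ free, the constrained critical point does sit at $(a_1,a_2,0)$, i.e.\ at the orthogonal projection. But the concluding step — ``which lies in the relative interior of the facet precisely when $a$ does'' — is a non‑sequitur, and this is where the argument breaks. The projection $(a_1,a_2,0)$ can perfectly well lie in the relative interior of the facet $P\cap\{b_3=0\}$ even when $a_3>0$, i.e.\ when $a$ is \emph{not} in that facet (for instance, take $a$ a small perturbation off the wall of a point deep inside the facet of a generous polytope). So your calculation actually exhibits a critical point of $I|_a$ restricted to the facet for generic $a\notin F$, which is exactly the situation the lemma is asserting does not contribute to the optimization. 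What is missing is the argument that this restricted critical point is \emph{not} a constrained extremum of the full problem — equivalently, that the KKT/feasibility condition in the normal direction fails: the normal derivative $\partial I/\partial b_3$ at $(a_1,a_2,0)$ is proportional to $\sin(a_3)$ times a positive factor, which is nonzero whenever $a_3>0$, so the objective strictly decreases as one moves into the interior of $P$. This is precisely what the paper's Weyl‑closure argument encodes geometrically: since the wall is a reflecting hyperplane, it is not a genuine boundary of the reachable set, and moving ``through'' it is allowed. The analysis you carry out is in fact the content of the \emph{next} lemma in the appendix — the one for codimension‑$1$ facets \emph{not} coincident with outer walls — where the conclusion really is ``the critical point is the Euclidean projection.'' For the outer‑wall lemma you would either need to adopt the paper's Weyl‑closure reduction, or supplement your gradient computation with the normal‑direction/KKT observation to discard the spurious critical point when $a\notin F$.

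Two smaller points. First, your reduction to three walls via the reflection $(b_1,b_2,b_3)\mapsto(\tfrac{\pi}{2}-b_1,b_2,b_3)$ is fine, but this is already subsumed by the Corollary you cite ($a_1\le\tfrac{\pi}{4}$), so you are doing redundant work. Second, the discussion at the end about ``gluing to the mirror sheet'' along $\{b_3=0\}$ and re‑checking the bound on $\cos^2\tfrac{a_1-\pi/2+b_1}{2}$ is in the right spirit — it is a fragment of the Weyl‑closure idea — but you apply it to the wrong wall: the identification $(b_1,b_2,0)\sim(\tfrac{\pi}{2}-b_1,b_2,0)$ is specific to the $a_3=0$ plane and is already neutralized by the Corollary, whereas the reflections that matter for your argument are those across $\{b_3=0\}$, $\{b_1=b_2\}$, and $\{b_2=b_3\}$ individually, and you never invoke them to rule out the critical point you found.
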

\begin{proof}
We employ a strategy similar to \Cref{DisorderedMainTheorem}: our choice to restriction attention the alcove $\A_{C_2}$ is artificial, and it is equivalent to optimize the function $\min_{v, w \in W} I(v \cdot a, w \cdot b)$ where $W$ denotes the group of Weyl reflections, where the domain of $I$ is suitably extended by reflection beyond $\A_{C_2}$, and where $b^w$ is constrained to reside in $\bigcup_{w \in W} w \cdot P$, the closure of $P$ under Weyl reflections.
This closure is again a (possibly non-convex, possibly disconnected) polyhedron, but now the points $b \in P$ incident on the outer alcove walls belong to the interior of a codimension $0$ facet of $\bigcup_{w \in W} w \cdot P$.
Hence, the optimization condition reduces to that of the codimension $0$ facet case.
\end{proof}

\begin{lemma}
The infidelity functional $I|_a$ is extremized on the interior of the codimension $1$ facets not coincident with the outer walls of the Weyl alcove exactly at the nearest point in Euclidean distance.
\end{lemma}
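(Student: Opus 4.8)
The plan is to reduce the infidelity functional to a polynomial in the cosines of the coordinate differences, compute its gradient, and then match its critical points to those of Euclidean distance facet by facet. Writing $\delta_j = a_j - b_j$ and $u_j = \cos\delta_j$ for $a = p$ fixed and $b$ ranging, the Cartan infidelity formula recalled above rearranges, via $\cos^2(x/2) = (1 + \cos x)/2$ and $\sin^2(x/2) = (1 - \cos x)/2$, into
\[
20 \cdot I(\CAN(a), \CAN(b)) = 12 - 4\bigl(u_1 u_2 + u_1 u_3 + u_2 u_3\bigr).
\]
Thus, setting $g := u_1 u_2 + u_1 u_3 + u_2 u_3$ (the second elementary symmetric polynomial in the $u_j$), minimizing $I|_{a = p}$ agrees with maximizing $g$, and in particular the two functionals share critical points when restricted to any facet $F$. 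Direct differentiation gives $\partial_{b_j} g = \sin(\delta_j)(u_k + u_\ell)$ for $\{j, k, \ell\} = \{1, 2, 3\}$, so that $\nabla_b g$ is parallel to a prescribed normal exactly when certain linear combinations of the $\partial_{b_j} g$ vanish.

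Next I would fix the coordinate ranges. After the reductions already made in this appendix we may take $a \in [0, \tfrac{\pi}{4}]^3$, and a point on the interior of an inner facet of an $\XX$--circuit polytope never meets an outer wall of $\A_{C_2}$, so there $b_1 < \tfrac{\pi}{2}$ and every $\delta_j$ lies strictly inside $(-\tfrac{\pi}{2}, \tfrac{\pi}{2})$. On that range $u_j = \cos\delta_j > 0$, while $\sin\delta_j$ shares the sign of $\delta_j$ and vanishes exactly when $\delta_j = 0$. The computational core is then the pair of identities
\begin{align*}
\partial_{b_j} g - \partial_{b_k} g &= u_\ell(\sin\delta_j - \sin\delta_k) + \sin(\delta_j - \delta_k), \\
\partial_{b_j} g + \partial_{b_k} g &= u_\ell(\sin\delta_j + \sin\delta_k) + \sin(\delta_j + \delta_k),
\end{align*}
where $u_\ell > 0$ and, in each line, both summands on the right carry the common sign of $\delta_j \mp \delta_k$ (using $\delta_j \pm \delta_k \in (-\pi, \pi)$). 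Consequently $\partial_{b_j} g = \partial_{b_k} g \iff \delta_j = \delta_k$, $\partial_{b_j} g = -\partial_{b_k} g \iff \delta_j = -\delta_k$, and $\partial_{b_j} g = 0 \iff \delta_j = 0$.

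Finally I would run the casework over inner facets. By \Cref{MainGlobalTheorem} every codimension-$1$ facet not lying on an outer wall of $\A_{C_2}$ has normal proportional to one of $(0, 0, 1)$ (frustrum), $(1, 1, 1)$ (strength), or $(-1, 1, 1)$ (slant); any coordinate-permuted variant is handled identically, since both $I$ and Euclidean distance are symmetric functions of $\delta$. A point $q$ interior to such a facet $F$ is a critical point of $I|_{a = p}$ iff $\nabla_b g \parallel n$. For $n = (0,0,1)$ this reads $\partial_{b_1} g = \partial_{b_2} g = 0$, i.e.\ $\delta_1 = \delta_2 = 0$; for $n = (1,1,1)$ it reads $\partial_{b_1} g = \partial_{b_2} g = \partial_{b_3} g$, i.e.\ $\delta_1 = \delta_2 = \delta_3$; for $n = (-1,1,1)$ it reads $\partial_{b_2} g = \partial_{b_3} g$ together with $\partial_{b_2} g = -\partial_{b_1} g$, i.e.\ $\delta_2 = \delta_3 = -\delta_1$. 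In every case the resulting condition says precisely that $p - q = (\delta_1, \delta_2, \delta_3)$ is parallel to $n$, i.e.\ that $q$ is the orthogonal projection of $p$ onto the affine hull of $F$ --- the Euclidean-nearest point --- and reading the equivalences backwards yields the converse. I expect the only delicate points to be the bookkeeping that keeps every cosine strictly positive over the admissible ranges, and confirming that the three normal types above genuinely exhaust the inner facets produced by \Cref{MainGlobalTheorem}; the trigonometric manipulations themselves are routine.
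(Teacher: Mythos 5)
Your approach is essentially the same as the paper's: reduce the Lagrange-multiplier condition $\nabla_b I \parallel n$ to a trigonometric identity on each of the three inner facet normals $(0,0,1)$, $(1,1,1)$, $(-1,1,1)$, and then use domain bounds to show the only solution is $\delta \parallel n$. Your packaging via the second elementary symmetric polynomial $g$ and a single sign lemma is a cleaner organization of the same argument.

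There is, however, a factor-of-two discrepancy that matters. From the paper's own $\CAN$ convention, $\tr \CAN(\delta)^\dagger = 2e^{-i\delta_3}c_{\delta_1-\delta_2} + 2e^{i\delta_3}c_{\delta_1+\delta_2}$, so $20I = 12 - 4\sum_{j<k}\cos(2\delta_j)\cos(2\delta_k)$; the $\cos^2\tfrac{\delta_j}{2}$ in the cited infidelity lemma should read $\cos^2\delta_j$, consistent with the $s_{2\delta_1}$, $c_{2\delta_3}$, etc.\ that appear in the paper's actual casework. Your $u_j = \cos\delta_j$ therefore needs to be $u_j = \cos(2\delta_j)$, and positivity then requires $|\delta_j| < \tfrac{\pi}{4}$, not $|\delta_j| < \tfrac{\pi}{2}$. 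That bound holds for $j = 2, 3$, but $\delta_1$ ranges over $(-\tfrac{\pi}{2}, \tfrac{\pi}{4}]$, so $u_1$ can be negative and the blanket assertion $u_\ell > 0$ fails precisely in the cases where $\ell = 1$, namely the $\partial_{b_2} g \pm \partial_{b_3} g$ equations in the $(1,1,1)$ and $(-1,1,1)$ normals. The fix is to dispatch the two Lagrange conditions sequentially: the $(j,k,\ell) = (1,2,3)$ equation (where $u_3 \ge 0$ always) forces $\delta_1 = \pm\delta_2$, which then pins $|\delta_1| \le \tfrac{\pi}{4}$, after which $u_1 \ge 0$ and your sign lemma applies to the remaining equation. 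You should spell this ordering out; the paper glosses over the same point, but it is a genuine obligation since without it the second factor of the trig identity is not manifestly positive.
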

\begin{proof}
Each such facet has an associated Lagrange multipliers problem, which we solve in turn.
We use the following abbreviations throughout:
\begin{align*}
    \partial_j I & := \frac{\partial I|_a}{\partial b_j}, &
    \delta_j & := a_j - b_j.
\end{align*}
The linear constraints on $a, b \in \A_{C_2}$ describe the following constraints on $\delta$:
\begin{align*}
    \delta_1 & \in \left[-\frac{\pi}{2}, \frac{\pi}{2}\right], &
    \delta_2 & \in \left[-\frac{\pi}{4}, \frac{\pi}{4}\right], &
    \delta_3 & \in \left[-\frac{\pi}{4}, \frac{\pi}{4}\right]. &
\end{align*}
Referring to \Cref{MainGlobalTheorem}, we break into cases based on the normal vector of the facet:
\begin{description}
    \item[$(0, 0, 1)$:]
	The Lagrange multiplier constraints are $\partial_1 I = 0$ and $\partial_2 I = 0$, which amount to the trigonometric conditions
	\begin{align*}
	(c_{2\delta_2} + c_{2\delta_3}) s_{2\delta_1} & = 0, &
	(c_{2\delta_1} + c_{2\delta_3}) s_{2\delta_2} & = 0.
	\end{align*}
	Taking into account the domain constraints on $\delta$, we see that the first equation's is satisfied either when $a$ and $b$ both represent the identity unitary or when $\delta_1 = 0$.
	Taking into account the deduced constraint $\delta_1 = 0$, the second clause is then satisfied only when $\delta_2 = 0$.
	Finally, $b_3$ is determined by being constrained to the frustrum plane.%
	\footnote{A special case of this was previously investigated by Cross et al.~\cite[Equation B8f]{CBSNG} when finding the best approximant using a pair of $\CX$ gates.}
    \item[$(1, 1, 1)$:]
    The Lagrange multiplier constraints are $\partial_1 I = \partial_2 I$ and $\partial_2 I = \partial_3 I$, which amount to the trigonometric conditions
    \begin{align*}
        (c_{\delta_1 - \delta_2} + c_{\delta_1 + \delta_2} c_{2 \delta_3}) s_{\delta_1 - \delta_2} & = 0, \\
        (c_{\delta_2 - \delta_3} + c_{\delta_2 + \delta_3} c_{2 \delta_1}) s_{\delta_2 - \delta_3} & = 0.
    \end{align*}
    These equalities are analyzed similarly to as in the previous case.
    The first equation is satisfied either when both $a$ and $b$ represent the identity unitary or when $\delta_1 = \delta_2$, and the second equality is similarly dispatched to give $\delta_2 = \delta_3$.
    \item[$(-1, 1, 1)$:]
    The Lagrange multiplier constraints are $-\partial_1 I = \partial_2 I$ and $-\partial_1 = \partial_3 I$, which amount to the trigonometric conditions
    \begin{align*}
        (c_{\delta_1 + \delta_2} + c_{\delta_1 - \delta_2} c_{2 \delta_3}) s_{\delta_1 + \delta_2} & = 0, \\
        (c_{\delta_1 + \delta_3} + c_{\delta_1 - \delta_3} c_{2 \delta_2}) s_{\delta_1 + \delta_3} & = 0.
    \end{align*}
    Reasoning identically about the domains, we conclude that $-\delta_1 = \delta_2$ and $-\delta_1 = \delta_3$.
\end{description}
In each case, the critical points are seen to lie at the Euclidean projections onto the relevant planes.
\end{proof}

\begin{lemma}\label{InnerWallLineTheorem}
The infidelity functional $I|_a$ is extremized on the interior of the codimension $2$ facets \emph{not coincident} with the outer walls of the Weyl alcove exactly at the nearest point in Euclidean distance.
\end{lemma}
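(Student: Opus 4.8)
The plan is to run the same argument as in the preceding Lemma, reusing its codimension-$1$ computations essentially verbatim. First I would pin down the shape of the facets in play: since $\A_{C_2}$ is three-dimensional, a codimension-$2$ facet of an $\XX$--circuit polytope is an edge, and it is coincident with an outer wall of the Weyl alcove exactly when one of its two supporting hyperplanes is an outer wall. So an edge not coincident with an outer wall is the transverse intersection $\ell = H \cap H'$ of two of the \emph{interior} bounding hyperplanes of \Cref{MainGlobalTheorem}, whose normals are among $(1,1,1)$ (strength), $(-1,1,1)$ (slant), and $(0,0,1)$ (frustrum). By the Corollary above, together with the reflection $(a_1,a_2,a_3)\mapsto(\tfrac\pi2 - a_1, a_2, a_3)$ that exchanges the two convex bodies of \Cref{MainGlobalTheorem} and preserves $I$ when applied to $a$ and $b$ simultaneously, I would reduce to a single body, leaving three cases, one per unordered pair of those normals.

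Next I would set up the two optimality conditions side by side. Writing $N = \operatorname{span}\{n, n'\}$ for the plane spanned by the supporting normals, a point $b$ in the relative interior of $\ell$ is a critical point of $I|_a$ restricted to $\ell$ iff $\nabla_b I|_a \in N$, and it is the Euclidean-nearest point of $\ell$ to $a$ iff $b - a \in N$. With $\delta_j = a_j - b_j$ as before, the preceding Lemma already supplies $\partial_j I \propto s_{2\delta_j}(c_{2\delta_k} + c_{2\delta_\ell})$ up to a common positive factor, while $b - a = (-\delta_1, -\delta_2, -\delta_3)$; so all that remains is to check, for each of the three $N$'s, that ``$\nabla_b I|_a \in N$'' and ``$b - a \in N$'' cut out the same point of $\ell$.

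I would then dispatch the three cases. For $N = \operatorname{span}\{(1,1,1),(-1,1,1)\} = \{(x,y,y)\}$, membership in $N$ is the single equation ``second coordinate equals third coordinate'': for $b - a$ this reads $\delta_2 = \delta_3$, and for $\nabla_b I|_a$ it reads $\partial_2 I = \partial_3 I$, which is one of the Lagrange equations solved in the preceding Lemma and whose resolution on the facet interior forces $\delta_2 = \delta_3$. Identically, $N = \operatorname{span}\{(1,1,1),(0,0,1)\} = \{(x,x,y)\}$ reduces to $\partial_1 I = \partial_2 I \iff \delta_1 = \delta_2$, and $N = \operatorname{span}\{(-1,1,1),(0,0,1)\} = \{(-x,x,y)\}$ to $\partial_1 I = -\partial_2 I \iff \delta_1 = -\delta_2$. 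In each case the infidelity-critical point and the Euclidean-nearest point are the unique solution on $\ell$ of one and the same linear equation in $\delta$, hence they coincide. The degenerate branch of each factored trigonometric identity---where the bracketed factor vanishes rather than the $\sin$---is excluded exactly as before: the domain bounds $\delta_1 \in [-\tfrac\pi2,\tfrac\pi2]$, $\delta_2, \delta_3 \in [-\tfrac\pi4,\tfrac\pi4]$ confine it to the case that $a$ and $b$ both represent $\I$, which cannot occur on the relative interior of an edge disjoint from the outer walls.

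The main obstacle is not the trigonometry---that is inherited verbatim from the codimension-$1$ case---but the bookkeeping in the first step: confirming that ``not coincident with an outer wall'' genuinely restricts the supporting hyperplanes to the three interior normals (in particular, that the frustrum plane $a_3 = \mathrm{const}$ is generically distinct from the outer wall $a_3 = 0$), that the two convex bodies of \Cref{MainGlobalTheorem} contribute nothing new after the reflection reduction, and that in each of the three cases membership in the $2$-dimensional space $N$ is indeed captured by the single derivative identity named. Each of these is routine given the explicit inequality description in \Cref{MainGlobalTheorem}.
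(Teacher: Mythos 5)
Your conceptual reframing is elegant: you observe that on a codimension-$2$ edge $\ell$ with tangent $v$, criticality of $I|_a$ is $\nabla_b I \in N := v^\perp$ while Euclidean-nearness is $b - a \in N$, and that each of the three relevant $N$'s is the solution set of a single coordinate equation. That bookkeeping is correct and matches the three cases in the paper, which parametrizes each edge explicitly and checks $\nabla I \cdot v = 0$.

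The gap is in the step where you appeal to the preceding lemma to conclude that the single derivative identity (say $\partial_2 I = \partial_3 I$) forces the corresponding linear relation ($\delta_2 = \delta_3$). In the codimension-$1$ case the proof of that implication is \emph{not} self-contained: the factored form is $(c_{\delta_2 - \delta_3} + c_{\delta_2+\delta_3}\,c_{2\delta_1})\,s_{\delta_2-\delta_3} = 0$, and the bracketed factor is only shown nonvanishing \emph{after} the companion Lagrange condition $\partial_1 I = \partial_2 I$ has already been resolved to $\delta_1 = \delta_2 \in [-\tfrac{\pi}{4},\tfrac{\pi}{4}]$, which pins $c_{2\delta_1} \ge 0$. (The $(0,0,1)$ case makes this explicit: ``taking into account the deduced constraint $\delta_1 = 0$, the second clause is then satisfied only when $\delta_2 = 0$.'') On a codimension-$2$ edge only \emph{one} derivative identity is in play, so that helper constraint is unavailable, and the box bounds $\delta_1 \in [-\tfrac{\pi}{2},\tfrac{\pi}{2}]$, $\delta_2, \delta_3 \in [-\tfrac{\pi}{4},\tfrac{\pi}{4}]$ by themselves do \emph{not} confine the degenerate branch to the identity case---e.g.\ $c_{\delta_2-\delta_3} + c_{\delta_2+\delta_3} c_{2\delta_1}$ can vanish whenever $|\delta_1| > \tfrac{\pi}{4}$ and $\delta_2\delta_3 \le 0$. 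What actually excludes those solutions is the location of the edge inside the circuit polytope: the paper parametrizes $\ell(t) = vt + b$ from a vertex $b$ and derives interval bounds on $t$ (such as $\tfrac{1}{2}(b_3 - b_2) \le t \le \min\{b_1 - b_2, b_3\}$), then verifies the cosine sum is positive on each such interval. Your proposal omits this edge-specific domain analysis, so as written it does not establish the needed nonvanishing of the bracketed factor.
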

\begin{proof}
Again, we are tasked with solving a family of constrained optimization problems.
This time, each nondegenerate pair of inner walls intersect at a line with tangent vector $v$, and we are looking to solve along the line for the condition $\nabla I|_a \cdot v = 0$.
To parameterize the line, we select a vertex $b \in \A_{C_2}$ on it and set \[\ell(t) = v \cdot t + b.\]
We break $v$ (i.e., the choice of plane pair) into cases.
\begin{description}
    \item[$(-1, -1, 0)$:]
    This tangent vector $v$ arises from the intersection of planes with normal vectors $(0, 0, 1)$ and $(-1, 1, 1)$.
    Expanding $(\nabla I|_{a_1, a_2, a_3} \cdot v)$ yields \[(c_{\delta_1 - \delta_2 + 2 \delta_3} + c_{\delta_1 - \delta_2 - 2\delta_3} + 2 c_{\delta_1 + \delta_2 + 2t}) s_{\delta_1 + \delta_2 + 2t} = 0,\] where $t$ is constrained to \[b_1 - \frac{\pi}{4} \le t \le b_2 - b_3.\]
    \item[$(1, -1, 0)$:]
    This tangent vector $v$ arises from the intersection of planes with normal vectors $(0, 0, 1)$ and $(1, 1, 1)$.
    Expanding $(\nabla I|_{a_1, a_2, a_3} \cdot v)$ yields \[(c_{\delta_1 + \delta_2 + 2\delta_3} + c_{\delta_1 + \delta_2 - 2\delta_3} + 2 c_{\delta_1 - \delta_2 - 2t}) s_{\delta_1 - \delta_2 - 2t} = 0,\] where $t$ is constrained to \[\frac{1}{2}(b_2 - b_1) \le t \le \min\left\{b_2 - b_3, \frac{\pi}{4} - b_1\right\}.\]
    \item[$(0, 1, -1)$:]
    This tangent vector $v$ arises from the intersection of planes of normal vectors $(1, 1, 1)$ and $(-1, 1, 1)$.
    Expanding $(\nabla I|_{a_1, a_2, a_3} \cdot v)$ yields \[(c_{2\delta_1 + \delta_2 + \delta_3} + c_{2\delta_1 - \delta_2 - \delta_3} + 2 c_{\delta_2 - \delta_3 - 2t}) s_{\delta_2 - \delta_3 - 2t} = 0,\] where $t$ is constrained to \[\frac{1}{2}(b_3 - b_2) \le t \le \min\left\{ b_1 - b_2, b_3 \right\}.\]
\end{description}
In each case, the first clause is not satisfiable on the indicated interval, and the second clause contributes at most only the Euclidean critical point.
\end{proof}

\begin{lemma}
The infidelity functional $I|_a$ is extremized on the interior of the codimension $2$ facets \emph{coincident} with the outer walls of the Weyl alcove exactly at the nearest point in Euclidean distance.
\end{lemma}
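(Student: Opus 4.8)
The plan is to mirror the treatment of \Cref{InnerWallLineTheorem} and of the preceding codimension-$1$ outer-wall lemma. A codimension-$2$ facet of an $\XX$-circuit polytope $P$ that meets an outer wall is the intersection of $P$ with one or two of the alcove walls $a_1 = a_2$, $a_2 = a_3$, $a_3 = 0$, $a_1 + a_2 = \tfrac{\pi}{2}$, possibly together with a strength, slant, or frustrum facet. Since these four alcove walls are exactly the mirrors bounding the fundamental Weyl alcove, I would first pass to $\min_{v, w \in W} I(v \cdot a, w \cdot b)$ with $b$ ranging over $\bigcup_{w \in W} w \cdot P$, as in the codimension-$1$ outer-wall lemma; the strength and frustrum facets are invariant under every coordinate permutation and the slant facet under sign flips of non-extremal coordinates, so after this unfolding many such edges come to lie in the relative interior of a codimension-$0$ region or of an inner-wall codimension-$1$ facet of $\bigcup_w w\cdot P$, at which point the conclusion is inherited from the already-established codimension-$0$ and codimension-$1$-inner-wall cases (an extremum of $I|_a$ over $P$ landing in the interior of a higher-dimensional face is an extremum of the restriction to that face, which by those lemmas is the Euclidean projection).

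For the edges that the unfolding does not absorb in this way --- those cut out by an outer mirror together with an inner facet whose defining plane the mirror fails to preserve, so that the union develops a ``valley fold'' along the edge --- I would fall back to the direct argument of \Cref{InnerWallLineTheorem}. One tabulates the possible tangent directions $v = n_{\mathrm{out}} \times n_{\mathrm{inner}}$ (a short list, cut down further by the Weyl action, which identifies the four outer walls), parametrizes the edge as $\ell(t) = v t + b$ with $b$ a vertex on it, and expands the single Lagrange condition $\nabla I|_a \cdot v = 0$ using the closed forms for $\partial_j I$ recorded in the proof of \Cref{InnerWallLineTheorem}. Collecting terms by the sum-to-product identities factors this, exactly as in the three displayed cases of \Cref{InnerWallLineTheorem}, as a product of a bounded ``trigonometric'' clause with a factor $s_{(\ast)}$ whose argument is affine in $\delta$ and $t$; the vanishing of the sine factor is, after unwinding the parametrization, the statement that $\ell(t)$ is the orthogonal projection of $a$ onto the line, and so contributes exactly the Euclidean critical point. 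It then remains to rule out the trigonometric clause on the interval to which $t$ is confined by membership in $P$; this is the same interval bookkeeping as before, dispatched with the bounds $\delta_1 \in [-\tfrac{\pi}{2}, \tfrac{\pi}{2}]$, $\delta_2, \delta_3 \in [-\tfrac{\pi}{4}, \tfrac{\pi}{4}]$ and the facet inequalities of \Cref{MainGlobalTheorem}.

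The hard part is not any single computation --- each is the same mechanism as \Cref{InnerWallLineTheorem} --- but the organization: one must be sure the enumeration of outer-wall edges, and of their tangent directions up to the Weyl action, is genuinely exhaustive, and that every residual interval inequality really does kill the trigonometric clause. Accordingly I would structure the proof by first listing all tangent vectors $n_{\mathrm{out}} \times n_{\mathrm{other}}$ and quotienting by the symmetry of \Cref{DisorderedMainTheorem} before doing any trigonometry, only then descending into the handful of surviving Lagrange problems.
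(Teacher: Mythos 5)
The first half of your plan is sound and tracks the paper: unfolding across the alcove mirrors does absorb the codimension-$2$ edges incident on a mirror and the frustrum plane (whose normal the $a_3 = 0$ mirror shares), and the tangent $(1,0,0)$ edge cut out by the frustrum plane together with $a_2 = a_3$ does indeed yield a Lagrange expression that factors as a $t$-free clause times $s_{2(\delta_1 - t)}$, whose zero is the Euclidean projection. The gap is in your second step: for the remaining ``valley fold'' edges --- those where a strength or slant facet meets an alcove mirror that does not preserve its defining plane, with tangent directions $(-2,1,1)$, $(1,1,-2)$, $(-2,-1,-1)$ --- the Lagrange condition $\nabla I|_a \cdot v = 0$ does \emph{not} admit the factorization you are counting on. The sum-to-product collapse that works for tangents with entries in $\{0,\pm 1\}$ (each $\partial_j I$ being proportional to $(c_{2\delta_k} + c_{2\delta_\ell})s_{2\delta_j}$, so that $\pm\partial_j I \pm \partial_k I$ telescopes into a single sine factor) breaks down once a coefficient of magnitude $2$ appears: for $v = (-2,1,1)$ the combination $-2\partial_1 I + \partial_2 I + \partial_3 I$ reduces, on the parametrized line, to an expression of the form $2c_s c_c s_u - c_c c_u s_s - s_s c_s$ (with $s = \delta_2 + \delta_3$, $u = 2\delta_1$, $c = \delta_2 - \delta_3 = a_2 - a_3$ constant), which has no sine factor vanishing at the Euclidean foot. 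Concretely, evaluating $\nabla I|_a \cdot v$ at the Euclidean critical point $2\delta_1 = \delta_2 + \delta_3$ gives a quantity proportional to $s_{2(\delta_2+\delta_3)} \cdot s^2_{(a_2 - a_3)/2}$, which is nonzero whenever $a$ lies off the outer wall $a_2 = a_3$. So the Euclidean projection onto these edges is generically \emph{not} a critical point of $I|_a$, and the mechanism of \Cref{InnerWallLineTheorem} simply does not apply.

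The paper circumvents this with a qualitatively different argument for exactly these three edges: rather than solving $\nabla I|_a \cdot v = 0$, it computes the directional derivative of $I|_a$ along the inward edge normal within the adjacent inner codimension-$1$ facet (e.g.\ $(0,1,-1)$ for $v = (-2,1,1)$), which \emph{does} factor, with a $t$-free factor $s_{a_2 - a_3}$. Establishing its sign shows that $I|_a$ decreases monotonically off the edge into the facet interior, so that interior extrema never land on these edges when $a$ is off the outer wall; continuity then extends the conclusion to extrema in the limiting case, and the lemma holds vacuously there. This is the step your plan is missing, and it cannot be replaced by ``the same mechanism as \Cref{InnerWallLineTheorem}'' because the factorization that mechanism rests on fails for precisely these tangent directions.
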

\begin{proof}
As in \Cref{InnerWallLineTheorem}, we intend to split over the slopes of the plane-plane intersections.
Two of these cases are familiar: since the outer alcove wall $b_3 \ge 0$ shares a normal with the frustrum inequality of \Cref{MainGlobalTheorem}, the tangent vectors $(-1, -1, 0)$ and $(1, -1, 0)$ both reappear, and we have already dispatched them in the proof of \Cref{InnerWallLineTheorem}.
The frustrum inequality contributes one codimension $2$ facet not covered by the above: its intersection with the wall $a_2 \ge a_3$ yields a line with tangent vector $(1, 0, 0)$, and the associated optimization problem is \[(c_{2 \delta_2} + c_{2 \delta_3}) s_{2(\delta_1 - t)} = 0.\]
The sine factor contributes the Euclidean critical point, and the cosine factor is independent of $t$.

The remaining cases correspond to ``inner creases'' in the Weyl-closed solid $\cup_{w \in W} w \cdot P$, and they are treated quite differently.
In each case, the strategy is to show that the facet is irrelevant (i.e., has no critical points) unless the outer alcove inequality is tight for the point $a$, then to use that tightness to simplify the expression further.
Our strategy for showing irrelevance is to show that, when $a$ is not a member of an outer facet, $\nabla I|_a$ has a nonnegative inner product with the inward-facing normal of the codimension $2$ facet considered as part of the boundary of the inner codimension $1$ facet.
Taking this as given, we would learn that the extremum then would always lie on the codimension $1$ facet, so that we could avoid considering the codimension $2$ facet.
In fact, this strategy gives us a bit more: even \emph{without} the assumption that $a$ lies off of the outer wall, continuity would show that this conclusion still holds for extrema, since the assumption is only violated at limit points of open regions.%
\footnote{%
    Importantly, we are \emph{not} arguing about critical points but about extrema.
    Critical points can manifest on a boundary via a sequence of points on the bulk which themselves are merely \emph{approximately} critical points, without exactly being critical points.
    However, any such critical point cannot yield a more extreme value than the value achieved by the function on a sequence of values in the bulk which are extrema for the functional constrained to planes parallel to the outer facet.
}
Thus, we can avoid investigating even the aforementioned simplified expressions, leaving open only the task of exhibiting a positive inner product with the inward-facing normal.
\begin{description}
    \item[$(-2, 1, 1)$:]
    This tangent vector $v$ arises from the intersection of the inner wall with normal $n_i = (1, 1, 1)$ and outer wall with normal $n_o = (0, 1, -1)$.
    Assuming $a_2 > a_3$, we would like to show that the following quantity is positive: \[\nabla I|_a \cdot n_i = (c_{a_2 - a_3} + c_{a_2 + a_3 - 2t} c_{2(a_1 - b_1 + 2t)}) s_{a_2 - a_3},\] where $(b_1, 0, 0)$ lies on the line and $t$ satisfies $0 \le t \le \frac{1}{3} b_1$.
    \item[$(1, 1, -2)$:]
    This tangent vector $v$ arises from the intersection of the inner wall with normal $n_i = (1, 1, 1)$ and outer wall with normal $n_o = (1, -1, 0)$.
    Assuming $a_1 > a_2$, we would like to show that the following quantity is positive: \[\nabla I|_a \cdot n_i = (c_{a_1 - a_2} + c_{2(a_3 + 2t)} c_{a_1 + a_2 - 2 b_1 - 2t}) s_{a_1 - a_2},\] where $(b_1, b_1, 0)$ lies on the line and $t$ satisfies $-\frac{1}{3} b_1 \le t \le 0$.
    \item[$(-2, -1, -1)$:]
    This tangent vector $v$ arises from the intersection of the inner wall with normal $n_i = (-1, 1, 1)$ and outer wall with normal $n_o = (0, 1, -1)$.
    Assuming $a_2 > a_3$, we would like to show that the following quantity is positive \[\nabla I|_a \cdot n_i =\]\[(c_{a_2 - a_3} + c_{a_2 + a_3 - 2 b_1 + 2t} c_{2(a_1 - b_1 + 2t)}) s_{a_2 - a_3},\] where $(b_1, 0, 0)$ lies on the line and $t$ satisfies $\frac{1}{2} b_1 - \frac{\pi}{8} \le t \le 0$.
    %
\end{description}
In each case, the domain restrictions cause the arguments to sine and cosine to lie in the positive range.
\end{proof}

\section{Inclusion-exclusion and incidence degeneracy}\label{PIESection}

\begin{figure}
    \centering
    \begin{tikzpicture}
    \begin{axis}[
    xlabel={actual convex volume calculations},
    ylabel={naive convex volume calculations},
    ymode=log,
    ]
    \addplot[only marks] table {data/inclusion_exclusion.dat};
    \end{axis}
    \end{tikzpicture}
    \caption{%
    Volume computations made during an example gate set exploration exercise.
    The vertical coordinate shows the number of convex volume computations required with a naive application of the inclusion-exclusion formula, displayed on a logarithmic scale.
    The horizontal coordinate shows the number of convex volume computations actually performed when using the method described in \Cref{PIESection}.
    Points near the top-left indicate ``false complexity'' in the convex polytope arrangement, and points near the bottom-right indicate ``true complexity''.
    }
    \label{PIERuntimeScatterplot}
\end{figure}

In uncovering our main results, it was invaluable to be able to calculate the volume of a nonconvex polytope.
Not only did volume calculations play an outsized role in \Cref{GatesetOptSection}, they also underlie primitive operations.
For instance, while containment of a polytope $P$ within a \emph{convex} polytope $Q$ can be checked on vertices, this is not true of two generic polytopes; instead, assuming that $P$ is of constant dimension, $P \subseteq Q$ if and only if $\vol(P) = \vol(P \cap Q)$.
For this reason, we found it imperative to have a robust and efficient method for volume calculation.

The process of volume calculation cleaves into two parts: reducing to the convex case, and computing the volume of convex components.
Both steps admit several approaches: for instance, the former can be accomplished by (joint) triangulation, and the latter can be accomplished by determinant methods.
However, it is difficult to come by implementations of these techniques which are open-source, permissively licensed, accurate / exact, and which operate in high dimension.%
\footnote{%
See \cite{AndersonHiller} for a notable exception.
}
In our setting, we can often get away with the following: for the second step, use the (somewhat computationally expensive) ability of a computer algebra system, such as \lrs, to calculate the volume of a single convex polytope; and for the first step, use a variant of inclusion-exclusion.

The naive application of inclusion-exclusion is described by
\begin{align*}
\vol\left( \bigcup_{j \in J} P_j \right) & = -\sum_{I \subseteq J} (-1)^{|I|} \vol\left( \bigcap_{i \in I} P_i \right) \\
& =: -\sum_{I \subseteq J} (-1)^{|I|} \vol\left( P_I \right).
\end{align*}
The terms on the right-hand side are all volumes of convex bodies, hence are individually approachable, but there are $2^{|J|}$ such summands.
These summands can be culled in two ways:

\begin{enumerate}
    \item
    Terms with vanishing volume are \textit{downward-closed}:
    If $\vol P_I = 0$, then $\vol P_{I' \cup I} = 0$ for any $I'$.
    \item
    Containment is \textit{downward-closed}:
    If $\vol P_I = \vol P_{j \cup I}$, then $\vol P_{I' \cup I} = \vol P_{I' \cup j \cup I}$ for any $I'$.
    For $j \not\in I \cup I'$, these pairs of values appear with opposite sign in the larger sum and cancel each other out.
\end{enumerate}

\noindent
It is simple to cull summands with the first observation: whenever we encounter a summand with vanishing volume, we can skip all of its descendants.
The second observation is trickier: after encountering two pairs $(j_1, I_1)$ and $(j_2, I_2)$ which fit the hypothesis, it is possible to double-count a term as belonging to two canceling pairs.

The following procedure accounts for this wrinkle.
We will maintain two ``skip lists'' of indices to ignore:
\begin{enumerate}
    \item
    A skip of Type 1 corresponds to an intersection which vanishes exactly, and it is recorded by a single bitmask of the entries which populate $I$.
    \item
    A skip of Type 2 corresponds to an intersection which cancels with one of its immediate descendants, and it is recorded by a bitmask of the entries which populate $I$ as well as the index $j$ of the descendant (which does not belong to $I$).
\end{enumerate}
We traverse the possible depths of intersections, and at each depth we traverse the possible intersections at that depth.
For each intersection, if it match either skip list, we ignore it and continue to the next intersection at this depth.
Otherwise, we compute the volume of this intersection.
If the volume vanishes, we add this index to the Type 1 skip list, then continue as if we have done no work at this step.
If the volume is equal to one of our immediate predecessors, we add to the Type 2 skip list its index and the extra intersection factor $j$ which witnesses us as its child, then continue as if we have done no work at this step.
Otherwise, we add the nonzero contribution to the running alternating sum with the appropriate sign.
When we exhaust the possible intersections at this depth, if we have performed no work, we terminate the iteration altogether; otherwise, we proceed to the next depth.

Now, we double back to reintroduce the summands which we previously double-counted, which we formulate in a way to also avoid double-counting the double-countings.
Traversing the Type 2 skip list in the order in which it was created, let us consider the $t$\th mask and toggle $(I_t, j_t)$, as well as some intermediate $s$\th mask and toggle $(I_s, j_s)$ with $s < t$ and with $j_t \in I_s$.
Double-counting occurs for this pair at an intersection $I$ when the following are met:
\begin{enumerate}
    \item
    The $t$\th mask matches $I_t \le I$.
    \item
    The $t$\th toggle is disabled: $j_t \not\in I$.
    \item
    The $s$\th mask matches after the toggle is enabled: $I_s \le I \cup \{j_t\}$.
    \item
    For all earlier $s' < s$, the $s'$\th mask does not include the $t$\th toggle and additionally does not match $I$.
    \item
    For all later $s < t' < t$, the $t'$\th mask does not match the toggle-on form $I \cup \{j_t\}$.
\end{enumerate}
Whenever these constraints are met, we reintroduce the summand at $I$ to the running alternating sum.
After iterating over all possible values of $s$ and $t$, the running sum is the true alternating sum.

For any $s < t$, the constraints on $I$ described above are quite strong (and often even contradictory), so that iterating over the possible ways to satisfy these constraints, rather than iterating over $I$ and checking satisfaction, frequently results in loops with few to no iterations.
In one instance ``in the wild'', this strategy reduced a calculation from $2^{14} - 1 \approx 16,000$ convex volume computations to a mere $27$ volume computations.


\onecolumn
\clearpage

\begin{figure}[p]
    \centering
    \begin{align*}
    \begin{array}{c}
    \text{$b$--coord's unreflected,} \\
    \text{$a_f = b_3$,} \\
    \text{slant inequality on $a_h$}
    \end{array} %
    & : \left\{ %
    \begin{array}{rcl}
    \alpha_+ + \beta & \ge & b_1 + b_2 + b_3, \\ 
    \alpha_+ - 2 \alpha' + \beta & \ge & -b_1 + b_2 + b_3, \\ 
    \alpha_+ - \beta & \ge & -b_1 + b_2 + b_3, \\ 
    \alpha_+ - \alpha' & \ge & b_3, \\ 
    \alpha_+ - \alpha' - \alpha'' + \beta & \ge & b_2.
    \end{array} \right. \\
    \begin{array}{c}
    \text{$b$--coord's reflected,} \\
    \text{$a_f = b_3$,} \\
    \text{strength inequality on $a_h$}
    \end{array}
    & : \left\{
    \begin{array}{rcl}
    \frac{\pi}{2} + \alpha_+ - 2 \alpha' + \beta & \ge & b_1 + b_2 + b_3, \\
    -\frac{\pi}{2} + \alpha_+ + \beta & \ge & -b_1 + b_2 + b_3, \\
    \alpha_+ - \beta & \ge & b_1 + b_2 + b_3, \\
    \alpha_+ - \alpha' & \ge & b_3, \\
    \alpha_+ - \alpha' - \alpha'' + \beta & \ge & b_2.
    \end{array} \right. \\
    \begin{array}{c}
    \text{$b$--coord's unreflected,} \\
    \text{$a_f = b_1$,} \\
    \text{slant inequality on $a_f$}
    \end{array}
    & : \left\{
    \begin{array}{rcl}
    \alpha_+ + -2 \alpha' + \beta & \ge & -b_1 + b_2 + b_3, \\
    \alpha_+ + \beta & \ge & b_1 + b_2 + b_3, \\
    \frac{\pi}{2} - \beta & \ge & b_1 - b_2 + b_3, \\
    \alpha_+ + -2 \alpha' - \beta & \ge & -b_1 - b_2 + b_3, \\
    \alpha_+ - \beta & \ge & b_1 - b_2 + b_3, \\
    \alpha_+ - \alpha' - \alpha'' + \beta & \ge & b_3.
    \end{array} \right. \\
    \begin{array}{c}
    \text{$b$--coord's reflected,} \\
    \text{$a_f = b_1$,} \\
    \text{strength inequality on $a_f$}
    \end{array}
    & : \left\{
    \begin{array}{rcl}
    -\frac{\pi}{2} + \alpha_+ + \beta & \ge & -b_1 + b_2 + b_3, \\
    \frac{\pi}{2} + \alpha_+ + -2 \alpha' + \beta & \ge & b_1 + b_2 + b_3, \\
    \frac{\pi}{2} - \beta & \ge & b_1 - b_2 + b_3, \\
    -\frac{\pi}{2} + \alpha_+ - \beta & \ge & -b_1 - b_2 + b_3, \\
    \frac{\pi}{2} + \alpha_+ + -2 \alpha' - \beta & \ge & b_1 - b_2 + b_3, \\
    \alpha_+ - \alpha' - \alpha'' + \beta & \ge & b_3.
    \end{array} \right.
    \end{align*}
    \caption{%
    Four tables of inequalities describing the four regions of $b$--coordinates from \Cref{RelevantConvexSummands}.
    See also \Cref{LiftedInequalityTables}.
    }
    \label{ProjectedInequalityTables}
\end{figure}

\begin{figure}[p]
    \centering
    \begin{align*}
    \begin{array}{c}
    \text{$b$--coord's unreflected,} \\
    \text{$a_f = b_3$}, \\
    \text{slant inequality on $a_h$}
    \end{array}
    & : \left\{
    \begin{array}{rcl}
    - b_3 + \alpha_+ & \ge & a_h + a_\ell, \\
    - b_3 + \alpha_+ - 2 \alpha' & \ge & -a_h + a_\ell, \\
    \alpha_+ - \alpha' - \alpha'' & \ge & a_\ell, \\
    b_1 + b_2 + \beta & \ge & a_h + a_\ell, \\
    \pi - b_1 - b_2 - \beta & \ge & a_h + a_\ell, \\
    - b_1 - b_2 + \beta & \ge & -a_h - a_\ell, \\
    b_1 - b_2 + \beta & \ge & a_h - a_\ell, \\
    - b_1 + b_2 + \beta & \ge & -a_h + a_\ell, \\
    b_1 - b_2 - \beta & \ge & -a_h + a_\ell.
    \end{array} \right. \\
    \begin{array}{c}
    \text{$b$--coord's reflected,} \\
    \text{$a_f = b_3$}, \\
    \text{strength inequality on $a_h$}
    \end{array}
    & : \left\{
    \begin{array}{rcl}
    \frac{\pi}{2} - b_3 + \alpha_+ - 2 \alpha' & \ge & a_h + a_\ell, \\
    -\frac{\pi}{2} - b_3 + \alpha_+ & \ge & -a_h + a_\ell, \\
    \alpha_+ - \alpha' - \alpha'' & \ge & a_\ell, \\
    b_1 + b_2 + \beta & \ge & a_h + a_\ell, \\
    \pi - b_1 - b_2 - \beta & \ge & a_h + a_\ell, \\
    - b_1 - b_2 + \beta & \ge & -a_h - a_\ell, \\
    b_1 - b_2 + \beta & \ge & a_h - a_\ell, \\
    - b_1 + b_2 + \beta & \ge & -a_h + a_\ell, \\
    b_1 - b_2 - \beta & \ge & -a_h + a_\ell.
    \end{array} \right. \\
    \begin{array}{c}
    \text{$b$--coord's unreflected,} \\
    \text{$a_f = b_1$}, \\
    \text{slant inequality on $a_f$}
    \end{array}
    & : \left\{
    \begin{array}{rcl}
    - b_1 + \alpha_+ & \ge & a_h + a_\ell, \\
    b_1 + \alpha_+ + -2 \alpha' & \ge & a_h + a_\ell, \\
    \alpha_+ - \alpha' - \alpha'' & \ge & a_\ell, \\
    b_2 + b_3 + \beta & \ge & a_h + a_\ell, \\
    \pi - b_2 - b_3 - \beta & \ge & a_h + a_\ell, \\
    - b_2 - b_3 + \beta & \ge & -a_h - a_\ell, \\
    b_2 - b_3 + \beta & \ge & a_h - a_\ell, \\
    - b_2 + b_3 + \beta & \ge & -a_h + a_\ell, \\
    b_2 - b_3 - \beta & \ge & -a_h + a_\ell.
    \end{array} \right. \\
    \begin{array}{c}
    \text{$b$--coord's reflected,} \\
    \text{$a_f = b_1$}, \\
    \text{strength inequality on $a_f$}
    \end{array}
    & : \left\{
    \begin{array}{rcl}
    \frac{\pi}{2} - b_1 + \alpha_+ + -2 \alpha' & \ge & a_h + a_\ell, \\
    -\frac{\pi}{2} + b_1 + \alpha_+ & \ge & a_h + a_\ell, \\
    \alpha_+ - \alpha' - \alpha'' & \ge & a_\ell, \\
    b_2 + b_3 + \beta & \ge & a_h + a_\ell, \\
    \pi - b_2 - b_3 - \beta & \ge & a_h + a_\ell, \\
    - b_2 - b_3 + \beta & \ge & -a_h - a_\ell, \\
    b_2 - b_3 + \beta & \ge & a_h - a_\ell, \\
    - b_2 + b_3 + \beta & \ge & -a_h + a_\ell, \\
    b_2 - b_3 - \beta & \ge & -a_h + a_\ell.
    \end{array} \right.
    \end{align*}
    \caption{%
    Four tables of inequalities describing the relationship between the $a$--coordinates and the $b$--coordinates in the four regions of \Cref{RelevantConvexSummands}.
    See also \Cref{ProjectedInequalityTables}.
    }
    \label{LiftedInequalityTables}
\end{figure}

\end{document}